\newtheorem{theorem}{Theorem}
\newtheorem{lemma}{Lemma}
\newtheorem{corollary}{Corollary}
\renewcommand{\headrulewidth}{0pt}
\renewcommand*\contentsname{Table of Contents}
\crefname{lemma}{\textbf{Lemma}}{\textbf{Lemmas}}
\crefname{theorem}{\textbf{Theorem}}{\textbf{Theorems}}
\crefname{corollary}{\textbf{Corollary}}{\textbf{Corollaries}}
\crefname{proposition}{\textbf{Proposition}}{\textbf{Propositions}}
\crefname{section}{Section}{Sections}
\crefname{figure}{Fig.}{Figs.}
\newcommand*{\rom}[1]{\expandafter\@slowromancap\romannumeral #1@}
\begin{document}

\pagenumbering{roman}

\newpage
\begin{titlepage}
    \vspace{\fill} 
    \centering
    \vspace*{1pt}
    \huge{\textbf{Resource Allocation for RIS Assisted CoMP-NOMA Networks using Reinforcement Learning}} \\  [0.4cm]
    \LARGE{Final Year Design Project} \\ [0.7cm]
    \begin{figure}[ht!]
        \centering
        \begin{minipage}{0.45\textwidth}
            \centering
            \includegraphics[width=\linewidth]{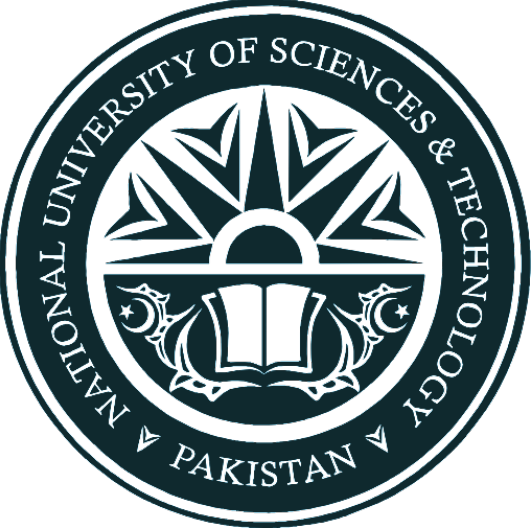}
        \end{minipage}
        \hspace*{2.5em}
        \begin{minipage}{0.37\textwidth}
            \centering
            \includegraphics[width=\linewidth]{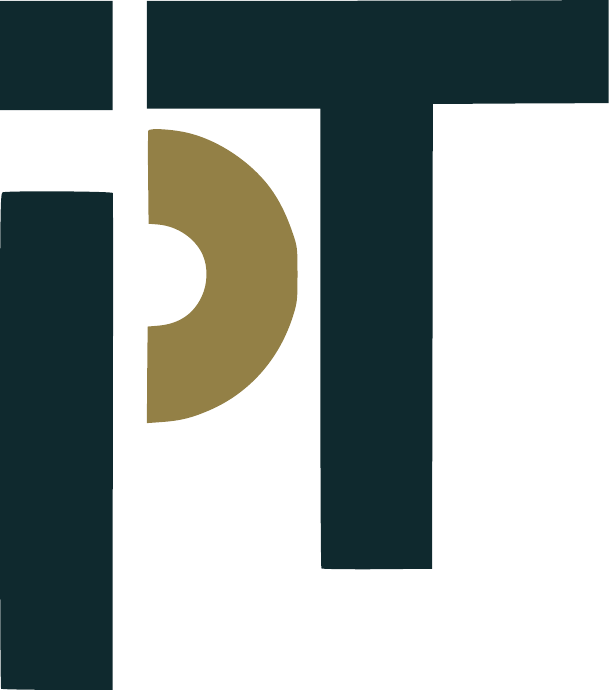}
        \end{minipage}
    \end{figure}
    \vspace {0.7cm}
    \large{By} \\
    \large{\textbf{Muhammad Umer}\quad(CMS -- 345834)} \\
    \large{\textbf{Muhammad Ahmed Mohsin}\quad(CMS -- 333060)} \\ [0.7cm]
    \large{Supervisor: \textbf{Dr. Syed Ali Hassan}} \\
    \large{Co-supervisor: \textbf{Dr. Huma Ghafoor}} \\ [0.7cm]
    \large{School of Electrical Engineering and Computer Science (SEECS) \\
        National University of Sciences and Technology (NUST) \\
        Islamabad, Pakistan} \\ [0.7 cm]
    \large{\today}
    \vspace{\fill} 
\end{titlepage}

\pagenumbering{roman}
\setcounter{page}{1}

\clearpage
\setlength{\headheight}{14pt}

\begin{center}
    {\Huge \bf Certificate}
\end{center}

\vspace{10mm}

It is certified that the contents and form of the thesis entitled

\vspace{5mm}

\begin{center}
    \textbf{\large “Resource Allocation for RIS Assisted CoMP-NOMA Networks using Reinforcement Learning”}
\end{center}

\vspace{5mm}

submitted by \textit{Muhammad Ahmed Mohsin} and \textit{Muhammad Umer} have been found satisfactory for the requirements of the degree.

\vspace{10mm}

\textbf{Advisor:} Syed Ali Hassan, Ph.D.  

Director, \textit{Information Processing and Transmission Lab.}\\Associate Professor, \textit{School of Electrical Engineering and Computer Science,}\\National University of Sciences and Technology (NUST)\\Islamabad, 44000.

\vspace{10mm}

\textbf{Co-Advisor:} Huma Ghafoor, Ph.D.  

Lecturer, \textit{School of Electrical Engineering and Computer Science,}\\  
National University of Sciences and Technology (NUST)\\Islamabad, 44000.

\clearpage
\setlength{\headheight}{14pt}
\begin{center}
    {\Huge \bf Abstract}
\end{center}

This thesis delves into the forefront of wireless communication by exploring the synergistic integration of three transformative technologies: Simultaneously Transmitting and Reflecting Reconfigurable Intelligent Surfaces (STAR-RIS), Coordinated Multi-Point transmission (CoMP), and Non-Orthogonal Multiple Access (NOMA). Driven by the ever-increasing demand for higher data rates, improved spectral efficiency, and expanded coverage in the evolving landscape of 6G development, this research investigates the potential of these technologies to revolutionize future wireless networks.

The thesis analyzes the performance gains achievable through strategic deployment of STAR-RIS, focusing on mitigating inter-cell interference, enhancing signal strength, and extending coverage to cell-edge users. Resource sharing strategies for STAR-RIS elements are explored, optimizing both transmission and reflection functionalities. Analytical frameworks are developed to quantify the benefits of STAR-RIS assisted CoMP-NOMA networks under realistic channel conditions, deriving key performance metrics such as ergodic rates and outage probabilities. Additionally, the research delves into energy-efficient design approaches for CoMP-NOMA networks incorporating RIS, proposing novel RIS configurations and optimization algorithms to achieve a balance between performance and energy consumption. Furthermore, the application of Deep Reinforcement Learning (DRL) techniques for intelligent and adaptive optimization in aerial RIS-assisted CoMP-NOMA networks is explored, aiming to maximize network sum rate while meeting user quality of service requirements. Through a comprehensive investigation of these technologies and their synergistic potential, this thesis contributes valuable insights into the future of wireless communication, paving the way for the development of more efficient, reliable, and sustainable networks capable of meeting the demands of our increasingly connected world.
\clearpage
\setlength{\headheight}{14pt}
\begin{center}
    {\Huge \bf Acknowledgements}
\end{center}

We extend our sincere gratitude to Almighty Allah for granting us the strength, guidance, and perseverance throughout the journey of completing this project. We would like to express our heartfelt appreciation to our parents for their unwavering support, encouragement, and belief in our abilities. Their love, sacrifices, and continuous encouragement have been the cornerstone of our success.

We are deeply grateful to our friends for their camaraderie, motivation, and support, which made this journey memorable and enjoyable. We owe a debt of gratitude to our supervisor, Dr. Syed Ali Hassan, for his invaluable guidance, support, and expertise. His insightful feedback, encouragement, and patience have been instrumental in shaping this thesis.

Special thanks are due to Dr. Haejoon Jung and Dr. Huma Ghafoor for their assistance, valuable insights, and encouragement throughout this project. Their expertise, encouragement, and willingness to help have been truly appreciated. We are indebted to all those who have contributed in any way, no matter how small, to the completion of this project. Thank you for your support, encouragement, and belief in us.
\clearpage
\setlength{\headheight}{14pt}
\begin{center}
    {\Huge \bf Publication List}
\end{center}

The main contributions of this research are either published or accepted or are presently submitted for acceptance in journals and conferences as mentioned in the following list:

\section*{Journal Articles}
\begin{enumerate}[label={[\arabic*]}]
    \item {M. Umer, M. A. Mohsin, M. Gidlund, H. Jung, and S. A. Hassan, “Analysis of STAR-RIS Assisted Downlink CoMP-NOMA Multi-Cell Networks under Nakagami-$m$ Fading,” IEEE
          Communications Letters, 2024.}
\end{enumerate}

\section*{Conference Papers}
\begin{enumerate}[label={[\arabic*]}]
    \item {M. Umer, M. A. Mohsin, S. A. Hassan, H. Jung, and H. Pervaiz, “Performance Analysis of STAR-RIS Enhanced CoMP-NOMA Multi-Cell Networks,” in 2023 IEEE Globecom
          Workshops (GC Wkshps), pp. 2000--2005, IEEE, 2023.}
    \item {M. Umer, M. A. Mohsin, S. A. Hassan, and H. Jung, “On the Energy Efficiency and Passive Beamforming Design of RIS Assisted CoMP-NOMA Networks,” in IEEE ICC 2025.}
    \item {M. Umer, M. A. Mohsin, S. A. Hassan, and H. Jung, “Deep Reinforcement Learning for Trajectory and Beamforming Optimization in CoMP-NOMA Networks with Aerial STAR-RIS,” in IEEE GLOBECOM 2024.}
\end{enumerate}

\clearpage
\renewcommand{\headrulewidth}{0.5pt}
\phantomsection
\tableofcontents
\renewcommand{\contentsname}{Table of Contents}
\addcontentsline{toc}{chapter}{Table of Contents}

\clearpage
\phantomsection
\addcontentsline{toc}{chapter}{List of Figures}
\listoffigures

\clearpage
\phantomsection
\addcontentsline{toc}{chapter}{List of Tables}
\listoftables

\renewcommand\bibname{References}

\clearpage
\phantomsection

\clearpage
\pagenumbering{arabic}
\setcounter{page}{1}

\chapter{Introduction}
\label{chap:intro}
This chapter provides an overview of the research project, outlining its motivation, objectives, methodology, and expected outcomes. It also details the organization of the report for clarity and navigation.

\section{Project Overview}
This research project delves into the synergistic potential of combining three cutting-edge technologies: Simultaneously Transmitting and Reflecting Reconfigurable Intelligent Surfaces (STAR-RIS), Coordinated Multi-Point transmission (CoMP), and Non-Orthogonal Multiple Access (NOMA) to address the challenges of future wireless communication networks, particularly in the context of 6G development. The project investigates the performance enhancements achievable through strategic deployment of STAR-RIS for mitigating inter-cell interference and enhancing signal strength, particularly for cell-edge users. Additionally, it explores resource sharing strategies for STAR-RIS elements to optimize both transmission and reflection functionalities within the network.

The project further analyzes the performance of STAR-RIS assisted CoMP-NOMA networks under realistic channel conditions, deriving analytical expressions for key performance metrics such as ergodic rates and outage probabilities. Moreover, the project investigates energy-efficient design approaches for CoMP-NOMA networks incorporating RIS, proposing different RIS configurations and optimization algorithms for maximizing energy efficiency while maintaining desired performance levels. Finally, the project explores the use of Deep Reinforcement Learning (DRL) techniques, specifically multi-output proximal policy optimization (MO-PPO), to optimize the joint configuration of unmanned aerial vehicles (UAVs), RIS phase shifts, and NOMA power control for maximizing network sum rate and meeting user quality of service requirements.

\section{Motivation}
The ever-increasing demand for higher data rates, improved spectral efficiency, and expanded coverage in wireless communication systems has pushed the boundaries of existing technologies. Traditional approaches are struggling to keep pace with these demands, necessitating the exploration of novel solutions. The limitations of conventional RIS designs, the challenges of interference management in dense heterogeneous networks, and the need for more efficient spectrum utilization motivate the investigation of STAR-RIS, CoMP, and NOMA as potential enablers for next-generation wireless networks. By combining these technologies, we aim to achieve significant performance improvements and pave the way for a more connected and sustainable future.

\section{Objectives}
The primary objectives of this research project are:
\begin{itemize}[]
    \item To analyze and quantify the performance gains achievable through the integration of STAR-RIS in CoMP-NOMA networks, particularly in terms of data rates, coverage, and spectral efficiency.
    \item To develop efficient resource sharing strategies for STAR-RIS elements to optimize both transmission and reflection functionalities within the network.
    \item To design and evaluate energy-efficient solutions for RIS-assisted CoMP-NOMA networks, considering different RIS configurations and optimization algorithms.
    \item To explore the application of DRL techniques for joint optimization of UAV trajectory, RIS phase shifts, and NOMA power control in aerial RIS-assisted CoMP-NOMA networks.
\end{itemize}

\section{Methodology}
The research methodology will involve a combination of theoretical analysis, simulation modeling, and algorithm development. Analytical tools will be employed to derive performance metrics and gain insights into the behavior of STAR-RIS assisted CoMP-NOMA networks under various channel conditions. Simulation models will be developed to evaluate the performance of different system configurations and resource allocation strategies. Optimization algorithms will be designed to maximize energy efficiency, spectral efficiency, and other key performance indicators. Additionally, DRL techniques will be explored for joint optimization in aerial RIS-assisted CoMP-NOMA networks.

\section{Project Outcomes}
The expected outcomes of this research project include:
\begin{itemize}[]
    \item Development of efficient resource sharing strategies for STAR-RIS elements.
    \item A comprehensive understanding of the performance benefits and limitations of STAR-RIS assisted CoMP-NOMA networks.
    \item Design and evaluation of energy-efficient solutions for RIS-assisted CoMP-NOMA networks.
    \item Implementation and analysis of DRL-based optimization techniques for aerial RIS-assisted CoMP-NOMA networks.
    \item Contribution to the advancement of knowledge and understanding of RIS, CoMP, and NOMA technologies for future wireless communication systems.
\end{itemize}

\section{Organization of the Report}
This report is structured to provide a comprehensive and logical flow of information. Chapter~\ref{chap:intro} serves as the introduction, outlining the research project's motivation, objectives, methodology, and expected outcomes. Chapter~\ref{chap:back} delves into the background, offering a detailed review of relevant literature on NOMA, CoMP, RIS, and STAR-RIS technologies. Chapter~\ref{chap:star} explores the synergistic potential of combining STAR-RIS with CoMP-NOMA networks, analyzing their performance under realistic channel conditions. Chapter~\ref{chap:ee} investigates energy-efficient design approaches for RIS-assisted CoMP-NOMA networks. Chapter~\ref{chap:drl} explores the application of DRL techniques for intelligent optimization in aerial RIS-assisted CoMP-NOMA networks. Finally, Chapter 6 concludes the report by summarizing the key findings and discussing potential future research directions.

\chapter{Background}
\label{chap:back}
\section{Preliminaries}
The relentless expansion of wireless communication necessitates continuous innovation to address escalating demands for data rates, spectral efficiency, and coverage. This thesis delves into the exploration of cutting-edge technologies, namely Non-Orthogonal Multiple Access (NOMA), Coordinated Multi-Point (CoMP), and Reconfigurable Intelligent Surfaces (RIS), with a particular focus on Simultaneously Transmitting and Reflecting RIS (STAR-RIS), to tackle the limitations of conventional wireless systems.

This introductory section establishes the foundation for understanding these technologies and their potential impact. It begins by outlining the key challenges faced by modern wireless communication systems and subsequently introduces the aforementioned solutions, highlighting their operating principles and potential benefits.

\subsection{Challenges in Modern Wireless Communication}
The proliferation of data-intensive applications, such as high-definition video streaming, virtual reality, and the Internet of Things (IoT), has placed immense pressure on existing wireless infrastructure. Key challenges include:
\begin{itemize}[]
    \item \textbf{Data Rate Demands:} The need for faster data transfer speeds to support bandwidth-hungry applications is ever-increasing.
    \item \textbf{Spectral Efficiency:} Efficient utilization of the limited radio spectrum is crucial to accommodate the growing number of users and devices.
    \item \textbf{Coverage:} Ensuring ubiquitous and reliable wireless coverage across diverse environments, including urban canyons and remote areas, remains a significant challenge.
\end{itemize}

Traditional approaches, such as increasing carrier frequencies or deploying additional base stations, offer limited scalability and often introduce new complexities. Consequently, exploring and developing novel technologies is essential to overcome these hurdles and pave the way for next-generation wireless communication systems.

\subsection{Emerging Technologies}
Several promising technologies have emerged to address the limitations of conventional wireless systems:

\subsubsection{Non-Orthogonal Multiple Access (NOMA)}
NOMA departs from the traditional orthogonal multiple access (OMA) paradigm by leveraging power-domain multiplexing to serve multiple users on the same resource block. By allocating higher power levels to users experiencing weaker channel conditions, NOMA ensures successful decoding at the receiver while simultaneously improving spectral efficiency and user fairness. This technique offers significant advantages over OMA, including increased system capacity and enhanced service for cell-edge users.

\subsubsection{Coordinated Multi-Point (CoMP)}
CoMP enhances network performance through coordinated transmission and reception among multiple base stations. This coordination effectively mitigates inter-cell interference (ICI), leading to improved signal quality, increased network capacity, and extended coverage, particularly at cell edges. However, implementing CoMP presents challenges related to channel state information (CSI) acquisition and synchronization among base stations.

\subsubsection{Reconfigurable Intelligent Surfaces (RIS)}
RIS introduces a novel paradigm by employing software-controlled metasurfaces to manipulate electromagnetic waves. These surfaces, comprised of numerous reconfigurable elements, can dynamically adjust the phase and amplitude of incident waves to achieve desired reflection, refraction, or scattering. While offering significant potential for enhancing wireless environments, traditional RIS designs suffer from the limitation of only operating in the reflection or transmission mode, known as the \say{half-space} problem.

\subsubsection{Simultaneously Transmitting and Reflecting RIS (STAR-RIS)}
STAR-RIS overcomes the limitations of traditional RIS by incorporating active elements alongside passive reflectors. This enables simultaneous transmission and reflection of signals, providing greater flexibility in manipulating electromagnetic waves and further enhancing wireless communication capabilities. STAR-RIS holds immense potential for improving signal coverage, boosting system capacity, and increasing spectral efficiency, particularly in challenging propagation environments.

The subsequent sections of this thesis will delve deeper into these technologies, exploring their theoretical foundations, practical implementations, and potential applications in the context of future wireless communication systems.

\section{Related Work}
This section explores existing research on key technologies relevant to this thesis: Non-Orthogonal Multiple Access (NOMA), Coordinated Multi-Point (CoMP), Reconfigurable Intelligent Surfaces (RIS), and their application in multi-cell networks. These technologies address critical challenges in modern wireless communication systems, aiming to increase data rates, improve spectral efficiency, and enhance coverage.

\subsection{Non-Orthogonal Multiple Access (NOMA)}
The work in~\cite{liu2018nonorthogonal} delves into the potential of NOMA for 5G and beyond networks. Recognizing the limitations of traditional orthogonal multiple access (OMA) in accommodating the escalating demands of data-intensive applications, the authors highlight the advantages of NOMA in enhancing bandwidth efficiency by serving multiple users within the same resource block. The paper provides a comprehensive overview of power-domain multiplexing aided NOMA, encompassing theoretical foundations, multiple antenna designs, interaction with cooperative transmission schemes, resource management strategies, and co-existence with other 5G technologies. Additionally, it identifies existing challenges and proposes potential solutions, offering valuable design guidelines and outlining future research directions in this domain.

Further exploring NOMA, the survey in~\cite{Islam_2017} offers a detailed analysis of its progress in 5G systems, focusing primarily on power-domain NOMA employing superposition coding at the transmitter and successive interference cancellation at the receiver. The paper delves into various aspects of NOMA, including capacity analysis, power allocation strategies, user fairness considerations, and integration with other wireless technologies such as cooperative communications, MIMO, beamforming, and network coding. Implementation challenges and potential avenues for future research are also discussed, providing a holistic understanding of NOMA and its potential impact on future wireless networks.

\subsection{Coordinated Multi-Point (CoMP)}
The study presented in~\cite{lee2012coordinated} investigates CoMP techniques for LTE-Advanced systems, focusing on coordinated transmission and reception among multiple points to mitigate interference and enhance signal quality. The authors evaluate the potential performance benefits of CoMP across various deployment scenarios with varying traffic loads. Furthermore, the study delves into implementation aspects, practical challenges, and deployment considerations for CoMP in LTE-Advanced networks.
Building upon the concept of CoMP, the work in~\cite{nigam2014coordinated} explores base station cooperation in the downlink of heterogeneous cellular networks, specifically focusing on joint transmission scenarios. Using stochastic geometry, the authors derive expressions for network coverage probability, considering a typical user receiving data from a pool of cooperating base stations. The analysis demonstrates significant gains in coverage probability, particularly when cooperation involves multiple base stations, and emphasizes the importance of coherent joint transmission in achieving diversity gain.

\subsection{Reconfigurable Intelligent Surfaces (RIS)}
The study in~\cite{huang2019reconfigurable} investigates the use of RIS for enhancing downlink multi-user communication in a multi-antenna base station scenario. The authors focus on developing energy-efficient resource allocation strategies for transmit power and RIS phase shifts while guaranteeing individual link budget requirements for users. To address the resulting non-convex optimization problems, two computationally efficient algorithms are proposed. The paper also introduces a realistic power consumption model for RIS-based systems and evaluates the proposed methods in a realistic outdoor environment. Results demonstrate significant energy efficiency gains compared to conventional relaying techniques.

Further exploring RIS technology, the work in~\cite{wu2019towards} provides a comprehensive overview of its potential for revolutionizing wireless communication networks. The paper delves into the applications, advantages, hardware architecture, and signal model of RIS, highlighting its ability to intelligently manipulate the wireless propagation environment using passive reflecting elements. The authors also discuss the challenges associated with designing and implementing hybrid wireless networks incorporating both active and passive components. Numerical results showcase the performance improvements achievable with RIS in typical wireless network scenarios.

\subsection{UAV-assisted Wireless Networks}
While not directly related to RIS, CoMP, or NOMA, the inclusion of UAV-assisted wireless networks in the original literature review presents an opportunity to discuss the potential integration of these technologies. The survey in~\cite{zhang2019survey} discusses the increasing role of unmanned aerial vehicles (UAVs) in enhancing transmission efficiency and coverage in wireless communication systems, particularly in the context of 5G and beyond networks utilizing millimeter wave (mmWave) frequencies. The paper offers a comprehensive overview of integrating 5G mmWave communications into UAV-assisted networks, presenting a taxonomy of research issues and solutions, and discussing technical advantages, challenges, and potential applications. This opens avenues for exploring the synergistic benefits of combining UAV-assisted networks with RIS, CoMP, and NOMA to further enhance coverage, capacity, and spectral efficiency in future wireless systems.

Furthermore, the work in~\cite{liu2019trajectory} proposes a novel framework for designing trajectories of multiple UAVs by predicting users' mobility information. This framework aims to maximize the instantaneous sum transmit rate while satisfying user rate requirements. Integrating such trajectory design and user mobility prediction techniques with RIS, CoMP, and NOMA could pave the way for highly efficient and adaptable wireless networks capable of dynamically responding to user demands and environmental changes.

\section{Related Work}
This section explores existing research on the integration of STAR-RIS, CoMP, and NOMA technologies in various wireless communication scenarios. The focus is on understanding how these technologies are combined to address challenges related to coverage, spectral efficiency, and energy efficiency.

\subsection{STAR-RIS Enhanced CoMP-NOMA Networks}
The work in~\cite{hou2021joint} introduces a novel design for STAR-RIS within a NOMA-enhanced CoMP network. Building upon existing signal enhancement and cancellation approaches, the authors propose a simultaneous signal enhancement and cancellation based (SSECB) design that leverages a large number of RIS elements to concurrently eliminate inter-cell interference and boost desired signals. Simulation results demonstrate the effectiveness of SSECB in outperforming conventional designs and achieving superior performance in CoMP-NOMA networks.

Further exploring the potential of STAR-RIS in NOMA systems, the study in~\cite{xu2022secrecy} investigates its application in enhancing coverage quality and spectral efficiency. The authors analyze the secrecy performance of a STAR-RIS aided downlink NOMA system employing the energy splitting protocol. Analytical expressions for secrecy outage probability (SOP) are derived, and asymptotic performance analysis is conducted to gain insights into system behavior. The results demonstrate the superior secrecy performance of STAR-RIS-NOMA compared to conventional OMA systems.

\subsection[DRL for Aerial RIS in CoMP-NOMA Networks]{Deep Reinforcement Learning (DRL) for Aerial RIS in CoMP-NOMA Networks}

The integration of UAVs with RIS in CoMP-NOMA networks presents exciting opportunities for further enhancing coverage and capacity. The work in~\cite{chen2024comp} proposes a novel approach for maximizing communication efficiency in a multi-UAV system assisted by RIS. A hybrid learning scheme combining multi-agent DRL and alternating optimization is employed to optimize UAV trajectories, cooperative beamforming, and RIS passive beamforming (PBF). The proposed framework demonstrates superior performance compared to conventional systems and heuristic algorithms, achieving higher communication rates and fast convergence.

Further exploring the use of RIS in UAV-assisted communication, the study in~\cite{mei20223d} investigates the joint optimization of UAV placement and RIS phase-shift to maximize data transfer rates while minimizing UAV energy consumption. DRL algorithms, specifically DDQN and DDPG, are employed to address the optimization challenges. Numerical results demonstrate the effectiveness of these algorithms in improving the energy efficiency of RIS-assisted UAV systems compared to benchmark solutions.

The work in~\cite{10051712} introduces an energy harvesting (EH) scheme for UAV-RIS systems operating in communication-disabled areas. The proposed EH-RIS scheme utilizes SWIPT technology to simultaneously transport information and harvest energy through a split passive reflecting array. A robust DRL-based algorithm is developed for efficient resource allocation in dynamic environments with pedestrian mobility and rapid channel changes. Simulation results showcase the effectiveness and efficiency of the proposed EH-RIS system, surpassing existing solutions and achieving near-optimal performance.

\subsection[Energy Efficiency]{Energy Efficiency of RIS-assisted Multi-Cell NOMA Networks}

Energy efficiency (EE) is a critical consideration in the design of future wireless networks. The paper in~\cite{adam2020energy} focuses on maximizing EE in multi-cell multi-carrier NOMA networks while considering hardware impairments. The authors propose a two-stage approach, employing the BWOA algorithm for user association and subchannel assignment, and SPCA for power allocation. Simulation results demonstrate the effectiveness of the proposed algorithm in achieving comparable performance to existing methods while surpassing benchmarks for both NOMA and OMA systems.

Another study on EE maximization in multi-cell multi-carrier NOMA networks is presented in~\cite{adam2020energy}. The authors propose a matching-based framework for user association and a two-stage quadratic transform approach for power allocation to address the non-convex EE maximization problem. Numerical results demonstrate the superior EE performance of the proposed method compared to existing approaches for NOMA and OMA systems.

Expanding on the concept of EE, the work in~\cite{long2020joint} explores secure energy efficiency maximization in an RIS-assisted uplink wireless communication system involving a UAV acting as a mobile relay. The authors propose an algorithm for joint optimization of UAV trajectory, RIS phase shift, user association, and transmit power to maximize system efficiency. Simulation results demonstrate significant gains in secure energy efficiency compared to traditional schemes without RIS integration.

\section{Research Gaps and Opportunities}
This section identifies key research gaps and opportunities in the existing literature on RIS-enhanced CoMP-NOMA networks. Addressing these gaps is crucial for realizing the full potential of these technologies in future wireless communication systems.
\subsection{Performance Challenges in CoMP-NOMA Networks}
While CoMP-NOMA networks offer significant potential for enhancing spectral efficiency, several challenges hinder their practical implementation:

\begin{itemize}[]
    \item \textbf{Channel State Information (CSI) Acquisition:} Accurate CSI is crucial for effective coordination among base stations in CoMP systems. However, acquiring accurate CSI in dynamic wireless environments with RIS deployments is a complex task due to the additional channel paths introduced by the RIS. Developing efficient and robust CSI acquisition techniques tailored for RIS-assisted CoMP-NOMA networks is essential.
    \item \textbf{Synchronization Issues:} Precise synchronization among base stations is critical for CoMP operation to avoid inter-cell interference. The introduction of RIS further complicates synchronization requirements due to the potential for additional delays and phase shifts introduced by the reflecting elements. Developing robust synchronization protocols that account for the unique characteristics of RIS is necessary.
    \item \textbf{Signal Processing Complexity:} NOMA requires more complex signal processing compared to traditional OMA techniques, both at the base stations and user equipment. This increased complexity can lead to higher power consumption and computational costs. Efficient signal processing algorithms optimized for RIS-assisted CoMP-NOMA systems are needed to mitigate these challenges.

\end{itemize}

\subsection{Open Research Areas in STAR-RIS Technology}
STAR-RIS presents a promising evolution of RIS technology, but its integration into wireless communication systems requires further investigation in several key areas:

Channel Modeling: Accurate channel models that capture the unique characteristics of STAR-RIS, including its simultaneous transmitting and reflecting capabilities, are crucial for system design and performance evaluation. These models should account for mutual coupling between elements, the interaction between transmitted and reflected signals, and the impact of different RIS configurations on channel characteristics.

\begin{itemize}[]
    \item \textbf{Capacity Analysis:} Determining the theoretical limits of STAR-RIS systems in terms of achievable data rates is essential for understanding its potential benefits and limitations. This analysis should consider various factors such as RIS size, element spacing, operating frequency, and channel conditions.
    \item \textbf{Beamforming Optimization:} Designing efficient beamforming algorithms for STAR-RIS that jointly optimize reflection and transmission while considering practical constraints is a challenging task. Advanced optimization techniques are needed to achieve optimal performance while accounting for factors like power limitations, hardware imperfections, and channel uncertainties.
    \item \textbf{Performance Comparison:} A comprehensive comparison of STAR-RIS with existing technologies such as conventional RIS and relaying systems under various scenarios is crucial for evaluating its advantages and identifying suitable application areas. This comparison should consider metrics like energy efficiency, spectral efficiency, coverage, and cost.
\end{itemize}

\subsection[Statistical Gaps]{Statistical Analysis Gaps in RIS-based CoMP-NOMA Networks}

While existing research has explored the integration of STAR-RIS into CoMP-NOMA networks, there is a lack of a comprehensive statistical framework for analyzing and optimizing their performance. Key gaps include:
\begin{itemize}[]
    \item \textbf{Performance Analysis and Optimization:} Existing studies often rely on simplified assumptions and specific scenarios when analyzing outage probability and rate coverage. More comprehensive analyses incorporating diverse user distributions, varying channel conditions, and realistic RIS configurations are needed. Additionally, developing statistically robust algorithms for optimal user pairing, power allocation, and joint beamforming/phase shift design is crucial.
    \item \textbf{Resource Management and Interference Mitigation:} Dynamically adapting RIS configuration to varying channel conditions and user demands requires intelligent resource management strategies. Statistical methods can provide valuable tools for predicting channel variations and optimizing RIS phase shifts accordingly. Furthermore, statistical techniques can aid in developing efficient interference cancellation and mitigation schemes tailored for RIS-assisted CoMP-NOMA networks.
    \item \textbf{Security and Privacy:} The broadcast nature of RIS reflections introduces security and privacy concerns. Statistical approaches can be leveraged to design secure transmission protocols and privacy-preserving mechanisms that mitigate potential vulnerabilities in RIS-assisted CoMP-NOMA networks.
\end{itemize}

\subsection[DRL in Aerial RIS Networks]{DRL in Aerial RIS Networks: Challenges and Opportunities}
While DRL has shown promising results in optimizing aerial RIS networks, several research gaps and challenges need to be addressed for practical real-world deployment:

\subsubsection{Expanding the Scope of Scenarios and Environments}
\begin{itemize}[]
    \item \textbf{Dynamic Environments:} Most existing research focuses on static environments, neglecting the dynamic nature of real-world scenarios with moving objects and changing channel conditions. Investigating DRL techniques in such dynamic environments is crucial for practical applications.
    \item \textbf{Complex Network Topologies:} Current studies primarily consider simple network topologies with limited aerial platforms and RIS elements. Exploring DRL algorithms in more complex networks with multiple aerial layers, diverse user distributions, and heterogeneous platforms (e.g., drones and high-altitude platforms) is essential for expanding the applicability of these techniques.
\end{itemize}

\subsubsection{Enhancing Algorithm Efficiency and Scalability}
\begin{itemize}[]
    \item \textbf{Computational Complexity:} Many DRL algorithms suffer from high computational complexity and long training times, hindering real-time implementation. Research into efficient DRL algorithms with reduced complexity and faster convergence is needed to overcome this limitation.
    \item \textbf{Scalability:} The performance of DRL algorithms often degrades as network size and complexity increase. Developing scalable DRL solutions capable of efficiently handling large-scale aerial RIS networks is vital for practical deployment.
\end{itemize}

\subsubsection{Joint Optimization and Hybrid Approaches}
\begin{itemize}[]
    \item \textbf{Integration with Path Planning:} Investigating the joint optimization of aerial platform trajectories and RIS configurations using DRL has the potential to significantly improve overall network performance.
    \item \textbf{Hybrid DRL Approaches:} Combining DRL with other optimization techniques, such as evolutionary algorithms or convex optimization methods, could leverage the strengths of each approach and lead to improved performance and convergence speed.
\end{itemize}

\subsubsection{Security and Privacy Considerations}
\begin{itemize}[]
    \item \textbf{Adversarial Attacks:} The vulnerability of DRL algorithms to adversarial attacks is a growing concern. Research into robust DRL techniques that are resilient to malicious attacks is necessary for secure and reliable operation of aerial RIS networks.
    \item \textbf{Privacy Preservation:} DRL algorithms may require access to sensitive user data, raising privacy concerns. Investigating privacy-preserving DRL techniques that protect user information while maintaining performance is critical for ethical and responsible implementation of these technologies.
\end{itemize}

\subsection[Energy Efficiency Assessment in Aerial RIS Networks]{Energy Efficiency Assessment in Aerial RIS Networks: A Holistic Approach}
Accurately assessing and optimizing energy efficiency in aerial RIS networks requires a holistic approach that considers various factors and challenges:

\subsubsection{Developing Comprehensive Energy Consumption Models}
\begin{itemize}[]
    \item \textbf{Holistic Network Perspective:} Existing models often focus on individual components, neglecting the interconnected nature of the entire network. Comprehensive models considering UAVs, RIS, communication links, and ground infrastructure are needed for accurate energy efficiency assessments.
    \item \textbf{Dynamic Energy Consumption:} Models should account for real-time factors such as UAV trajectory optimization, varying channel conditions, and RIS configuration changes to accurately capture the dynamic energy consumption of aerial RIS networks.
\end{itemize}

\subsubsection{Optimization Algorithms with Energy Efficiency Focus}
\begin{itemize}[]
    \item \textbf{Balancing Performance and Energy Efficiency:} Algorithms should be designed to optimize network performance with energy efficiency as a primary or secondary objective, effectively balancing these often conflicting goals.
    \item \textbf{Joint Optimization:} Exploring joint optimization of UAV trajectory, RIS phase shifts, and communication resource allocation with an emphasis on energy efficiency holds significant potential for improving overall network sustainability.
\end{itemize}

\subsubsection{Addressing Practical Considerations and Implementation Challenges}
\begin{itemize}[]
    \item \textbf{Hardware Limitations:} Realistic models and optimization algorithms need to account for the limitations of RIS elements, UAVs, and communication systems to ensure practical feasibility and avoid overestimating potential gains.
    \item \textbf{Environmental Factors:} Environmental conditions, such as weather and wind speed, can significantly impact energy consumption. These factors should be considered in energy efficiency assessments and optimization strategies.
    \item \textbf{Cost Analysis:} Evaluating the economic feasibility of aerial RIS networks through cost-benefit analysis and comparison with terrestrial networks is crucial for determining their viability and potential for large-scale deployment.
\end{itemize}

\subsubsection{Standardization and Protocols}
\begin{itemize}[]
    \item \textbf{Developing Standardized Protocols:} As a relatively new technology, aerial RIS networks lack standardized protocols for energy management and communication. Establishing such protocols is essential for ensuring interoperability and facilitating large-scale deployment.
    \item \textbf{Energy-efficient Communication Protocols:} Existing communication protocols may not be optimized for energy efficiency in aerial RIS networks. Research on adapting existing protocols or developing new energy-aware protocols is necessary to minimize energy consumption while maintaining reliable communication.

\end{itemize}

\subsubsection{Security and Privacy in Energy-Efficient Systems}
\begin{itemize}[]
    \item \textbf{Energy-efficient Security Mechanisms:} Implementing security mechanisms can increase energy consumption. Research is needed on designing energy-efficient security solutions for aerial RIS networks that balance security requirements with energy efficiency goals.
    \item \textbf{Privacy Concerns:} The dynamic nature of aerial RIS networks raises privacy concerns. Studies should explore privacy-preserving mechanisms that protect user information while maintaining energy efficiency and network performance.

\end{itemize}

\section{Summary}
This chapter has laid the foundation for exploring the potential of cutting-edge technologies in addressing the ever-growing demands of modern wireless communication. The escalating need for higher data rates, improved spectral efficiency, and expanded coverage has exposed the limitations of traditional approaches, necessitating the investigation of novel solutions.

Non-Orthogonal Multiple Access (NOMA) was introduced as a promising candidate for enhancing spectral efficiency by enabling multiple users to share the same resource block, offering advantages over conventional Orthogonal Multiple Access (OMA) schemes. Additionally, Coordinated Multi-Point (CoMP) techniques were presented as a means to mitigate inter-cell interference and improve network performance through coordinated transmission and reception among base stations. However, challenges related to channel state information acquisition, synchronization, and signal processing complexity remain hurdles for the practical implementation of both NOMA and CoMP.

Reconfigurable Intelligent Surfaces (RIS) emerged as a revolutionary technology capable of manipulating electromagnetic waves through software-controlled metasurfaces, offering a new paradigm for shaping wireless environments. The limitations of traditional RIS designs, particularly the \say{half-space} problem, were addressed with the introduction of Simultaneously Transmitting and Reflecting RIS (STAR-RIS) as a more versatile and powerful solution.

By establishing a firm understanding of the fundamental principles and limitations of NOMA, CoMP, RIS, and STAR-RIS, this chapter has paved the way for further exploration and analysis of their potential in shaping the future of wireless communication systems. Subsequent chapters will delve deeper into their applications and evaluate their impact on the evolution of wireless networks, striving to overcome existing challenges and unlock their full potential.

\chapter[Synergy of STAR-RIS, CoMP, and NOMA]{Synergy of STAR-RIS, CoMP, and NOMA: Paving the Path for 6G Wireless Networks}
\label{chap:star}
The evolution towards sixth-generation (6G) wireless networks demands significant advancements in spectral efficiency and coverage to accommodate the ever-growing demand for data-driven applications and ubiquitous connectivity. Reconfigurable Intelligent Surfaces (RIS) have emerged as a transformative technology with the potential to revolutionize wireless communication by intelligently manipulating electromagnetic waves. RIS can dynamically shape the propagation environment to enhance signal strength and mitigate interference, thus, facilitating higher data rates and improved spectral efficiency.

In parallel, the increasing deployment of small, low-power base stations within cellular networks has led to challenges related to cross-tier interference and increased energy consumption. Coordinated Multi-Point (CoMP) techniques offer a solution by enabling base stations to coordinate transmissions through high-speed fronthaul links, thereby mitigating interference and enhancing overall network performance. Integrating CoMP with Non-Orthogonal Multiple Access (NOMA) further improves spectral efficiency by allowing multiple users to share the same resource block.

This chapter explores the synergistic potential of combining STAR-RIS with CoMP-NOMA networks to address the challenges of future wireless communication systems. By leveraging the unique capabilities of each technology, we aim to pave the path towards achieving the ambitious goals of 6G networks, providing enhanced coverage, improved spectral efficiency, and a more sustainable and connected future.

\LARGE{\textbf{Performance Analysis}}
\normalsize

\section{System Model}

As shown in Figure~\ref{fig:ana_system}, we consider a multi-cell STAR-RIS assisted CoMP-NOMA network, in which each base station (BS) serves a NOMA pair consisting of its corresponding center and edge users. Consequently, the edge user is part of two NOMA pairs, each served by a different BS.

\begin{figure}[h!]
    \centering
    \includegraphics[width=0.65\columnwidth]{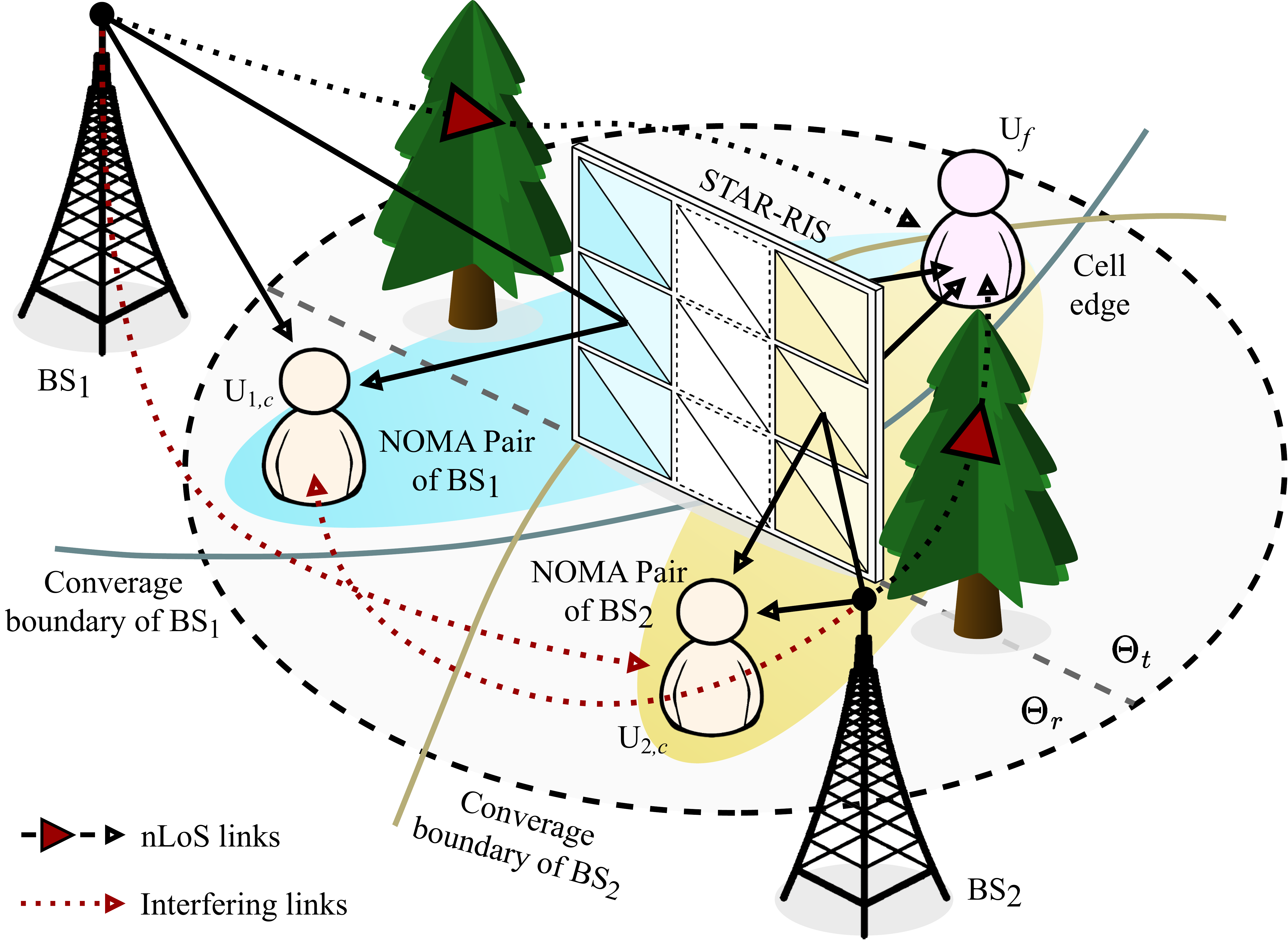}
    \vspace*{1em}
    \caption{An illustration of STAR-RIS-aided coordinated NOMA cluster.}
    \label{fig:ana_system}
\end{figure}

We define the index set $\mathcal{I} = \{1, 2\}$ for the two BSs, whereas  $\mathcal{C} = \{1, 2,\dots, C\}$ is defined for the cell-center users, and $\mathcal{F} = \{1, 2,\dots, F\}$ for the cell-edge users, respectively. Additionally, let $\mathcal{U} = \mathcal{C}\cup\mathcal{F}$, denote the set of all system users. For the sake of simplicity, we assume $C=2$, where each center user U$_{c_i}$ is served by its corresponding BS$_i$, and $F=1$, with $C$ and $F$ representing the cardinality of $\mathcal{C}$ and $\mathcal{F}$, respectively.

The BSs employ power-domain NOMA techniques to communicate with the users. Specifically, $\forall i \in \mathcal{I}, c \in \mathcal{C}$, and $f \in \mathcal{F}$, BS$_i$ forms the NOMA pair (${\text{U}_{c_i}, \text{U}_f}$), where $\text{U}_{c_i}$ is the cell-center user of BS$_i$. Consequently, the U$_f$ is part of two NOMA pairs, each cluster served by a different BS. To mitigate the strong ICI experienced by U$_f$, CoMP is adopted between the two BSs.
In addition, it is assumed that the BSs are connected to a central processing unit (CPU) via high-speed fronthaul links, facilitating seamless information sharing and coordinated transmissions among them.

In this work, perfect CSI is assumed to be available at the BSs. While this is a challenging assumption in practice, recent advances in channel estimation techniques for RIS-enabled wireless networks have shown that it is possible to achieve accurate CSI~\cite{hou2021joint, wei2021channel, taha2021enabling, 10462124, 10847914} with a reasonable amount of overhead.

\subsection{Channel Model}
For each communication link in the system, we take into account both large-scale fading and small-scale fading effects. Due to the relatively large propagation distances and the scattering effect of the links between B$_i$ and U$_u$, $\forall i \in \mathcal{I}$ and $u \in \mathcal{U}$, the channels are assumed to follow Rayleigh fading, expressed as:
\begin{equation}
    h_{i,u} = \sqrt{\frac{\rho_{o}}{PL(d_{i,u})}} v_{i,u},
\end{equation}
where $v_{i,u}$ is a complex Gaussian random variable that follows a Rayleigh distribution with zero mean and unit variance, $\rho_{o}$ is the reference path-loss at a distance of 1 m, $PL(d_{i,u})$ is the large scale path-loss, modeled as $PL(d_{i,u})=\left(d_{i,u}\right)^{\alpha_{i\rightarrow u}}$, where $d_{i, u}$ is the distance and $\alpha_{i\rightarrow u}$ is the path-loss exponent between the BS$_i$ and U$_u$, respectively.

On the contrary, the link between the STAR-RIS, hereafter represented by $R$, and BS$_i$ is assumed to exhibit a dominant line-of-sight (LoS) path~\cite{guo2020intelligent}. Therefore, these links are subject to the Rician fading, where their channel coefficients are expressed as:
\begin{equation}
    \textbf{h}_{i,R} = \sqrt{\frac{\rho_{o}}{PL(d_{i,R})}} \left( \sqrt{\frac{\kappa_{i,R}}{\kappa_{i,R} + 1}} \hat{\mathbf{v}_{i,R}} + \sqrt{\frac{1}{\kappa_{i,R} + 1}} \mathbf{v}_{i,R} \right),
\end{equation}
where $d_{i,R}$ is the distance between the BS$_i$ and $R$, $\kappa_{i,R}$ represents the Rician factor, $\hat{\mathbf{v}_{i,R}}$ represents the deterministic LoS components, and $\mathbf{v}_{i,R}$ denotes the complex Gaussian random variables, each following a Rayleigh distribution with zero mean and unit variance, thus representing the NLoS components. The links between $R$ and U$_u$, $\forall u \in \mathcal{U}$, are also modeled similarly.

\subsection{STAR-RIS Configuration}
The energy splitting (ES) model of the STAR-RIS array can be mathematically characterized by the following respective transmission- and reflection-coefficient matrices~\cite{mu2021simultaneously}:
\begin{align}
    \mathbf{\Theta_r} & = \sqrt{\beta^r}\text{diag}(e^{j \theta_1^t}, e^{j \theta_2^t}, \dots, e^{j \theta_K^t}), \\
    \mathbf{\Theta_t} & = \sqrt{\beta^t}\text{diag}(e^{j \theta_1^r}, e^{j \theta_2^r}, \dots, e^{j \theta_K^r}),
\end{align}
where $\beta^t$, $\beta^r \in [0, 1]$ and $\theta_k^t$, $\theta_k^r \in [0, 2 \pi)$, $\forall k \in \mathcal{K} \triangleq \{1, 2,\dots,K\}$. The phase shifts for transmission and reflection (i.e., $\theta_k^t$ and $\theta_k^r$) can generally be chosen independently of each other~\cite{9437234}. However, the amplitude adjustments for transmission and reflection are coupled by the law of conservation of energy. Assuming the STAR-RIS does not impose any power loss, the relation between the amplitude coefficients (i.e., $\beta^t$ and $\beta^r$) is expressed as $\beta^t + \beta^r = 1$. To reduce the signaling overhead between the STAR-RIS and the BSs, all elements are adjusted to have the same transmission and reflection coefficients.

\subsection{Rate Analysis}
To analyze the rates achieved for users in the system model shown in Figure~\ref{fig:ana_system}, we first present the signal model. Specifically, $\forall i \in \mathcal{I}$, $c \in \mathcal{C}$, and $f \in \mathcal{F}$, let the tuple (U$_{c_1}$, U$_{c_2}$, U$_f$) represent the coordinated NOMA cluster, and let $P_1$ and $P_2$ denote the transmit powers of BS$_1$ and BS$_2$, respectively. This signal model serves as the foundation for evaluating the achieved rates and optimizing the system performance in the considered cluster. Additionally, let $x_{1,c}$, $x_{2,c}$, and $x_f$ represent the message signal intended for U$_{c_1}$, U$_{c_2}$, and U$_f$, respectively. Each BS$_i$ broadcasts a superimposed signal of the messages intended for users within its coverage region, U$_{c_i}$ and U$_f$, and expressed as~\cite{saito2013non}:
\begin{equation}
    x_{i}=\sqrt{\zeta_{i,c}P_i}x_{i,c} + \sqrt{\zeta_{i,f}P_i}x_f,
\end{equation}
where $\zeta_{i,c}$ and $\zeta_{i,f}$ are the power allocation (PA) factors assigned by BS$_i$ to users U$_{c_i}$ sand U$_f$, respectively. It is important to note that U$_{c_i}$ experiences stronger channel conditions compared to U$_f$, making it the dominant NOMA user in the pair ($\text{U}_{c_i}, \text{U}_f$) formed by BS$_i$. Following the principle of NOMA, U$_{c_i}$ should be capable of detecting and decoding the message intended for U$_f$. This principle also implies that $\zeta_{i,c} < 0.5$, or $0.5 < \zeta_{i,f} < 1$~\cite{obeed2020user, salem2020noma}.

For brevity, we only define the rate achieved by U$_{c_1}$ from the set of cell-center users $\mathcal{C}$, as the same steps could be extended to define the rate of U$_{c_i}$, $\forall i \in \mathcal{I}$ and $c \in \mathcal{C}$. The received signal at U$_{c_1}$ can be written as:
\begin{equation}
    y_{c_1}=h_{1,c}x_1 + h_{2,c^\prime} x_2 + N_o,
\end{equation}
where $N_o$ is an additive white Gaussian noise (AWGN), i.e., $N_o\sim \mathcal{CN}$(0, $\sigma^2$). Further, $h_{2,c^\prime}$ is the channel corresponding to the link between BS$_2$ and U$_{c_1}$, which is the cell-center user of BS$_1$, and represents the ICI experienced by U$_{c_1}$. By utilizing successive interference cancellation (SIC) techniques, U$_{c_1}$ first decodes the message signal of U$_f$ (i.e., $x_f$) and then removes it from $y_{c_1}$ to decode its own message (i.e., $x_{1,c}$). Based on this approach, the signal-to-interference-and-noise ratio (SINR) and the corresponding achievable rate at U$_{c_1}$ for decoding the message of U$_f$ can be expressed as:
\begin{gather}
    \label{eq:gamma_icf}
    \gamma_{1,c\rightarrow f}=\frac{\zeta_{1,f}P_1\abs{H_{1,c}}^2}{\zeta_{c_1}P_1\abs{H_{1,c}}^2 + P_2\abs{h_{2,c^\prime}}^2 +  \sigma^2}, \\
    \mathcal{R}_{1,c\rightarrow f}=\log_2\left(1+\gamma_{1,c\rightarrow f}\right),
\end{gather}
where $H_{1,c}=h_{1,c}+\textbf{h}_{R, c}^H \mathbf{\Theta_r}\textbf{h}_{1, R}$ represents the combined channel from BS$_1$ to U$_{c_1}$. Furthermore, the SINR and the corresponding achievable rate of U$_{c_1}$ for decoding its own message can be expressed as:
\begin{gather}
    \gamma_{c_1}=\zeta_{c_1}\frac{P_1\abs{H_{1,c}}^2}{P_2\abs{h_{2,c^\prime}}^2 + \sigma^2}, \\
    \mathcal{R}_{c_1} = \log_2\left(1+\gamma_{c_1}\right).
\end{gather}

On the contrary, U$_f$, belonging to two NOMA pairs, receives its signal through the broadcasts from each BS$_i$, $\forall i \in \mathcal{I}$. Thus, the received signal at U$_f$ can be expressed as:
\begin{equation}
    y_f = H_{1, f}x_1 + H_{2, f}x_2 + N_0,
\end{equation}
where $H_{1, f}$ and $H_{2, f}$ represent the combined channels from BS$_1$ to U$_f$ and from BS$_2$ to U$_f$, and can be expressed as $H_{1, f}=h_{1, f}+\textbf{h}_{R, f}^H \mathbf{\Theta_t}\textbf{h}_{1, R}$ and $H_{2, f}=h_{2, f}+\textbf{h}_{R, f}^H \mathbf{\Theta_t}\textbf{h}_{2, R}$, respectively. Given that non-coherent JT-CoMP is taken into consideration, the SINR and the corresponding achievable rate at U$_f$ can be expressed as~\cite{tanbourgi2014tractable, elhattab2022ris}:
\begin{gather}
    \label{eq:gamma_f}
    \gamma_{f}=\frac{\zeta_{1,f}P_1\abs{H_{1, f}}^2 + \zeta_{2,f}P_2\abs{H_{2, f}}^2}{\zeta_{c_1}P_1\abs{H_{1, f}}^2 + \zeta_{2,c}P_2\abs{H_{2, f}}^2 + \sigma^2}, \\
    \mathcal{R}_{f}=\log_2\left(1 + \gamma_f\right).
\end{gather}

\subsection{Outage Probability Analysis}
To further investigate the efficacy of strategically placing the STAR-RIS in improving the system performance, we analyze the outage probability experienced by cellular users. Following the principles of NOMA, $\forall i \in \mathcal{I}$, $c \in \mathcal{C}$, and $f \in \mathcal{F}$, if U$_{i,c}$ cannot decode $x_f$, or is capable of decoding $x_f$ but not $x_{i,c}$, an outage occurs, the probability of which is expressed as~\cite{9856598}:
\begin{equation}
    \mathcal{P}_{i,c} = \text{Pr}\,(\gamma_{i,c \rightarrow f} < \gamma_{th_f}) + \text{Pr}\,(\gamma_{i,c \rightarrow f} > \gamma_{th_f}, \gamma_{c} < \gamma_{th_{c}}),
\end{equation}
where $\gamma_{th_f}$ and $\gamma_{th_{c}}$ represent the outage thresholds for U$_f$ and U$_{i,c}$, respectively. Similarly, in regard to U$_f$, an outage occurs when it fails to decode x$_f$, and the corresponding outage probability is expressed as:
\begin{equation}
    \mathcal{P}_{f}=\text{Pr}\,(\gamma_{f}<\gamma_{th_f}).
\end{equation}

\vspace*{1.25em}
\LARGE{\textbf{Analytical Analysis}}
\normalsize

\section{End-to-end SINR Statistics}
\subsection{Effective Channel Characterization}
\label{sec:effective}
Let $Z_{i,u} = |{H}_{i,u}|^2 = (|h_{i,u}| + \sqrt{\beta_n} \sum_{k=1}^{K} |{h_{R,u}}||{h_{i,R}}|)^2$, where $n \in \{t, r\}$ represents the transmission and reflection regions of STAR-RIS, respectively. The distribution of $Z_{i,u}$ is derived in the following lemma.

\begin{lemma}
    \label{lem:effective}
    Assuming a large $K$, 
    and by applying MoM, the distribution of $Z_{i,u}$ is approximated as a Gamma distribution, $Z_{i,u} \sim \Gamma\big(k_{Z_{i,u}}, \theta_{Z_{i,u}}\big)$, with the following probability density function (PDF).
    \begin{equation}
        f_{Z_{i,u}}(x) = \frac{x^{k_{Z_{i,u}}-1} e^{-\frac{x}{\theta_{Z_{i,u}}}}}{\theta_{Z_{i,u}}^{k_{Z_{i,u}}} \Gamma\left(k_{Z_{i,u}}\right)},~x > 0,
    \end{equation}
    where $k_{Z_{i,u}} = \frac{\mu_{Z_{i,u}}^2}{\mu_{Z_{i,u}}^{(2)} - \mu_{Z_{i,u}}^2}$ and $\theta_{Z_{i,u}} = \frac{\mu_{Z_{i,u}}^{(2)} - \mu_{Z_{i,u}}^2}{\mu_{Z_{i,u}}}$ are the shape and scale parameters of the Gamma distribution, with $\mu_{Z_{i,u}}=\frac{2 K \sqrt{\beta \Omega_{iu} \Omega_{iR} \Omega_{Ru}}\,\Gamma_m \left(\frac{1}{2},\frac{1}{2},\frac{1}{2}\right)}{\sqrt{m_{iR} m_{Ru} m_{iu}}}+\beta  K^2 \Omega_{iR} \Omega_{Ru}+\Omega_{iu}$ and $\mu_{Z_{i,u}}^{(2)} = \frac{4 \beta^{3/2} K^3 \sqrt{\Omega_{iu}} (\Omega_{iR} \Omega_{Ru})^{3/2} \,\Gamma_m \left(\frac{1}{2},\frac{3}{2},\frac{3}{2}\right)}{\sqrt{m_{iu}} (m_{iR} m_{Ru})^{3/2}}+6\beta K^2 \Omega_{iR} \Omega_{iu} \Omega_{Ru}+\frac{4 K \Omega_{iu}^{3/2} \sqrt{\beta \Omega_{iR} \Omega_{Ru}}\,\Gamma_m \left(\frac{3}{2},\frac{1}{2},\frac{1}{2}\right)}{\sqrt{m_{iR} m_{Ru}}m_{iu}^{3/2}}+\frac{\beta^2 K^4 \Omega_{iR}^2
        (m_{iR}+1) (m_{Ru}+1) \Omega_{Ru}^2}{m_{iR}m_{Ru}}+\frac{(m_{iu}+1) \Omega_{iu}^2}{m_{iu}}$ as the first and second moments of $Z_{i,u}$, respectively, and $\Gamma_m\left(a,b,c\right)=\frac{\Gamma \left(m_{iu}\,+\,a\right) \Gamma \left(m_{iR}\,+\,b\right) \Gamma \left(m_{Ru}\,+\,c\right)
        }{\Gamma \left(m_{iu}\right) \Gamma \left(m_{iR}\right) \Gamma \left(m_{Ru}\right)}$.
\end{lemma}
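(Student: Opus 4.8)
The plan is to write $Z_{i,u}=(A+B)^2$, where $A=|h_{i,u}|$ is Nakagami-$m$ with parameters $(m_{iu},\Omega_{iu})$ and, reading the stated expression for $Z_{i,u}$ at face value (the summand being independent of $k$, so the sum is $K$ copies of it), $B=\sqrt{\beta}\,K\,|h_{i,R}|\,|h_{R,u}|$ is $K$ times a product of two independent Nakagami-$m$ amplitudes with parameters $(m_{iR},\Omega_{iR})$ and $(m_{Ru},\Omega_{Ru})$; $A$, $|h_{i,R}|$ and $|h_{R,u}|$ are mutually independent. Since $A$ and $B$ are nonnegative, $Z_{i,u}$ is a positive composite fading variable, and the Method of Moments (MoM) prescription is to fit it by the Gamma law whose first two moments match those of $Z_{i,u}$. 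Thus the proof reduces to (i) computing $\mu_{Z_{i,u}}=\mathbb{E}[Z_{i,u}]$ and $\mu_{Z_{i,u}}^{(2)}=\mathbb{E}[Z_{i,u}^2]$ in closed form, and (ii) converting a mean/variance pair into Gamma shape and scale parameters.

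For step (i), I would use the Nakagami-$m$ moment identity $\mathbb{E}[|h|^n]=\frac{\Gamma(m+n/2)}{\Gamma(m)}\left(\frac{\Omega}{m}\right)^{n/2}$, which supplies every moment of $A$ and, by independence of the two RIS hops, gives $\mathbb{E}[B^n]=\beta^{n/2}K^n\,\mathbb{E}[|h_{i,R}|^n]\,\mathbb{E}[|h_{R,u}|^n]$. Expanding by the binomial theorem and using independence of $A$ and $B$,
\[
\mu_{Z_{i,u}}=\mathbb{E}[A^2]+2\,\mathbb{E}[A]\,\mathbb{E}[B]+\mathbb{E}[B^2],\qquad
\mu_{Z_{i,u}}^{(2)}=\sum_{j=0}^{4}\binom{4}{j}\,\mathbb{E}[A^{4-j}]\,\mathbb{E}[B^{j}].
\]
Substituting the moment identity and simplifying the Gamma ratios (e.g. $\Gamma(m+1)/\Gamma(m)=m$, $\Gamma(m+2)/\Gamma(m)=m(m+1)$), the term $\mathbb{E}[A^2]=\Omega_{iu}$, the cross term $2\mathbb{E}[A]\mathbb{E}[B]$ gives the $K$-order contribution, and $\mathbb{E}[B^2]$ the $K^2$-order contribution, reproducing the claimed $\mu_{Z_{i,u}}$; likewise the five summands of $\mu_{Z_{i,u}}^{(2)}$ produce respectively the $K^4$, $K^3$, $K^2$, $K$ and $K^0$ terms, and collecting the three-fold products of Gamma ratios into the abbreviation $\Gamma_m(a,b,c)$ yields the stated second moment.

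For step (ii), a $\Gamma(k,\theta)$ variable has mean $k\theta$ and variance $k\theta^2$; equating these with $\mu_{Z_{i,u}}$ and $\mu_{Z_{i,u}}^{(2)}-\mu_{Z_{i,u}}^2$ and solving gives exactly $k_{Z_{i,u}}=\mu_{Z_{i,u}}^2/(\mu_{Z_{i,u}}^{(2)}-\mu_{Z_{i,u}}^2)$ and $\theta_{Z_{i,u}}=(\mu_{Z_{i,u}}^{(2)}-\mu_{Z_{i,u}}^2)/\mu_{Z_{i,u}}$, which are the parameters in the asserted PDF. I expect the main obstacle to be bookkeeping rather than any conceptual hurdle: correctly expanding the fourth power, pairing each $\mathbb{E}[A^{4-j}]\mathbb{E}[B^{j}]$ with the right power of $K$ and $\beta$, and carrying the products of $\Gamma(m+\cdot)/\Gamma(m)$ ratios through to the compact $\Gamma_m(\cdot,\cdot,\cdot)$ notation without sign or factor slips. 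The genuinely soft point is the legitimacy of the Gamma family itself: the ``assuming a large $K$'' hypothesis is what lets the cascaded term $B^2$ dominate, so that a two-parameter, positive-support moment fit tracks the true law of $Z_{i,u}$ closely; I would present this as the modeling rationale behind the approximation rather than attempt a convergence proof, optionally backing it with a Kolmogorov--Smirnov comparison against Monte-Carlo simulation of $Z_{i,u}$.
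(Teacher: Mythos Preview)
Your proposal is correct and follows essentially the same approach as the paper: decompose $|H_{i,u}|$ into the direct Nakagami amplitude plus the cascaded term, use the Nakagami moment identity and independence to compute all needed moments of each summand, binomially expand to obtain $\mu_{|H_{i,u}|}^{(2)}$ and $\mu_{|H_{i,u}|}^{(4)}$, and then moment-match to a Gamma law. The only cosmetic difference is that the paper first invokes the CLT to justify an intermediate Gamma approximation for the cascaded sum $G_{i,u}$ before applying the binomial expansion, whereas you compute the moments of $B$ directly; since the moment formulas are identical either way, the two routes coincide.
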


\begin{proof}
    For brevity, let the combined channel be $G_{i,u} = \sqrt{\beta_n} \sum_{k=1}^{K} |{h_{R,u}}||{h_{i,R}}|$. By applying CLT, and noting that it is a scaled double-Nakagami random variable (RV), the distribution of $G_{i,u}$ can be approximated as a Gamma distribution, i.e.,
    \[G_{i,u} \sim \Gamma\left(\frac{\mu_{G_{i,u}}^2}{\mu_{G_{i,u}}^{(2)} - \mu_{G_{i,u}}^2}, \frac{\mu_{G_{i,u}}^{(2)} - \mu_{G_{i,u}}^2}{\mu_{G_{i,u}}}\right),\]
    where $\mu_{G_{i,u}}$ and $\mu_{G_{i,u}}^{(2)}$ are the first and second moments of $G_{i,u}$, respectively, with the $p$-th moment of $G_{i,u}$ given by~\cite{9290053}
    \begin{equation}
        \mu_{G_{i,u}}^{(p)} = \frac{(K\sqrt{\beta_n})^p({\Omega_{iR}}{\Omega_{Ru}})^{p/2}\,\Gamma\left(m_{Ru}+\frac{p}{2}\right)\Gamma\left(m_{iR}+\frac{p}{2}\right)}{({m_{Ru}}{m_{iR}})^{p/2}\,\Gamma\left(m_{iR}\right)\Gamma\left(m_{Ru}\right)},
    \end{equation}
    and as $|h_{i,u}| \sim Nakagami(m_{i,u}, \Omega_{i,u})$, the $p$-th moments are known to be given by $\mu_{|h_{i,u}|}^{(p)} = \frac{\Gamma\left(m_{iu}+\frac{p}{2}\right){\Omega_{iu}}^{p/2}}{\Gamma\left(m_{iu}\right){m_{iu}}^{p/2}}$.
    Since $|h_{i,u}|$ and $G_{i,u}$ are independent, the $p$-th moment of $|\textbf{H}_{i,u}|$ can be obtained via the moments of its summands, i.e., $|h_{i,u}|$ and $G_{i,u}$, by applying the binomial theorem. Hence, the $p$-th moment of $|\textbf{H}_{i,u}|$ is given by
    \begin{equation}
        \label{eq:binom}
        \mu_{|\textbf{H}_{i,u}|}^{(p)} = \sum_{q=0}^{p} \binom{\,p\,}{\,q\,}\mu_{|h_{i,u}|}^{(q)} \mu_{G_{i,u}}^{(p-q)}.
    \end{equation}
    Knowing that only the first two moments of $Z_{i,u}$, $\mu_{Z_{i,u}} = \mu_{|\textbf{H}_{i,u}|}^{(2)}$ and $\mu_{Z_{i,u}}^{(2)} = \mu_{|\textbf{H}_{i,u}|}^{(4)}$, are necessary to approximate its distribution as a Gamma distribution, the first two moments of $Z_{i,u}$ are, therefore, given by
    \begin{align}
        \label{eq:lem1_1}
        \mu_{Z_{i,u}}                         & = \mu_{|h_{i,u}|}^{(2)} + 2\mu_{|h_{i,u}|}\mu_{G_{i,u}} + \mu_{G_{i,u}}^{(2)},                                                                                        \\
        \label{eq:lem1_2} \mu_{Z_{i,u}}^{(2)} & =\mu_{|h_{i,u}|}^{(4)} + 4\mu_{|h_{i,u}|}^{(3)}\mu_{G_{i,u}} + 6\mu_{|h_{i,u}|}^{(2)}\mu_{G_{i,u}}^{(2)} + 4\mu_{|h_{i,u}|}\mu_{G_{i,u}}^{(3)} + \mu_{G_{i,u}}^{(4)}.
    \end{align}
    The final expression of moments can thus be obtained through the means of substitution in~(\ref{eq:lem1_1}) and~(\ref{eq:lem1_2}).
\end{proof}

Using the moment matching-based Gamma approximation approach, the distribution of the sum of a Gamma RV and the square of a Nakagami-$m$ RV with different shape and scale parameters, further weighted by different constant terms, i.e., path loss and power allocation factors, is derived in the following lemma.

\begin{lemma}
    \label{lem:sum}
    Let $\mathcal{B}_{i,u}^{(a,b)} = a Z_{i,u} + b |h_{i',u}|^2$, where $a, b \in \mathbb{R}^+$, $i' \in \mathcal{I} \setminus \{i\}$, then the distribution of $\mathcal{B}_{i,u}^{(a,b)}$ is approximated as a Gamma distribution, $\mathcal{B}_{i,u}^{(a,b)} \sim \Gamma\big(k_{\mathcal{B}_{i,u}^{(a,b)}}, \theta_{\mathcal{B}_{i,u}^{(a,b)}}\big)$, with the following PDF
    \begin{equation}
        f_{\mathcal{B}_{i,u}^{(a,b)}}(x) = \frac{x^{k_{\mathcal{B}_{i,u}^{(a,b)}}-1} e^{-\frac{x}{\theta_{\mathcal{B}_{i,u}^{(a,b)}}}}}{\theta_{\mathcal{B}_{i,u}^{(a,b)}}^{k_{\mathcal{B}_{i,u}^{(a,b)}}} \Gamma\Big(k_{\mathcal{B}_{i,u}^{(a,b)}}\Big)},~x > 0,
    \end{equation}
    where $k_{\mathcal{B}_{i,u}^{(a,b)}} = \frac{\mu_{\mathcal{B}_{i,u}^{(a,b)}}^2}{\mu_{\mathcal{B}_{i,u}^{(a,b)}}^{(2)} - \mu_{\mathcal{B}_{i,u}^{(a,b)}}^2}$ and $\theta_{\mathcal{B}_{i,u}^{(a,b)}} = \frac{\mu_{\mathcal{B}_{i,u}^{(a,b)}}^{(2)} - \mu_{\mathcal{B}_{i,u}^{(a,b)}}^2}{\mu_{\mathcal{B}_{i,u}^{(a,b)}}}$, with $\mu_{\mathcal{B}_{i,u}^{(a,b)}} = a \mu_{Z_{i,u}} + b \Omega_{i,u}$ and $\mu_{\mathcal{B}_{i,u}^{(a,b)}}^{(2)} = a^2 \mu_{Z_{i,u}}^{(2)} + 2 a b \mu_{Z_{i,u}} \Omega_{i,u} + b^2 \Omega_{i,u}^2 (1 + \frac{1}{m_{i,u}})$ as the first and second moments of $\mathcal{B}_{i,u}^{(a,b)}$, respectively.
\end{lemma}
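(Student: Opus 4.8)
The plan is to mirror the moment-matching argument used in the proof of \cref{lem:effective}, exploiting the fact that $\mathcal{B}_{i,u}^{(a,b)}$ is a weighted sum of two \emph{independent} nonnegative random variables whose first two moments are already available. First I would record the building blocks. From \cref{lem:effective}, $Z_{i,u}$ has first and second moments $\mu_{Z_{i,u}}$ and $\mu_{Z_{i,u}}^{(2)}$, so the scaled term $a Z_{i,u}$ has moments $a\mu_{Z_{i,u}}$ and $a^2\mu_{Z_{i,u}}^{(2)}$ (scaling a Gamma RV by $a$ multiplies its scale parameter by $a$ and its $p$-th moment by $a^p$). For the second summand, I would use that $|h_{i',u}|^2$ is the square of a Nakagami-$m$ RV, hence itself Gamma-distributed, with $\mathbb{E}[|h_{i',u}|^2] = \Omega_{i,u}$ and $\mathbb{E}[|h_{i',u}|^4] = \Omega_{i,u}^2(1+\tfrac{1}{m_{i,u}})$, which follow directly from the Nakagami moment formula $\mu_{|h|}^{(p)} = \Gamma(m+\tfrac{p}{2})\Omega^{p/2}/(\Gamma(m)m^{p/2})$ quoted in the proof of \cref{lem:effective}. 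Independence of the two summands holds because $i'\in\mathcal{I}\setminus\{i\}$, so $Z_{i,u}$ (built from the BS$_i$--RIS--U$_u$ and BS$_i$--U$_u$ fading variables) and $|h_{i',u}|^2$ (the BS$_{i'}$--U$_u$ link) involve disjoint sets of random variables.

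Next I would combine these. By independence, $\mathbb{E}[\mathcal{B}_{i,u}^{(a,b)}] = a\mu_{Z_{i,u}} + b\Omega_{i,u}$, and expanding the square, $\mathbb{E}[(\mathcal{B}_{i,u}^{(a,b)})^2] = a^2\mathbb{E}[Z_{i,u}^2] + 2ab\,\mathbb{E}[Z_{i,u}]\mathbb{E}[|h_{i',u}|^2] + b^2\mathbb{E}[|h_{i',u}|^4] = a^2\mu_{Z_{i,u}}^{(2)} + 2ab\,\mu_{Z_{i,u}}\Omega_{i,u} + b^2\Omega_{i,u}^2(1+\tfrac{1}{m_{i,u}})$, which are precisely the stated $\mu_{\mathcal{B}_{i,u}^{(a,b)}}$ and $\mu_{\mathcal{B}_{i,u}^{(a,b)}}^{(2)}$. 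I would then invoke moment matching: a $\Gamma(k,\theta)$ law has mean $k\theta$ and second moment $k(k+1)\theta^2$; equating these two quantities with the computed moments and solving the resulting pair of equations yields $k_{\mathcal{B}_{i,u}^{(a,b)}} = \mu_{\mathcal{B}_{i,u}^{(a,b)}}^2/(\mu_{\mathcal{B}_{i,u}^{(a,b)}}^{(2)} - \mu_{\mathcal{B}_{i,u}^{(a,b)}}^2)$ and $\theta_{\mathcal{B}_{i,u}^{(a,b)}} = (\mu_{\mathcal{B}_{i,u}^{(a,b)}}^{(2)} - \mu_{\mathcal{B}_{i,u}^{(a,b)}}^2)/\mu_{\mathcal{B}_{i,u}^{(a,b)}}$, after which the claimed PDF is obtained by substitution into the Gamma density.

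The one genuinely nonrigorous step, and therefore the main obstacle to a fully honest proof, is the final one: a sum of two independent Gamma RVs is itself Gamma only when the two share a common scale parameter, which is not the case here, so the Gamma form is an approximation rather than an identity. I would justify it exactly as \cref{lem:effective} does, namely by noting that the method of moments preserves the first two moments, which are the quantities that govern the SINR expressions in which $\mathcal{B}_{i,u}^{(a,b)}$ subsequently appears, and that the fit can be verified numerically. It is worth observing that, since \cref{lem:effective} already matched the \emph{exact} moments of $Z_{i,u}$, the moments $\mu_{\mathcal{B}_{i,u}^{(a,b)}}$ and $\mu_{\mathcal{B}_{i,u}^{(a,b)}}^{(2)}$ derived here are themselves exact under the channel model; the only modeling error is confined to approximating the true distribution of $\mathcal{B}_{i,u}^{(a,b)}$ by the two-parameter Gamma law determined by those moments.
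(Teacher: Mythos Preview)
Your proposal is correct and follows essentially the same approach as the paper: identify $|h_{i',u}|^2$ as Gamma-distributed (square of a Nakagami-$m$ variable), use the scaling property for $aZ_{i,u}$, combine moments of the independent summands via the binomial expansion, and then moment-match to a Gamma law. Your write-up is in fact more careful than the paper's terse version, since you explicitly justify independence via $i'\neq i$ and flag that the Gamma form for the sum is an approximation rather than exact.
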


\begin{proof}
    As $|h_{i,u}| \sim Nakagami(m_{i,u}, \Omega_{i,u})$, the square of $|h_{i,u}|$ is known to be Gamma distributed, i.e., $|h_{i,u}|^2 \sim \Gamma\left(k_{i,u}, \theta_{i,u}\right)$, where $k_{i,u} = m_{i,u}$ and $\theta_{i,u} = \frac{\Omega_{i,u}}{m_{i,u}}$ are the shape and scale parameters of the Gamma distribution, respectively. Further, by using the scaling property of Gamma distribution, i.e., $X \sim \Gamma\left(k, \theta\right) \implies aX \sim \Gamma\left(k, a\theta\right)$, the first and second moments of $a Z_{i,u}$ are given by $\mu_{a Z_{i,u}} = a \mu_{Z_{i,u}}$ and $\mu_{a Z_{i,u}}^{(2)} = a^2 \mu_{Z_{i,u}}^{(2)}$, respectively. Finally, the first and second moments of $\mathcal{B}_{i,u}^{(a,b)}$ can be obtained by applying the binomial theorem in~(\ref{eq:binom}) and substituting the moments of $a Z_{i,u}$ and $b |h_{i,u}|^2$.
\end{proof}

\subsection{Probability Density Functions of SINRs}
\begin{lemma}
    \label{lem:beta_prime_cf}
    The PDF of the SINR at U$_{c_{i}}$ to decode the signal of U$_{f}$, i.e., $\gamma_{i,c \rightarrow f}$, is given by
    \begin{equation}
        \label{eq:beta_prime_cf}
        f_{\gamma_{i,c \rightarrow f}}(x) = \frac{\theta_{\mathcal{W}_{i,c,f}} \left(\frac{x \theta_{\mathcal{W}_{i,c,f}}}{\rho\zeta_{i,f}\theta_{Z_{i,c}}}\right)^{\alpha_{i,c}} \left(\frac{x \theta_{\mathcal{W}_{i,c,f}}}{\rho\zeta_{i,f}\theta_{Z_{i,c}}}+1\right)^{\nu_{i,c\rightarrow f}}}{\rho\zeta_{i,f}\theta_{Z_{i,c}} B(k_{Z_{i,c}},k_{\mathcal{W}_{i,c,f}})},
    \end{equation}
    for $x>0$, where $\alpha_{i,c}=k_{Z_{i,c}}-1$, $\nu_{i,c\rightarrow f}=-(k_{Z_{i,c}}+k_{\mathcal{W}_{i,c,f}})$, $B\left(\cdot\:,\cdot\right)$ is the Euler Beta function, $k_{\mathcal{W}_{i,c,f}} = \frac{\mu_{\mathcal{W}_{i,c,f}}^2}{\mu_{\mathcal{W}_{i,c,f}}^{(2)} - \mu_{\mathcal{W}_{i,c,f}}^2}$, and $\theta_{\mathcal{W}_{i,c,f}} = \frac{\mu_{\mathcal{W}_{i,c,f}}^{(2)} - \mu_{\mathcal{W}_{i,c,f}}^2}{\mu_{\mathcal{W}_{i,c,f}}}$, with $\mu_{\mathcal{W}_{i,c,f}} = \mu_{\mathcal{B}_{i,c}^{(\rho\zeta_{i,c},\,\rho)}} + 1$ and $\mu_{\mathcal{W}_{i,c,f}}^{(2)} = \mu_{\mathcal{B}_{i,c}^{(\rho\zeta_{i,c},\,\rho)}}^{(2)} + 2\mu_{\mathcal{B}_{i,c}^{(\rho\zeta_{i,c},\,\rho)}} + 1$.
\end{lemma}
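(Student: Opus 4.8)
The plan is to cast $\gamma_{i,c\rightarrow f}$ as a ratio of two (approximately) Gamma-distributed random variables and then invoke the density of a ratio of independent Gammas, which is a scaled Beta-prime law. First I would normalize~\eqref{eq:gamma_icf} by $\sigma^2$, writing $\rho$ for the common transmit SNR $P/\sigma^2$, to obtain
\begin{equation*}
    \gamma_{i,c\rightarrow f} = \frac{\rho\zeta_{i,f}\, Z_{i,c}}{\rho\zeta_{i,c}\, Z_{i,c} + \rho\, |h_{i',c}|^2 + 1},
\end{equation*}
where $Z_{i,c} = |H_{i,c}|^2$ and $i' \in \mathcal{I}\setminus\{i\}$. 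The numerator is a scaled copy of $Z_{i,c}$: Lemma~\ref{lem:effective} gives $Z_{i,c}\sim\Gamma(k_{Z_{i,c}},\theta_{Z_{i,c}})$, so the Gamma scaling property $X\sim\Gamma(k,\theta)\Rightarrow aX\sim\Gamma(k,a\theta)$ yields $\rho\zeta_{i,f}Z_{i,c}\sim\Gamma(k_{Z_{i,c}},\,\rho\zeta_{i,f}\theta_{Z_{i,c}})$.

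Next I would handle the denominator. Its random part, $\rho\zeta_{i,c}Z_{i,c} + \rho\,|h_{i',c}|^2$, is exactly $\mathcal{B}_{i,c}^{(\rho\zeta_{i,c},\,\rho)}$, which by Lemma~\ref{lem:sum} is approximated as $\Gamma\big(k_{\mathcal{B}_{i,c}^{(\rho\zeta_{i,c},\rho)}},\theta_{\mathcal{B}_{i,c}^{(\rho\zeta_{i,c},\rho)}}\big)$. Folding in the normalized noise, I set $\mathcal{W}_{i,c,f} = \mathcal{B}_{i,c}^{(\rho\zeta_{i,c},\,\rho)} + 1$, so the denominator equals $\mathcal{W}_{i,c,f}$. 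Because a unit shift moves the first two moments to $\mu_{\mathcal{W}_{i,c,f}} = \mu_{\mathcal{B}_{i,c}^{(\rho\zeta_{i,c},\rho)}} + 1$ and $\mu_{\mathcal{W}_{i,c,f}}^{(2)} = \mathbb{E}\big[\big(\mathcal{B}_{i,c}^{(\rho\zeta_{i,c},\rho)}+1\big)^2\big] = \mu_{\mathcal{B}_{i,c}^{(\rho\zeta_{i,c},\rho)}}^{(2)} + 2\mu_{\mathcal{B}_{i,c}^{(\rho\zeta_{i,c},\rho)}} + 1$, a further moment-matching step re-approximates $\mathcal{W}_{i,c,f}$ as $\Gamma(k_{\mathcal{W}_{i,c,f}},\theta_{\mathcal{W}_{i,c,f}})$ with precisely the shape and scale claimed in the lemma.

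With numerator and denominator both expressed as Gammas, I would treat them as independent and apply the ratio identity: if $X\sim\Gamma(k_1,\theta_1)$ and $Y\sim\Gamma(k_2,\theta_2)$ are independent, then $W = X/Y$ has density $f_W(w) = \frac{\theta_2/\theta_1}{B(k_1,k_2)}\,(w\theta_2/\theta_1)^{k_1-1}\big(1+w\theta_2/\theta_1\big)^{-(k_1+k_2)}$, which follows immediately from $f_W(w) = \int_0^\infty y\, f_X(wy)\, f_Y(y)\,dy$ after recognizing a Gamma integral (equivalently, from the $X/(X+Y)\sim\mathrm{Beta}$ reduction plus a change of variables). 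Substituting $k_1 = k_{Z_{i,c}}$, $\theta_1 = \rho\zeta_{i,f}\theta_{Z_{i,c}}$, $k_2 = k_{\mathcal{W}_{i,c,f}}$, $\theta_2 = \theta_{\mathcal{W}_{i,c,f}}$, and writing $\alpha_{i,c} = k_{Z_{i,c}} - 1$, $\nu_{i,c\rightarrow f} = -(k_{Z_{i,c}}+k_{\mathcal{W}_{i,c,f}})$, reproduces~\eqref{eq:beta_prime_cf}.

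The step I would flag as the genuine soft spot is the independence assumption: $Z_{i,c}$ appears both in the numerator and, through $\mathcal{B}_{i,c}^{(\rho\zeta_{i,c},\rho)}$, in the denominator, so the ratio is of dependent quantities, and replacing them by independent Gammas is an approximation whose accuracy rests on the interference-plus-noise floor $\rho|h_{i',c}|^2 + 1$ not being negligible beside $\rho\zeta_{i,c}Z_{i,c}$ (something the NOMA constraint $\zeta_{i,c}<0.5$ already works in favor of). The rest is essentially bookkeeping, but it does require accepting a stack of Gamma moment-matchings — Lemma~\ref{lem:effective} for $Z_{i,c}$, Lemma~\ref{lem:sum} for $\mathcal{B}_{i,c}^{(\cdot,\cdot)}$, then once more for the shifted sum $\mathcal{W}_{i,c,f}$ — and carrying the scale ratio $\theta_{\mathcal{W}_{i,c,f}}/(\rho\zeta_{i,f}\theta_{Z_{i,c}})$ correctly through the final change of variables so that it lands in the right places in~\eqref{eq:beta_prime_cf}.
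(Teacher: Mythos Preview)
Your proposal is correct and follows exactly the paper's route: rewrite $\gamma_{i,c\rightarrow f}$ as $\rho\zeta_{i,f}Z_{i,c}/\mathcal{W}_{i,c,f}$ with $\mathcal{W}_{i,c,f}=\mathcal{B}_{i,c}^{(\rho\zeta_{i,c},\rho)}+1$, moment-match $\mathcal{W}_{i,c,f}$ to a Gamma, and then invoke the Beta-prime law for a ratio of independent Gammas. Your explicit caveat about the numerator/denominator dependence through $Z_{i,c}$ is a point the paper simply glosses over by asserting independence; otherwise your argument is, if anything, more detailed than the paper's own proof.
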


\begin{proof}
    The expression in~(\ref{eq:gamma_icf}) can be rewritten as $\gamma_{i,c \rightarrow f} = \frac{\rho\zeta_{i,f} Z_{i,c}}{\mathcal{W}_{i,c,f}}$, where $\mathcal{W}_{i,c,f} = \mathcal{B}_{i,c}^{(\rho\zeta_{i,c},\,\rho)} + 1$, with $Z_{i,c}$ and $\mathcal{B}_{i,c}^{(\rho\zeta_{i,c},\,\rho)}$ both being Gamma RVs based on the statistics derived earlier. Then, the distribution of $\mathcal{W}_{i,c,f}$ can also be approximated by an equivalent Gamma RV, i.e., $\mathcal{W}_{i,c,f} \sim \Gamma\big(k_{\mathcal{W}_{i,c,f}}, \theta_{\mathcal{W}_{i,c,f}}\big)$. As $Z_{i,c}$ and $\mathcal{W}_{i,c,f}$ are two independent Gamma RVs, the ratio of two Gamma RVs is known to follow a Beta prime distribution, i.e., $\frac{\rho\zeta_{i,f}Z_{i,c}}{\mathcal{W}_{i,c,f}} \sim \beta'\big(k_{Z_{i,c}}, k_{\mathcal{W}_{i,c,f}}, 1, \rho\zeta_{i,f}\theta_{Z_{i,c}}/{\theta_{\mathcal{W}_{i,c,f}}}\big)$, corresponding to the PDF in~(\ref{eq:beta_prime_c}).
\end{proof}

\begin{corollary}
    As $\gamma_{i,c \rightarrow f}$ and $\gamma_{i,c}$ are closely related, differing only by weighting constants, the PDF of $\gamma_{i,c}$ is given by
    \begin{equation}
        \label{eq:beta_prime_c}
        f_{\gamma_{i,c}}(x) = \frac{\theta_{\mathcal{W}_{i,c}} \left(\frac{x \theta_{\mathcal{W}_{i,c}}}{\rho\zeta_{i,c}\theta_{Z_{i,c}}}\right)^{\alpha_{i,c}} \left(\frac{x \theta_{\mathcal{W}_{i,c}}}{\rho\zeta_{i,c}\theta_{Z_{i,c}}}+1\right)^{\nu_{i,c}}}{\rho\zeta_{i,c}\theta_{Z_{i,c}} B(k_{Z_{i,c}},k_{\mathcal{W}_{i,c}})},~x>0,
    \end{equation}
    where $\nu_{i,c}=-(k_{Z_{i,c}}+k_{\mathcal{W}_{i,c}})$, $k_{\mathcal{W}_{i,c}} = \frac{\mu_{\mathcal{W}_{i,c}}^2}{\mu_{\mathcal{W}_{i,c}}^{(2)} - \mu_{\mathcal{W}_{i,c}}^2}$, and $\theta_{\mathcal{W}_{i,c}} = \frac{\mu_{\mathcal{W}_{i,c}}^{(2)} - \mu_{\mathcal{W}_{i,c}}^2}{\mu_{\mathcal{W}_{i,c}}}$, with $\mu_{\mathcal{W}_{i,c}} = \mu_{\mathcal{B}_{i,c}^{(0,\,\rho)}} + 1$ and $\mu_{\mathcal{W}_{i,c}}^{(2)} = \mu_{\mathcal{B}_{i,c}^{(0,\,\rho)}}^{(2)} + 2\mu_{\mathcal{B}_{i,c}^{(0,\,\rho)}} + 1$ as the first and second moment of $\mathcal{W}_{i,c} = \mathcal{B}_{i,c}^{(0,\,\rho)} + 1$, respectively.
\end{corollary}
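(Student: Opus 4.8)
The plan is to mirror the proof of Lemma~\ref{lem:beta_prime_cf} verbatim, since the expression for $\gamma_{i,c}$ in the display defining $\gamma_{c_1}$ has exactly the same algebraic structure as $\gamma_{i,c\rightarrow f}$ once the noise power is normalized out. First I would divide the numerator and denominator of $\gamma_{c_1}$ by $\sigma^2$, collect the transmit-SNR factor $\rho$, and write $\gamma_{i,c} = \rho\zeta_{i,c} Z_{i,c}/\mathcal{W}_{i,c}$, where $\mathcal{W}_{i,c} = \mathcal{B}_{i,c}^{(0,\,\rho)} + 1$ and $\mathcal{B}_{i,c}^{(0,\,\rho)} = \rho|h_{i',c}|^2$ is the (scaled) inter-cell interference term. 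The only structural difference from the $c \rightarrow f$ case is that the self-interference coefficient $a$ in $\mathcal{B}_{i,c}^{(a,b)}$ is now zero, because the message of $\text{U}_f$ has already been removed by SIC; hence the denominator no longer contains $Z_{i,c}$, and in the numerator $\zeta_{i,f}$ is replaced by $\zeta_{i,c}$.

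Next I would invoke the distributional facts already in hand: $Z_{i,c} \sim \Gamma(k_{Z_{i,c}},\theta_{Z_{i,c}})$ by Lemma~\ref{lem:effective}, and $\mathcal{B}_{i,c}^{(0,\,\rho)}$ is Gamma by Lemma~\ref{lem:sum} specialized to $a=0$, $b=\rho$. Adding the deterministic unit offset and applying moment matching once more yields a Gamma RV $\mathcal{W}_{i,c} \sim \Gamma(k_{\mathcal{W}_{i,c}}, \theta_{\mathcal{W}_{i,c}})$, whose first two moments come straight from $\mathbb{E}[\mathcal{B}+1] = \mathbb{E}[\mathcal{B}]+1$ and $\mathbb{E}[(\mathcal{B}+1)^2] = \mathbb{E}[\mathcal{B}^2] + 2\,\mathbb{E}[\mathcal{B}] + 1$, i.e. the stated $\mu_{\mathcal{W}_{i,c}}$ and $\mu_{\mathcal{W}_{i,c}}^{(2)}$; the shape and scale then follow from the usual method-of-moments formulas. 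Note that here $\mathcal{W}_{i,c}$ depends only on the BS$_{i'}$-to-$\text{U}_{c_i}$ channel while $Z_{i,c}$ depends on the BS$_i$-to-$\text{U}_{c_i}$ and STAR-RIS links, so the two Gamma RVs are genuinely independent — cleaner than in Lemma~\ref{lem:beta_prime_cf}, where the shared $Z_{i,c}$ term makes the independence only approximate.

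Finally I would use the standard result that the ratio of two independent Gamma RVs is a generalized Beta-prime RV: $\gamma_{i,c} = \rho\zeta_{i,c} Z_{i,c}/\mathcal{W}_{i,c} \sim \beta'\!\big(k_{Z_{i,c}}, k_{\mathcal{W}_{i,c}}, 1, \rho\zeta_{i,c}\theta_{Z_{i,c}}/\theta_{\mathcal{W}_{i,c}}\big)$, and substitute these parameters into the Beta-prime PDF to recover~(\ref{eq:beta_prime_c}), with $\alpha_{i,c} = k_{Z_{i,c}}-1$ and $\nu_{i,c} = -(k_{Z_{i,c}}+k_{\mathcal{W}_{i,c}})$. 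Effectively the corollary is nothing more than the substitution $\zeta_{i,f}\mapsto\zeta_{i,c}$ and $\mathcal{W}_{i,c,f}\mapsto\mathcal{W}_{i,c}$ in Lemma~\ref{lem:beta_prime_cf}. There is no genuine obstacle; the only point requiring care is bookkeeping the normalization constants so that the scale parameter $q = \rho\zeta_{i,c}\theta_{Z_{i,c}}/\theta_{\mathcal{W}_{i,c}}$ lands correctly in the prefactor and in the argument scaling of the two power terms.
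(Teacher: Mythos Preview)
Your proposal is correct and matches the paper's approach exactly: the paper states the corollary without a separate proof, relying on the observation that $\gamma_{i,c}$ and $\gamma_{i,c\rightarrow f}$ differ only by the weighting constants, so the derivation in Lemma~\ref{lem:beta_prime_cf} carries over with the substitutions $\zeta_{i,f}\mapsto\zeta_{i,c}$ and $\mathcal{W}_{i,c,f}\mapsto\mathcal{W}_{i,c}=\mathcal{B}_{i,c}^{(0,\rho)}+1$. Your added remark that the numerator and denominator are now genuinely independent (no shared $Z_{i,c}$ term) is a nice bonus the paper does not make explicit.
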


\begin{lemma}
    \label{lem:beta_prime_f}
    The PDF of the SINR at $U_f$, i.e., $\gamma_{f}$, is given by
    \begin{equation}
        \label{eq:beta_prime_f}
        f_{\gamma_{f}}(x) = \frac{\theta_{\mathcal{W}_{f}} \left(\frac{x \theta_{\mathcal{W}_{f}}}{\theta_{\mathcal{V}_{f}}}\right)^{\alpha_{f}} \left(\frac{x \theta_{\mathcal{W}_{f}}}{\theta_{\mathcal{V}_f}}+1\right)^{\nu_{f}}}{\theta_{\mathcal{V}_f} B(k_{\mathcal{V}_f},k_{\mathcal{W}_f})},~x>0
    \end{equation}
    where $B\left(\cdot\:,\cdot\right)$ is the Euler Beta function, $\nu_{f}=-(k_{\mathcal{V}_{f}}+k_{\mathcal{W}_{f}})$, $k_{\mathcal{V}_{f}} = \frac{\mu_{\mathcal{V}_{f}}^2}{\mu_{\mathcal{V}_{f}}^{(2)} - \mu_{\mathcal{V}_{f}}^2}$, $\theta_{\mathcal{V}_{f}} = \frac{\mu_{\mathcal{V}_{f}}^{(2)} - \mu_{\mathcal{V}_{f}}^2}{\mu_{\mathcal{V}_{f}}}$, $k_{\mathcal{W}_{f}} = \frac{\mu_{\mathcal{W}_{f}}^2}{\mu_{\mathcal{W}_{f}}^{(2)} - \mu_{\mathcal{W}_{f}}^2}$, and $\theta_{\mathcal{W}_{f}} = \frac{\mu_{\mathcal{W}_{f}}^{(2)} - \mu_{\mathcal{W}_{f}}^2}{\mu_{\mathcal{W}_{f}}}$, with $\mu_{\mathcal{V}_{f}} = \rho (\zeta_{i,f} \mu_{Z_{i,f}} + \zeta_{i',f} \mu_{Z_{i',f}})$, $\mu_{\mathcal{V}_{f}}^{(2)} = \rho^2 (\zeta_{i,f}^{2} \mu_{Z_{i,f}}^{(2)}+ 2 \zeta_{i,f} \zeta_{i',f} \mu_{Z_{i,f}} \mu_{Z_{i',f}} + \zeta_{i',f}^{2} \mu_{Z_{i',f}}^{(2)})$, $\mu_{\mathcal{W}_{f}} = \rho(\zeta_{i,c} \mu_{Z_{i,f}} + \zeta_{i',c} \mu_{Z_{i',f}}) + 1$, and $\mu_{\mathcal{W}_{f}}^{(2)} = 2 \mu_{Z_{i,f}} (\rho^2 \zeta_{i,c} \zeta_{i',c} \mu_{Z_{i',f}}+\rho \zeta_{i',c}+\rho \zeta_{i,c})+\rho^2 \zeta_{i',c}^2 \mu_{Z_{i',f}}^2+\rho^2 \zeta_{i,c}^2\mu_{Z_{i,f}}^2+1$.
\end{lemma}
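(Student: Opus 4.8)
The plan is to mirror the argument behind \cref{lem:beta_prime_cf}: reduce $\gamma_f$ to a ratio of two (approximately) Gamma random variables, and then apply the Gamma-ratio/Beta-prime identity. First I would normalize \eqref{eq:gamma_f} by the noise power $\sigma^2$ and take equal transmit powers $P_1=P_2=P$; writing $\rho\triangleq P/\sigma^2$ and $Z_{i,f}=\abs{H_{i,f}}^2$ this gives $\gamma_f=\mathcal{V}_f/\mathcal{W}_f$, where $\mathcal{V}_f=\rho\big(\zeta_{i,f}Z_{i,f}+\zeta_{i',f}Z_{i',f}\big)$ aggregates the desired-signal contributions of the two coordinating BSs and $\mathcal{W}_f=\rho\big(\zeta_{i,c}Z_{i,f}+\zeta_{i',c}Z_{i',f}\big)+1$ aggregates the residual NOMA interference plus normalized noise. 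By \cref{lem:effective} both $Z_{i,f}$ and $Z_{i',f}$ are (approximately) Gamma distributed, and since they are built from the independent small-scale fading of links originating at the two distinct BSs, they are mutually independent.

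Second, I would fit $\mathcal{V}_f$ and $\mathcal{W}_f$ each by a single Gamma law via moment matching, exactly as in \cref{lem:sum}. The scaling property $X\sim\Gamma(k,\theta)\Rightarrow aX\sim\Gamma(k,a\theta)$ together with linearity of expectation yields the first moments $\mu_{\mathcal{V}_f}$ and $\mu_{\mathcal{W}_f}$ at once; the second moments come from the binomial expansion \eqref{eq:binom} of the square of each affine combination, in which the cross term factors (by independence of $Z_{i,f}$ and $Z_{i',f}$) as $2\rho^2\zeta_{i,c}\zeta_{i',c}\mu_{Z_{i,f}}\mu_{Z_{i',f}}$ in $\mathcal{W}_f$, and analogously with the $\zeta_{\cdot,f}$ factors in $\mathcal{V}_f$, while the squared terms contribute the second moments $\mu_{Z_{i,f}}^{(2)},\mu_{Z_{i',f}}^{(2)}$ supplied by \cref{lem:effective}. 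Matching $(k,\theta)$ to the first two moments then produces $k_{\mathcal{V}_f},\theta_{\mathcal{V}_f}$ and $k_{\mathcal{W}_f},\theta_{\mathcal{W}_f}$ as claimed.

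Third, treating the resulting $\mathcal{V}_f\sim\Gamma(k_{\mathcal{V}_f},\theta_{\mathcal{V}_f})$ and $\mathcal{W}_f\sim\Gamma(k_{\mathcal{W}_f},\theta_{\mathcal{W}_f})$ as independent — the same simplification already adopted in \cref{lem:beta_prime_cf} — I would invoke the fact that the ratio of two independent Gamma variates is (generalized) Beta-prime distributed, $\gamma_f\sim\beta'\big(k_{\mathcal{V}_f},k_{\mathcal{W}_f},1,\theta_{\mathcal{V}_f}/\theta_{\mathcal{W}_f}\big)$, and read off its density; with $\alpha_f=k_{\mathcal{V}_f}-1$ and $\nu_f=-(k_{\mathcal{V}_f}+k_{\mathcal{W}_f})$ this is precisely \eqref{eq:beta_prime_f}.

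The main obstacle is not any isolated calculation but the two layers of approximation that have to be tracked and, ideally, checked numerically: (i) a weighted sum of two Gammas with unequal scale parameters is not itself Gamma, so the fits for $\mathcal{V}_f$ and $\mathcal{W}_f$ keep only their first two moments; and (ii) $\mathcal{V}_f$ and $\mathcal{W}_f$ are genuinely correlated, since both contain $Z_{i,f}$ and $Z_{i',f}$, so the independence used in the Beta-prime step is a deliberate tractability assumption rather than an identity. The most error-prone bookkeeping lies in $\mu_{\mathcal{W}_f}^{(2)}$: one must expand $(\rho\zeta_{i,c}Z_{i,f}+\rho\zeta_{i',c}Z_{i',f}+1)^2$ in full, retaining the cross term between the two BS contributions and the two linear terms generated by the additive constant; once the moments of \cref{lem:effective,lem:sum} are substituted, the rest is routine.
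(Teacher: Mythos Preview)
Your proposal is correct and matches the paper's approach exactly: the paper's proof simply rewrites $\gamma_f=\mathcal{V}_f/\mathcal{W}_f$ with $\mathcal{V}_f=\rho\zeta_{i,f}Z_{i,f}+\rho\zeta_{i',f}Z_{i',f}$ and $\mathcal{W}_f=\rho\zeta_{i,c}Z_{i,f}+\rho\zeta_{i',c}Z_{i',f}+1$, notes that $Z_{i,f},Z_{i',f}$ are Gamma by \cref{lem:effective}, and then defers to the argument of \cref{lem:beta_prime_cf}. Your write-up is in fact more thorough than the paper's, since you spell out the moment computations and flag the two approximation layers (non-Gamma weighted sums and the correlation between $\mathcal{V}_f$ and $\mathcal{W}_f$) that the paper leaves implicit.
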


\begin{proof}
    The expression in~(\ref{eq:gamma_f}) can be rewritten as $\gamma_{f} = \frac{\mathcal{V}_{f}}{\mathcal{W}_{f}}$, where $\mathcal{V}_{f} = \rho \zeta_{i,f} Z_{i,f} + \rho \zeta_{i',f} Z_{i',f}$ and $\mathcal{W}_{f} = \rho \zeta_{i,c} Z_{i,f} + \rho \zeta_{i',c} Z_{i',f} + 1$, with $Z_{i,f}$ and $Z_{i',f}$ both being Gamma RVs. The rest of the proof is similar to that of Figure~\ref{lem:beta_prime_cf}.
\end{proof}

\begin{figure}[h!]
    \centering
    {\subfigure[The PDF of $\gamma_{i,c}$
        ]{\includegraphics[width=0.35\columnwidth]{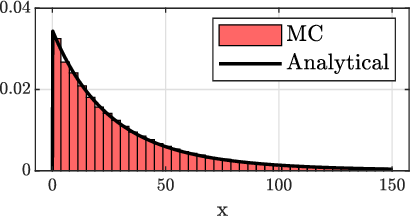}}
        \label{fig:SIM2A}}
    \subfigure[The CDF of $\gamma_{i,c}$
    ]{\includegraphics[width=0.35\columnwidth]{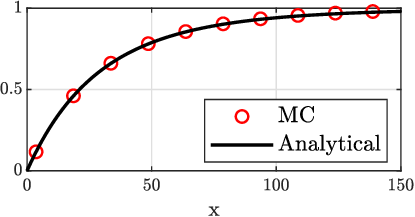}}
    \label{fig:SIM2E}
    \vspace{-5pt}
    \subfigure[
        The PDF of $\gamma_{i,c \rightarrow f}$
    ]{\includegraphics[width=0.35\columnwidth]{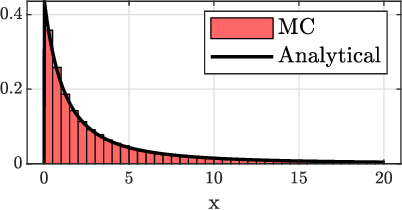}}
    \label{fig:SIM2B}
    \subfigure[
        The CDF of $\gamma_{i,c\rightarrow f}$
    ]{\includegraphics[width=0.35\columnwidth]{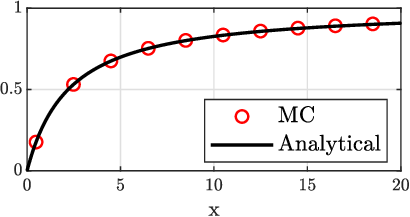}}
    \label{fig:SIM2F}
    \subfigure[
        The PDF of $\gamma_{f}$
    ]{\includegraphics[width=0.35\columnwidth]{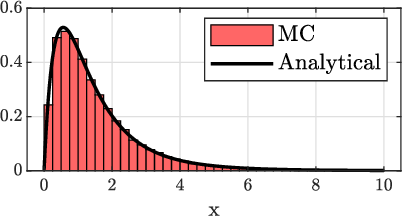}}
    \label{fig:SIM2C}
    \subfigure[
        The CDF of $\gamma_{f}$
    ]{\includegraphics[width=0.35\columnwidth]{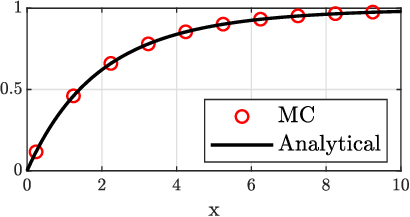}}
    \label{fig:SIM2G}
    \vspace{0.5em}
    \caption{The PDFs and CDFs of the SINRs at the center and edge user, with $K=34$ elements, $m_{i,u}=m_{i',u}=1$, and $m_{i,R}=m_{R,u}=2$, $\forall i \in \mathcal{I}, i' \in \mathcal{I} \setminus \{i\}$, $\forall u \in \mathcal{U}$.} \label{fig:pdfs} \vspace{-10pt}
\end{figure}

In Figure~\ref{fig:pdfs}, we compare the analytical and simulated PDFs and CDFs of SINRs of network users. The close alignment between the analytical approximations and Monte Carlo (MC) simulations affirms the accuracy of the derived expressions.
Additionally, we conduct the \textit{Kolmogorov-Smirnov (KS) goodness-of-fit test}, however, the details are omitted due to space constraints.

\section{Ergodic Rate (ER)}
The ER of the edge user is defined as
\begin{equation}
    \label{eq:ergodic_f}
    \mathcal{R}_{f} = \int_{0}^{\infty} \log_2(1 + x) f_{\gamma_{f}}(x) dx,
\end{equation}
where $f_{\gamma_{f}}(x)$ is the PDF of $\gamma_{f}$ in~(\ref{eq:beta_prime_f}). Noting that the PDF of $\gamma_{f}$ is a Beta prime distribution, the ER of the edge user is derived in the following theorem.
\begin{theorem}
    \label{thm:ergodic_f}
    In the proposed system model, the ER of the edge user is given by
    \begin{equation}
        \label{eq:ergodic_f_final}
        \mathcal{R}_{f} = \frac{1}{\ln(2)\Lambda_f}G_{3,3}^{3,2}\left(\frac{\theta_{\mathcal{W}_f}}{\theta_{\mathcal{V}_f}}\,\bigg\vert
        \begin{array}{c}
            0,1-k_{\mathcal{W}_f},1 \vspace*{-4pt} \\
            0,0,k_{\mathcal{V}_f} \vspace*{2.5pt}
        \end{array}
        \right),
    \end{equation}
    where $G_{p,q}^{m,n}\Big(z\,\Big\vert\begin{array}{c} a_1,\ldots,a_p \vspace*{-6.5pt} \\ b_1,\ldots,b_q \vspace*{2.5pt} \end{array}\Big)$ is the Meijer G-function, $\Lambda_f=B(k_{\mathcal{V}_f},k_{\mathcal{W}_f})\,\Gamma(\kappa_f)$ and $\kappa_f=k_{\mathcal{V}_f} +\,k_{\mathcal{W}_f}$.
\end{theorem}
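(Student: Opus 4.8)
The plan is to evaluate the integral in~(\ref{eq:ergodic_f}) in closed form by rewriting both non-algebraic pieces of the integrand as Meijer $G$-functions and then collapsing their product integral via a Mellin--Barnes identity. First I would use $\log_2(1+x)=\ln(1+x)/\ln 2$ and put the Beta prime density~(\ref{eq:beta_prime_f}) into its canonical shape, $f_{\gamma_f}(x)=\frac{(\theta_{\mathcal{W}_f}/\theta_{\mathcal{V}_f})^{k_{\mathcal{V}_f}}}{B(k_{\mathcal{V}_f},k_{\mathcal{W}_f})}\,x^{k_{\mathcal{V}_f}-1}\big(1+(\theta_{\mathcal{W}_f}/\theta_{\mathcal{V}_f})x\big)^{-\kappa_f}$ for $x>0$, reading off $\alpha_f=k_{\mathcal{V}_f}-1$, $\nu_f=-\kappa_f$, and $\kappa_f=k_{\mathcal{V}_f}+k_{\mathcal{W}_f}$. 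This is legitimate because, by Lemmas~\ref{lem:effective}--\ref{lem:beta_prime_f}, $\mathcal{V}_f$ and $\mathcal{W}_f$ are Gamma random variables with strictly positive shape parameters, so the density genuinely has the Beta prime form required.

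Next I would invoke the standard representations $\ln(1+x)=G_{2,2}^{1,2}\big(x\,\big\vert\,{}^{1,\,1}_{1,\,0}\big)$ and $(1+cx)^{-\kappa_f}=\frac{1}{\Gamma(\kappa_f)}\,G_{1,1}^{1,1}\big(cx\,\big\vert\,{}^{1-\kappa_f}_{\ \ 0}\big)$ with $c=\theta_{\mathcal{W}_f}/\theta_{\mathcal{V}_f}$, so that $\mathcal{R}_f$ becomes $\frac{c^{k_{\mathcal{V}_f}}}{\ln(2)\,B(k_{\mathcal{V}_f},k_{\mathcal{W}_f})\,\Gamma(\kappa_f)}$ times $\int_0^\infty x^{k_{\mathcal{V}_f}-1}\,G_{2,2}^{1,2}(x\,\vert\,\cdot)\,G_{1,1}^{1,1}(cx\,\vert\,\cdot)\,dx$. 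This integral is exactly of the type handled by the classical product-of-two-Meijer-$G$ table integral (equivalently, by Mellin--Parseval together with $\mathcal{M}[\ln(1+x)](s)=\Gamma(s)\Gamma(1-s)/s$ and the Mellin transform of $f_{\gamma_f}$, which carries a power of $c$), and it returns a single Meijer $G$-function of order $3\times 3$ whose argument is precisely $\theta_{\mathcal{W}_f}/\theta_{\mathcal{V}_f}$.

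The remaining task is bookkeeping: assemble the upper and lower parameter lists produced by the identity, apply the elementary $G$-function rules --- multiplication by a power (which absorbs the stray $c^{k_{\mathcal{V}_f}}$, leaving no power of $c$ in the prefactor), the inversion $z\mapsto 1/z$ with $(m,n)\leftrightarrow(n,m)$, and cancellation of a coincident upper/lower parameter --- to bring the result to the characteristic $\big({}^{\,0,\,1-k_{\mathcal{W}_f},\,1}_{\ \ 0,\,0,\,k_{\mathcal{V}_f}}\big)$, and collect the surviving Gamma and Beta factors into $\Lambda_f=B(k_{\mathcal{V}_f},k_{\mathcal{W}_f})\,\Gamma(\kappa_f)$ with $\kappa_f=k_{\mathcal{V}_f}+k_{\mathcal{W}_f}$. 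This yields~(\ref{eq:ergodic_f_final}).

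I expect the main obstacle to be exactly this last reduction: keeping the Mellin--Barnes contour admissible (the poles of $\Gamma(k_{\mathcal{V}_f}+s)$ and $\Gamma(k_{\mathcal{W}_f}+s)$ must stay separated from those of the $\ln(1+x)$ kernel, which is guaranteed by $k_{\mathcal{V}_f},k_{\mathcal{W}_f}>0$) and carrying the parameter arrays through the product formula carefully enough that the final object is the compact $G_{3,3}^{3,2}$ of~(\ref{eq:ergodic_f_final}) rather than an equivalent but unsimplified higher-order $G$-function; casting $\ln(1+x)$ and the density into $G$-form and identifying the requisite Mellin transforms are otherwise routine.
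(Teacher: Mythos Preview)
Your proposal is correct and follows essentially the same route as the paper: substitute the Beta prime density~(\ref{eq:beta_prime_f}) into~(\ref{eq:ergodic_f}), express $\ln(1+x)$ and $(1+cx)^{-\kappa_f}$ as Meijer $G$-functions (the paper cites~\cite[Eqs.~(10)--(11)]{adamchik1990algorithm} for these), and collapse the resulting product integral into a single $G_{3,3}^{3,2}$. The only cosmetic difference is that the paper absorbs the factor $x^{k_{\mathcal{V}_f}-1}$ directly into the $G_{1,1}^{1,1}$ via the shift rule (what it calls ``analytical continuation''), whereas you carry the power through the product formula and clean up afterwards; both lead to the same final expression.
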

\begin{proof}
    Substituting the PDF of $\gamma_{f}$ in~(\ref{eq:beta_prime_f}) into~(\ref{eq:ergodic_f}), we obtain
    \begin{equation}
        \label{eq:ergodic_f_1}
        \mathcal{R}_{f} = \frac{\theta_{\mathcal{W}_{f}}}{\theta_{\mathcal{V}_f} B(k_{\mathcal{V}_f},k_{\mathcal{W}_f})} \int_{0}^{\infty} \log_2(1 + x) \left(\frac{x \theta_{\mathcal{W}_{f}}}{\theta_{\mathcal{V}_f}}\right)^{\alpha_{f}} \times \left(1+\frac{x \theta_{\mathcal{W}_{f}}}{\theta_{\mathcal{V}_f}}\right)^{\nu_{f}} dx.
    \end{equation}
    From~\cite[Eq. (11)]{adamchik1990algorithm} and~\cite[Eq. (10)]{adamchik1990algorithm}, the logarithmic and power functions can be expressed in terms of a Meijer G-function, i.e., $\log_2(1 + z) = G_{2,2}^{1,2}\Big(z\,\Big\vert\begin{array}{c} 1,1 \vspace*{-6.5pt} \\ \vspace*{2.5pt} 1,0 \end{array}\Big) \Big/\ln(2)$ and $(1+z)^{\nu} = G_{1,1}^{1,1}\Big(z\,\Big\vert\begin{array}{c} 1-\nu \vspace*{-6.5pt} \\ \vspace*{2.5pt} 0 \end{array}\Big) \Big/\Gamma(\nu)$, respectively. Furthermore, using the analytical continuation of the Meijer G-function, the integral in~(\ref{eq:ergodic_f_1}) can be rewritten as
    \begin{align}
        \label{eq:ergodic_f_2}
        \mathcal{R}_{f} & = \frac{\theta_{\mathcal{W}_{f}}}{\ln(2)\theta_{\mathcal{V}_f} B(k_{\mathcal{V}_f},k_{\mathcal{W}_f})\Gamma(\kappa_f)} \int_{0}^{\infty} G_{2,2}^{1,2}\left(x\,\bigg\vert\begin{array}{c} 1,1 \vspace*{-4pt} \\ \vspace*{2.5pt} 1,0 \end{array}\right) \nonumber \\
                        & \quad \times G_{1,1}^{1,1}\left(\frac{x \theta_{\mathcal{W}_{f}}}{\theta_{\mathcal{V}_f}}\,\bigg\vert\begin{array}{c} -k_{\mathcal{W}_{f}} \vspace*{-4pt} \\ \vspace*{2.5pt} k_{\mathcal{V}_{f}} - 1 \end{array}\right) dx.
    \end{align}
    Finally, using the integral representation of the Meijer G-function, we obtain~(\ref{eq:ergodic_f_final}).
\end{proof}

To gain further insight, we express the high-SNR approximation for the ER of the edge user as
\begin{equation}
    \label{eq:ergodic_f_approx}
    \mathcal{R}_{f}^{\infty} \approx \frac{1}{\ln(2)\Lambda_{\tilde{f}}}G_{3,3}^{3,2}\left(\theta_{\tilde{f}}\,\bigg\vert
    \begin{array}{c}
        0,1-k_{\mathcal{\tilde{V}}_f},1 \vspace*{-4pt} \\
        0,0,k_{\mathcal{V}_f} \vspace*{2.5pt}
    \end{array}
    \right),
\end{equation}
where $\Lambda_{\tilde{f}}=B(k_{\mathcal{V}_f},k_{\mathcal{\tilde{V}}_f})\,\Gamma(\kappa_{\tilde{f}})$, $\kappa_{\tilde{f}}=k_{\mathcal{V}_f} +\,k_{\mathcal{\tilde{V}}_f}$, and $\theta_{\tilde{f}} = \frac{\theta_{\mathcal{\tilde{V}}_f}}{\theta_{\mathcal{V}_f}}$. The approximate parameters, denoted as $k_{\mathcal{\tilde{V}}_{f}}$ and $\theta_{\mathcal{\tilde{V}}_f}$, can be computed using the first and second moment, that is, $\mu_{\mathcal{\tilde{V}}_{f}} = \rho (\zeta_{i,c} \mu_{Z_{i,f}} + \zeta_{i',c} \mu_{Z_{i',f}})$ and $\mu_{\mathcal{\tilde{V}}_{f}}^{(2)} = \rho^2 (\zeta_{i,c}^{2} \mu_{Z_{i,f}}^{(2)}+ 2 \zeta_{i,c} \zeta_{i',c} \mu_{Z_{i,f}} \mu_{Z_{i',f}} + \zeta_{i',c}^{2} \mu_{Z_{i',f}}^{(2)})$, respectively.

Likewise, the ER of the center users is defined as
\begin{equation}
    \label{eq:ergodic_c}
    \mathcal{R}_{i,c} = \int_{0}^{\infty} \log_2(1 + x) f_{\gamma_{i,c}}(x) dx,
\end{equation}
where $f_{\gamma_{i,c}}(x)$ is the PDF of $\gamma_{i,c}$ in~(\ref{eq:beta_prime_c}). The ER of the center users can then be derived as follows.
\begin{theorem}
    \label{thm:ergodic_c}
    The ER for the center users is given by
    \begin{equation}
        \label{eq:ergodic_c_final}
        \mathcal{R}_{i,c} = \frac{1}{\ln(2)\Lambda_{i,c}}G_{3,3}^{3,2}\left(\frac{\theta_{\mathcal{W}_{i,c}}}{\rho\zeta_{i,c}\theta_{Z_{i,c}}}\,\bigg\vert
        \begin{array}{c}
            0,1-k_{\mathcal{W}_{i,c}},1 \vspace*{-4pt} \\
            0,0,k_{Z_{i,c}} \vspace*{2.5pt}
        \end{array}
        \right),
    \end{equation}
    where $\Lambda_{i,c}=B(k_{Z_{i,c}},k_{\mathcal{W}_{i,c}})\,\Gamma(\kappa_{i,c})$ and $\kappa_{i,c}=k_{Z_{i,c}} +\,k_{\mathcal{W}_{i,c}}$.
\end{theorem}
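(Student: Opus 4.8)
The plan is to mirror the derivation of Theorem~\ref{thm:ergodic_f}, exploiting the fact that, by the corollary following Lemma~\ref{lem:beta_prime_cf}, $\gamma_{i,c}$ also follows a Beta prime law whose PDF~(\ref{eq:beta_prime_c}) has exactly the same functional shape as~(\ref{eq:beta_prime_f}), only with the substitutions $k_{\mathcal{V}_f}\mapsto k_{Z_{i,c}}$, $k_{\mathcal{W}_f}\mapsto k_{\mathcal{W}_{i,c}}$, $\theta_{\mathcal{V}_f}\mapsto\rho\zeta_{i,c}\theta_{Z_{i,c}}$, and $\theta_{\mathcal{W}_f}\mapsto\theta_{\mathcal{W}_{i,c}}$. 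First I would substitute~(\ref{eq:beta_prime_c}) into~(\ref{eq:ergodic_c}) to obtain
\begin{equation}
    \mathcal{R}_{i,c} = \frac{\theta_{\mathcal{W}_{i,c}}}{\rho\zeta_{i,c}\theta_{Z_{i,c}} B(k_{Z_{i,c}},k_{\mathcal{W}_{i,c}})} \int_{0}^{\infty} \log_2(1+x)\left(\frac{x\theta_{\mathcal{W}_{i,c}}}{\rho\zeta_{i,c}\theta_{Z_{i,c}}}\right)^{\alpha_{i,c}}\left(1+\frac{x\theta_{\mathcal{W}_{i,c}}}{\rho\zeta_{i,c}\theta_{Z_{i,c}}}\right)^{\nu_{i,c}}dx,
\end{equation}
which is the center-user analogue of~(\ref{eq:ergodic_f_1}), with $\alpha_{i,c}=k_{Z_{i,c}}-1$ and $\nu_{i,c}=-(k_{Z_{i,c}}+k_{\mathcal{W}_{i,c}})$ as already fixed in the corollary.

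Next I would reuse the two Meijer G-function identities from~\cite{adamchik1990algorithm} invoked in the proof of Theorem~\ref{thm:ergodic_f}: one writing $\log_2(1+x)$ as a $G_{2,2}^{1,2}$ divided by $\ln 2$, and one that, after absorbing the monomial prefactor $x^{\alpha_{i,c}}$ by the shift property $z^{\sigma}G_{p,q}^{m,n}(z)$, turns $(x\theta_{\mathcal{W}_{i,c}}/(\rho\zeta_{i,c}\theta_{Z_{i,c}}))^{\alpha_{i,c}}(1+\cdot)^{\nu_{i,c}}$ into a $G_{1,1}^{1,1}$ with top entry $-k_{\mathcal{W}_{i,c}}$ and bottom entry $k_{Z_{i,c}}-1$, divided by $\Gamma(\kappa_{i,c})$ where $\kappa_{i,c}=k_{Z_{i,c}}+k_{\mathcal{W}_{i,c}}$. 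The integrand is then a product of a Meijer G-function in $x$ and one in the scaled variable $x\theta_{\mathcal{W}_{i,c}}/(\rho\zeta_{i,c}\theta_{Z_{i,c}})$, and applying the standard Mellin--Barnes formula for $\int_0^{\infty}$ of a product of two Meijer G-functions (the same integral representation used to close the proof of Theorem~\ref{thm:ergodic_f}) collapses them into a single $G_{3,3}^{3,2}$ evaluated at $\theta_{\mathcal{W}_{i,c}}/(\rho\zeta_{i,c}\theta_{Z_{i,c}})$ with parameter rows $\{0,1-k_{\mathcal{W}_{i,c}},1\}$ and $\{0,0,k_{Z_{i,c}}\}$; collecting the constants $1/B(k_{Z_{i,c}},k_{\mathcal{W}_{i,c}})$, $1/\Gamma(\kappa_{i,c})$ and $1/\ln 2$ into $\Lambda_{i,c}=B(k_{Z_{i,c}},k_{\mathcal{W}_{i,c}})\Gamma(\kappa_{i,c})$ gives~(\ref{eq:ergodic_c_final}).

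I do not anticipate a substantive obstacle, since the argument is structurally identical to the edge-user case; the only points requiring care are bookkeeping ones --- ensuring the scale factor $\rho\zeta_{i,c}\theta_{Z_{i,c}}$ lands inside the \emph{argument} of the final Meijer G-function rather than in its parameter rows, verifying that the monomial-absorption shift reproduces precisely the entries $-k_{\mathcal{W}_{i,c}}$ and $k_{Z_{i,c}}-1$, and noting that the convergence/contour conditions for the Mellin--Barnes representation hold automatically because $k_{Z_{i,c}}$ and $k_{\mathcal{W}_{i,c}}$ are positive Gamma shape parameters. A closing remark that the same template applies to both center users $\text{U}_{c_1},\text{U}_{c_2}$ by the symmetry in $i\in\mathcal{I}$ would complete the argument.
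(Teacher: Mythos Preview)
Your proposal is correct and matches the paper's own proof, which simply states that the argument closely follows that of Theorem~\ref{thm:ergodic_f}. In fact you have spelled out the parameter substitutions and bookkeeping in more detail than the paper does, so there is nothing to add.
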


\begin{proof}
    The proof closely follows that of Figure~\ref{thm:ergodic_f}.
\end{proof}

Similar insights can be derived for the ER of the center users as of the edge user, however, the details are omitted due to space constraints.

\section{Outage Probability (OP)}
The OP for the edge user is defined as the probability that the instantaneous SINR at the edge user to decode its own message is below a certain threshold, and can be expressed as $\mathcal{P}_{f} = \text{Pr}\,(\gamma_{f}<\gamma_{th_f})$, where $\gamma_{th_f} = 2^{\mathcal{R}_{th_f}} - 1$ is the target SINR with $\mathcal{R}_{th_f}$ being the target rate for edge users. As $\gamma_{f} \sim \beta'\big(k_{\mathcal{V}_f}, k_{\mathcal{W}_f}, 1, {\theta_{\mathcal{V}_f}}/{\theta_{\mathcal{W}_f}}\big)$, and the CDF of a Beta prime distribution is known to be an incomplete Beta function, the OP for the edge user can be expressed as
\begin{equation}
    \label{eq:outage_f}
    \mathcal{P}_{f} = \frac{\Gamma(k_{\mathcal{V}_f}+k_{\mathcal{W}_f})}{\Gamma(k_{\mathcal{V}_f}) \Gamma(k_{{\mathcal{W}_f}})} B_{\psi_f}(k_{\mathcal{V}_f},k_{\mathcal{W}_f}),
\end{equation}
where $\psi_f=\frac{\lambda_{th_f}\theta_{\mathcal{W}_{f}}}{\theta_{\mathcal{V}_{f}} + \lambda_{th_f}\theta_{\mathcal{W}_{f}}}$, and $B_{z}(\cdot\:,\cdot)$ is the incomplete Beta function. As the threshold ($\lambda_{th_f}$) tends towards infinity, the incomplete Beta function in~(\ref{eq:outage_f}) converges to the Euler Beta function, i.e., $B_{\psi_f}(k_{\mathcal{V}f},k_{\mathcal{W}f}) \rightarrow B(k_{\mathcal{V}f},k_{\mathcal{W}_f})$.

Similarly, with regards to center users, the OP is defined as the probability that the instantaneous SINR for decoding the user's own message or the message of the edge user falls below a certain threshold. Mathematically, it can be expressed as $\mathcal{P}_{i,c} \approx \text{Pr}\,(\gamma_{i,c \rightarrow f} < \gamma_{th_f}) + \text{Pr}\,(\gamma_{i,c \rightarrow f} > \gamma_{th_f}, \gamma_{c} < \gamma_{th_{c}})$, where $\gamma_{th_{c}} = 2^{\mathcal{R}_{th_{c}}} - 1$ is the target SINR with $\mathcal{R}_{th_{c}}$ being the target rate for center users, and the approximate symbol is due to the fact that the detection sequence is not of fully independent events. The first term in the sum expression, denoted here onwards as $\mathcal{P}_{i,c}^{(1)}$, takes on the same form as that of OP for the edge user in~(\ref{eq:outage_f}), except for parameters, i.e., $k_{\mathcal{V}_{f}} \rightarrow k_{Z_{i,c}}$, $k_{\mathcal{W}_{f}} \rightarrow k_{\mathcal{W}_{i,c,f}}$, $\theta_{\mathcal{V}_{f}} \rightarrow \rho \zeta_{i,f} \theta_{Z_{i,c}}$, and $\theta_{\mathcal{W}_{f}} \rightarrow \theta_{\mathcal{W}_{i,c,f}}$. Furthermore, let $\mathcal{P}_{i,c}^{(2)}=\text{Pr}\,(\gamma_{i,c \rightarrow f} > \gamma_{th_f}, \gamma_{c} < \gamma_{th_{c}})$, then, the second term in the sum expression becomes
\begin{equation}
    \label{eq:outage_c}
    \mathcal{P}_{i,c}^{(2)} = I_{\psi_{i,c\rightarrow f}}(k_{\mathcal{W}_{i,c,f}}, k_{Z_{i,c}})\:I_{\psi_{i,c}}(k_{Z_{i,c}}, k_{\mathcal{W}_{i,c}}),
\end{equation}
where $\psi_{i,c\rightarrow f} = {\frac{\rho \zeta_{i,f} \theta_{Z_{i,c}}}{\rho \zeta_{i,f} \theta_{Z_{i,c}} + \lambda_{th_f}\theta_{\mathcal{W}_{i,c,f}}}}$, $\psi_{i,c}=\frac{\lambda_{th_c}\theta_{\mathcal{W}_{i,c}}}{\rho \zeta_{i,c} \theta_{Z_{i,c}} + \lambda_{th_c}\theta_{\mathcal{W}_{i,c}}}$, and $I_{z}(\cdot\:,\cdot)$ is the regularized incomplete Beta function. The OP for the center user is then given by $\mathcal{P}_{i,c} \approx \mathcal{P}_{i,c}^{(1)} + \mathcal{P}_{i,c}^{(2)}$. Further improvement in approximation can be made by making use of the fact that outage performance cannot be better than that of the interference-free noise-only case. Therefore, the final expression of the OP can be expressed as the maximum of the two cases, i.e., $\overline{\mathcal{P}_{i,c}} \approx \max\{\mathcal{P}_{i,c},\,\text{Pr}\,(\gamma_{c} < \gamma_{th_{c}})\}$.

\vspace*{1.25em}
\LARGE{\textbf{Numerical Results}}
\normalsize

\begin{table}[b!]
    \centering
    \caption{Simulation Parameters}
    \label{tab:params}
    \resizebox{0.65\columnwidth}{!}{%
        \begin{tabular}{|c|c|}
            \hline
            \textbf{Parameter}                                  & \textbf{Value}                       \\
            \hline
            \hline
            Path-loss exponent of BS$_i$-(U$_{c_i}$, RIS) links & $\alpha_{i\rightarrow c}=3$          \\
            Path-loss exponent of BS$_i$-U$_f$ link             & $\alpha_{i\rightarrow f}=3.5$        \\
            Path-loss exponent BS$_i$-RIS links                 & $\alpha_{i\rightarrow R}=3$          \\
            Path-loss exponent of RIS-U$_{c_i}$ links           & $\alpha_{R\rightarrow c}=2.7$        \\
            Path-loss exponent of RIS-U$_f$ link                & $\alpha_{R\rightarrow f}=2.3$        \\
            Path-loss exponent of Interfering links             & $\alpha_{i\rightarrow c^\prime} = 4$ \\
            Rician factor of RIS-U$_{c_i}$ links                & $\kappa_{R\rightarrow c}=3$ dB       \\
            Rician factor of RIS-U$_f$ link                     & $\kappa_{R\rightarrow f}=4$ dB       \\ \hline
        \end{tabular}
    }
\end{table}

\section{Simulation Setup}
We consider an outdoor environment where the transmission bandwidth of the network is set to $B = 1$ MHz, and the power of AWGN is set to $\sigma^2 = -174 + 10\log_{10}(B)$ (dBm) with a noise figure $N_F$ of $12$ dB. For simplicity, we assume that the transmit powers of both BS$_1$ and BS$_2$ are identical, expressed as $P_1 = P_2 = P_t$. Moreover, the PA factors for U$_{1,c}$, U$_{2,c}$, and U$_f$ are fixed to $\zeta_{1,c}=\zeta_{2,c}=0.3$ and $\zeta_f=0.7$, respectively.

\begin{figure}[t!]
    \centering
    \includegraphics[width=0.65\columnwidth]{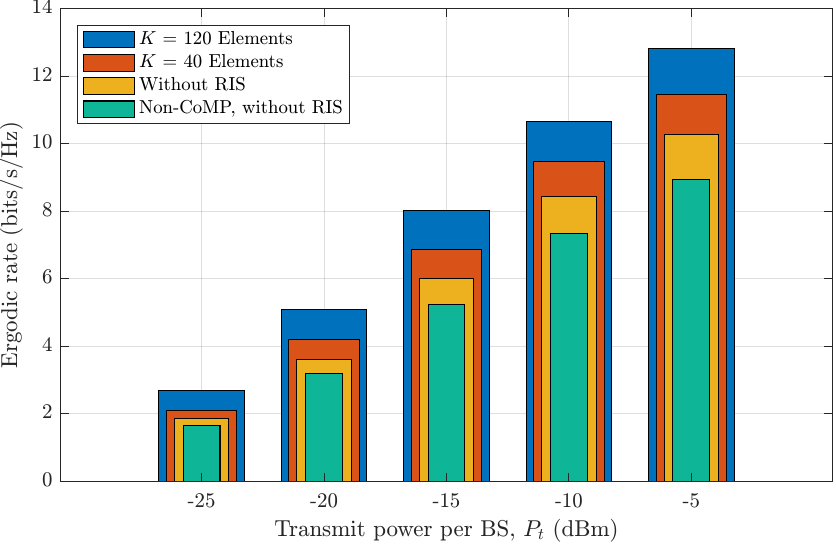}
    \vspace*{1em}
    \caption{The PDFs and CDFs of the SINRs at the center and edge user, with $K=34$ elements, $m_{i,u}=m_{i',u}=1$, and $m_{i,R}=m_{R,u}=2$, $\forall i \in \mathcal{I}, i' \in \mathcal{I} \setminus \{i\}$, $\forall u \in \mathcal{U}$.} \label{fig:ana_pdfs}
\end{figure}

In the three-dimensional Cartesian coordinate system, the locations of BS$_1$ and BS$_2$, each with a coverage radius of 60 m, are set to (-50m, 0m, 25m) and (50m, 0m, 25m) respectively. The STAR-RIS is strategically placed at the intersection of the two cells, near U$_f$, specifically, at the coordinates (0m, 25m, 5m). Additionally, the cellular users U$_{1,c}$, U$_{2,c}$, and U$_f$ are positioned at (-40m, 18m, 1m), (30m, 22m, 1m), and (0m, 35m, 1m), respectively. Some specific parameters used for simulation are outlined in Table~\ref{tab:params}.

\begin{figure}[b!]
    \centering
    \includegraphics[width=0.65\columnwidth]{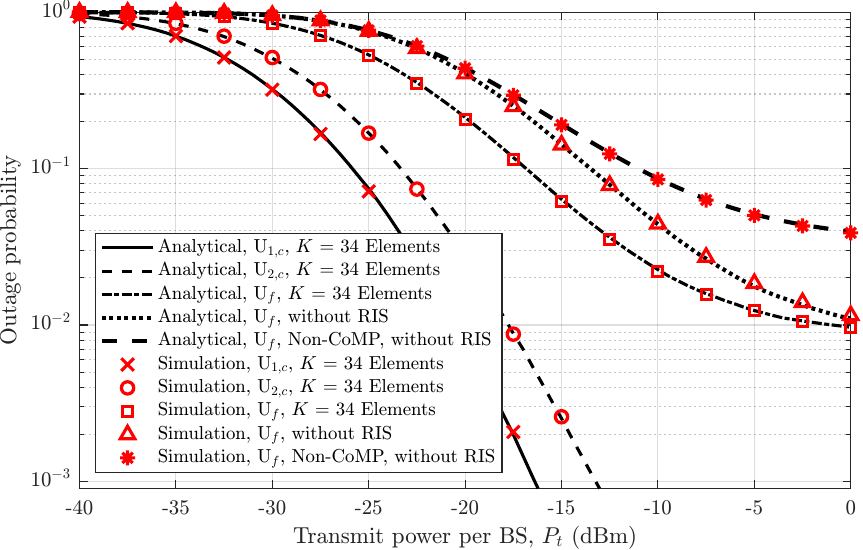}
    \vspace*{1em}
    \caption{Outage probability of network users versus $P_{t}$ for equal amplitude coefficients $(\beta^t=\beta^r)$, and element assignments $(\textbf{K}_R^1=\textbf{K}_R^2)$, when $K>0$.}
    \label{fig:ana_outage}
\end{figure}

\section{Improvements in Ergodic Rate}
The impact of the number of RIS elements on the ergodic rate of the network is shown in Figure~\ref{fig:ana_pdfs}. We observe that the ergodic rate increases with the number of RIS elements as the RIS elements amplify the channel links. Again, the STAR-RIS assisted CoMP-NOMA network outperforms other networks, due to  substantial diversity gains at U$_f$.

\section{Impact of RIS on Outage Probability}

In Figure~\ref{fig:ana_outage}, we assess the OP for all users under various transmit power levels ($P_t$) and system configurations, with fixed thresholds $\lambda_{th_f} = \lambda_{th_c} = 0$ dB. The STAR-RIS assisted CoMP-NOMA network demonstrates significant OP enhancements for U$_f$, attributed to the formation of vLOS paths. U$_{c_1}$ and U$_{c_2}$ show marginal improvements due to their predominant reliance on LOS paths from their corresponding BS. Notably, in the absence of CoMP, U$_f$ contends with elevated ICI and consistently high outage probabilities across all power levels. Moreover, we attribute the convergence of outage probability of U$_f$ towards the asymptotes to the problem of saturation inherent in NOMA.

\section{Exhaustive Search for Optimality Regions}
Finally, in Figure~\ref{fig:ana_exhaustive}, we demonstrate the effect of varying the RIS element assignments $(\textbf{K}_R^1,\,\textbf{K}_R^2)$ to BS$_1$ and BS$_2$, respectively, and the amplitude adjustments $(\beta_t,\,\beta_r)$ in an exhaustive fashion. Notably, the ergodic rate peaks when $\beta_t>\beta_r$ as a result of close proximity of the STAR-RIS to U$_f$, located within the transmission region of the RIS, thereby defining the optimal configuration for the network. Investigating optimization techniques for STAR-RIS resources can provide further insights to enhance spectral efficiency.

\begin{figure}[h!]
    \centering
    \includegraphics[width=0.65\columnwidth]{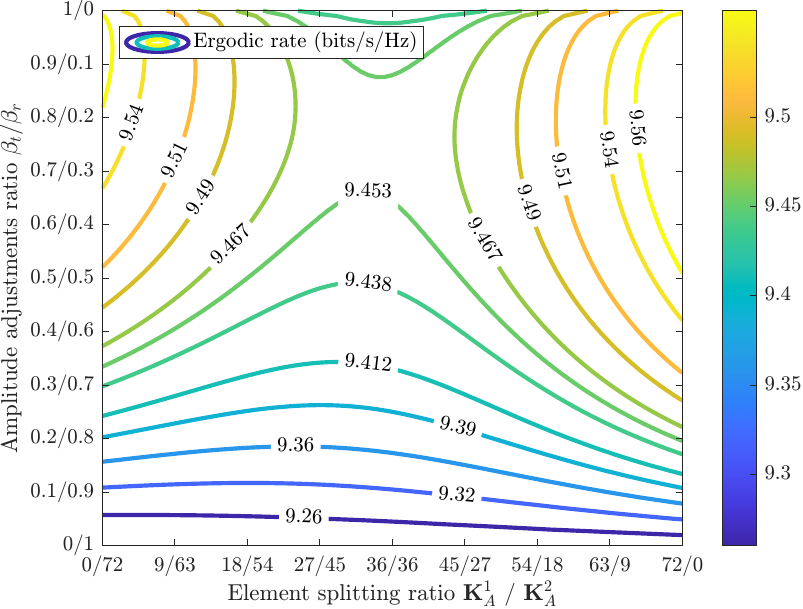}
    \vspace*{1em}
    \caption{Ergodic rate for varying RIS element assignments $(\textbf{K}_R^1,\,\textbf{K}_R^2)$ and amplitude adjustments $(\beta_t,\,\beta_r)$, with $P_t =-10$ dBm.}
    \label{fig:ana_exhaustive}
\end{figure}
\chapter{Energy Efficient Design for CoMP-NOMA Networks}
\label{chap:ee}

The convergence of CoMP and NOMA techniques has emerged as a promising solution to enhance both coverage and capacity within cellular networks. Recent research has further explored the integration of Reconfigurable Intelligent Surfaces (RIS) into CoMP-NOMA systems, demonstrating the potential for significant performance gains. By strategically deploying RIS, particularly at the cell edge, signal quality for edge users can be dramatically improved, leading to enhanced network coverage and capacity. However, despite these promising advancements, a crucial question remains: how do varying RIS configurations, the number of cooperating base stations (BSs), and the number of RIS elements influence the overall energy efficiency of the network? This chapter delves into this critical aspect, investigating the energy efficiency and passive beamforming (PBF) design within RIS-assisted CoMP-NOMA networks. We propose and analyze two distinct RIS configurations: Enhancement-only PBF (EO) and Enhancement \& Cancellation PBF (EC). The EO configuration focuses on optimizing RIS phases to solely enhance the desired signal quality for edge users, while the EC configuration aims to optimize RIS phases for both signal enhancement and interference suppression.

Through a comprehensive analysis, we explore the impact of these configurations, along with the number of cooperating BSs and RIS elements, on the network's energy efficiency. Additionally, we formulate a PBF design problem with the objective of maximizing energy efficiency through optimized RIS phase shifts. Our findings demonstrate that the integration of RIS into CoMP-NOMA networks offers substantial improvements in energy efficiency and overall network performance, highlighting its potential for the future of wireless communication systems.

\LARGE{\textbf{Towards Generalizability}}
\normalsize

\section{System Model}

We consider a downlink transmission scenario in an RIS-assisted multi-cell CoMP-NOMA network, as illustrated in Figure~\ref{fig:eff_system}. The network is comprised of $I$ cells, each modeled as a disk with radius $R_i$ and served by a BS located at its center, denoted by BS$_i$, where $i \in \mathcal{I} \triangleq \{1, 2, \ldots, I\}$. Each single-antenna BS utilizes two-user NOMA to serve user clusters, also equipped with single antennas, within its coverage area.\footnote{Due to processing complexity, latency of SIC at the receivers, and the practical limitations resulting in SIC error propagation, two-user NOMA pairs are considered in this work. Moreover, two-user configurations are of practical interest and have been standardized in 3GPP Release 15~\cite{ding2020simple, 10464446}.}

\begin{figure}[h!]
  \centering
  \includegraphics[width=0.65\columnwidth]{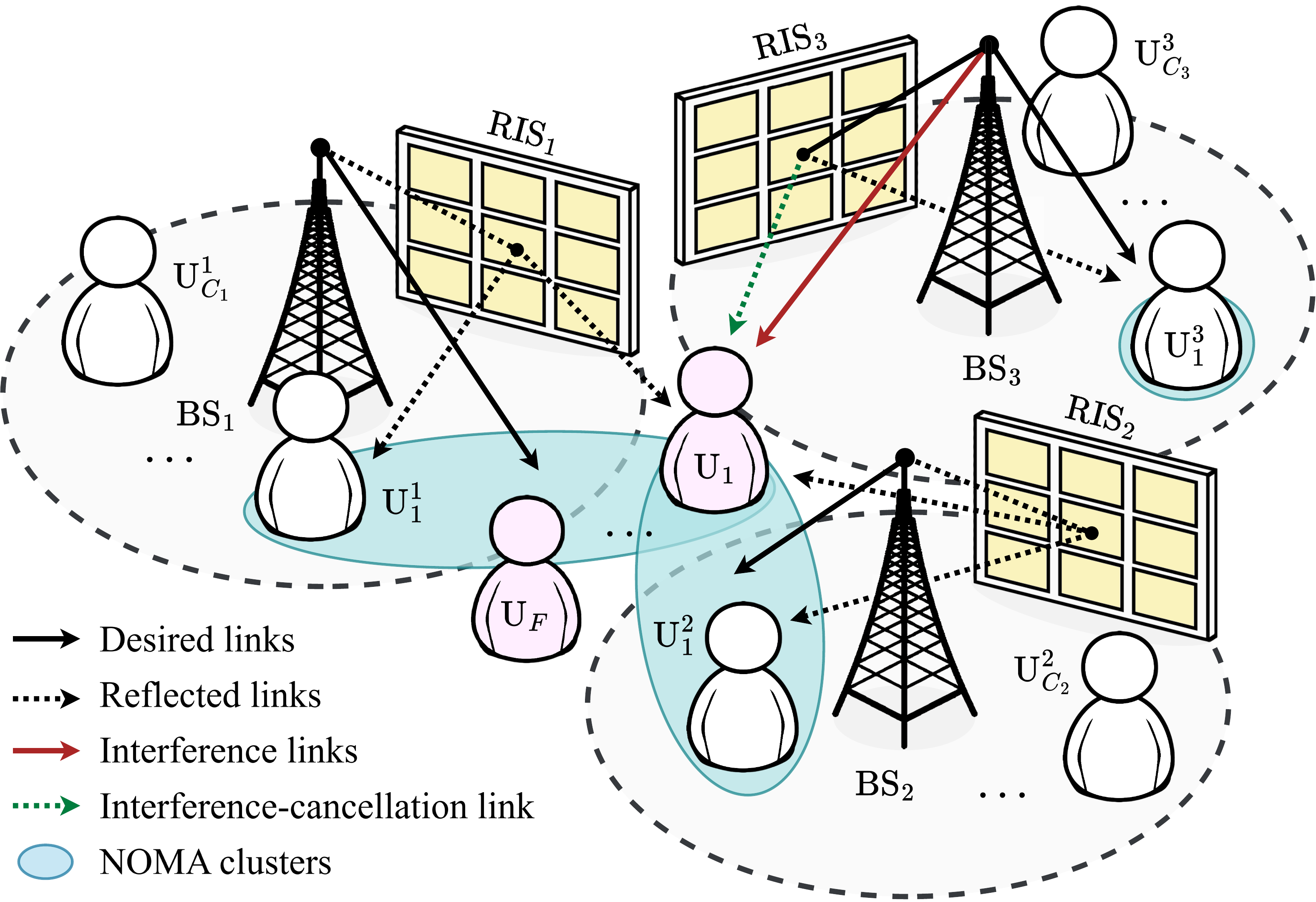}
  \vspace*{0.5em}
  \caption{An illustration of the RIS-assisted multi-cell CoMP-NOMA network.}
  \label{fig:eff_system}
\end{figure}

We define two user classes based on their location: the cell-center users and the edge users. The cell-center users reside within the disk of their associated cell, while the edge users fall outside it. Let $\mathcal{C}^{(i)} \triangleq \{1, 2, \ldots, C_i\}$ represent the set of indices for cell-center users associated with BS$_i$, and $\mathcal{F} \triangleq \{1, 2, \ldots, F\}$ represent the set of indices for the shared edge users cooperatively served by multiple BSs.
We denote cell-center users as U$^i_c$ (superscript $i$ indicates the serving BS) and edge users as U$^e_f$ (superscript $e$ represents the edge class), with $c \in \mathcal{C}^{(i)}$ and $f \in \mathcal{F}$. Furthermore, $\mathcal{U} = \bigcup_{i \in \mathcal{I}} \mathcal{C}^{(i)} \cup \mathcal{F}$ represents the index set for all users in the network. For ease of exposition and without loss of generality, we consider a single edge user and a single cell-center user per cell in this work, i.e., $C_i = F = 1$, $\forall i \in \mathcal{I}$.

For coordinated operation, the BSs are assumed to be interconnected via a high-speed backhaul network to a central processing unit (CPU).
BSs participating in CoMP are referred to as cooperative BSs and are denoted by the set $\mathcal{J} \triangleq \{1, 2, \ldots, J\}$, where $J \leq I$. To further improve the signal quality for edge users, each BS$_i$ is equipped with an RIS, denoted R$_i$, strategically placed at the cell edge.

\subsection{Channel Model}
We consider both large-scale and small-scale fading effects for each communication link in the system. Due to large propagation distances and the presence of numerous scatterers, the direct links, i.e., the channels between the BSs and the users, are modeled as Rayleigh fading channels. We denote the channel between BS$_i$ and user U$^n_u$ as $h^n_{i, u}$, where $n \in \{\mathcal{I}, e\}$ for cell-center or edge user, and $u \in \mathcal{U}$. Mathematically, this can be expressed as
\begin{equation}
  h^n_{i, u} = \sqrt{\frac{\rho_o}{PL(d^n_{i, u})}} v^n_{i, u},
\end{equation}
where $\rho_o$ is the reference path loss at $1$ m, $PL(d^n_{i, u})$ is the large-scale path loss modeled as $PL(d^n_{i, u}) = (d^n_{i, u})^{-\alpha_n}$, such that $\alpha_n$ is the path loss exponent, $d^n_{i, u}$ is the distance between BS$_i$ and U$^n_u$, and $v^n_{i, u} \in \mathbb{C}^{1\times 1}$ is the small-scale Rayleigh fading coefficient with zero mean and unit variance.

In contrast to the Rayleigh fading experienced on direct links, the channels between BSs and RIS are modeled as Rician fading channels due to the presence of a dominant line-of-sight (LoS) component. The channel between BS$_i$ and R$_i$ is denoted $\mathbf{h}_{i, \text{R}_i}$ and can be expressed as
\begin{equation}
  \mathbf{h}_{i, \text{R}_i} = \sqrt{\frac{\rho_o}{PL(d_{i, \text{R}_i})}} \left( \sqrt{\frac{\kappa}{1 + \kappa}} \mathbf{g}^{\text{LoS}}_{i, \text{R}_i} + \sqrt{\frac{1}{1 + \kappa}} \mathbf{g}^{\text{NLoS}}_{i, \text{R}_i} \right),
\end{equation}
where $\kappa$ is the Rician factor, $\mathbf{g}^{\text{LoS}}_{i, \text{R}_i} \in \mathbb{C}^{K\times 1}$ is the LoS channel vector given by
\begin{equation*}
  \mathbf{g}^{\text{LoS}}_{i, \text{R}_i} = \left[1, \ldots, e^{j(k-1)\pi\sin(\omega_i)}, \ldots, e^{j(K-1)\pi\sin(\omega_i)}\right]^T,
\end{equation*}
where $k \in \{1, 2, \ldots, K\}$ indexes elements of R$_i$ and $\omega_i$ is the angle of arrival (AoA) of the LoS component at R$_i$ while $\mathbf{g}^{\text{NLoS}}_{i, \text{R}_i} \in \mathbb{C}^{K\times 1}$ is the NLoS component which follows Rayleigh fading as previously described. The channel between RIS and edge users can be modeled similarly using Rician fading.

For the sake of simplicity, we assume perfect channel state information (CSI) at the BSs. While achieving perfect CSI in practice can be challenging, recent advancements in channel estimation techniques for RIS-assisted wireless networks have demonstrated the potential for accurate CSI acquisition with a reasonable overhead~\cite{zheng2022survey, wei2021channel, shtaiwi2021channel, zhou2022channel}.

\subsection{RIS Configuration}
Each BS$_i$ utilizes a passive RIS R$_i$ at the cell edge to improve signal quality or suppress interference for U$^e_f$, $\forall f \in \mathcal{F}$. RIS elements can independently adjust the phase shift of the incident signal and are assumed to be controlled by the CPU. Furthermore, the phase shift (PS) matrix associated with R$_i$ is expressed as $\mathbf{\Theta}_i = \text{diag}(l_1 e^{j\theta^i_1}, l_2 e^{j\theta^i_2}, \ldots, l_K e^{j\theta^i_K})$, where $l_k \in (0, 1]$ is the amplitude adjustment factor and $\theta^i_k \in [-\pi, \pi)$ is the phase shift of the $k$-th element. In this work, we assume an ideal RIS with perfect phase control and all reflection elements having a unit amplitude ($l_k = 1, \forall k$).\footnote{While practical RIS implementations are subject to phase quantization errors, this work prioritizes establishing a proof-of-concept for the benefits of RIS-assisted CoMP-NOMA in terms of energy efficiency and performance.
}

\section{Performance Analysis}

\subsection{Rate and Outage Probability Analysis}
According to the NOMA principle, the BSs serve multiple users simultaneously by superimposing their signals. Specifically, the signal transmitted by BS$_i$ can be expressed as $x_i = \sqrt{\zeta_i P_i} s_e + \sqrt{(1 - \zeta_i) P_i} s_{c_i}$, where $P_i$ is the transmit power of BS$_i$, $s_e$ and $s_{c_i}$ are the signals intended for U$^e_f$ and U$^i_c$, respectively, and $\zeta_i$ is the power allocation factor for the edge users. To ensure successful decoding by U$^i_c$, i.e., the strong user, $\zeta_i$ is constrained to $0.5 < \zeta_i < 1$~\cite{obeed2020user}.

The received signal at U$^e_f$ can be written as
\begin{equation}
  y^e_f = \underbrace{\sum_{j \in \mathcal{J}} H^e_{j, f} \sqrt{\zeta_j P_j} s_e}_{\text{CoMP gain}} + \underbrace{\sum_{j \in \mathcal{J}} H^e_{j, f} \sqrt{(1 - \zeta_j) P_j} s_{c_j}}_{\text{intra-cluster interference}} + \underbrace{\sum_{m \in \mathcal{I} \setminus \mathcal{J}} H^e_{m, f} x_i}_{\text{inter-cell interference}} + n_o,
\end{equation}
where $H^e_{j, f} = h^e_{j, f} + \mathbf{h}^T_{\text{R}_j, f} \mathbf{\Theta}_j \mathbf{h}_{j, \text{R}_j}$ and $H^e_{m, f} = h^e_{m, f} + \mathbf{h}^T_{\text{R}_i, f} \mathbf{\Theta}_i \mathbf{h}_{m, \text{R}_i}$ are the effective channels between BS$_j$ and U$^e_f$ and between BS$_i$ and U$^e_f$, respectively, and $n_o \sim \mathcal{CN}(0, \sigma^2)$ is the additive white Gaussian noise. To minimize synchronization overhead, we employ non-coherent JT-CoMP, where the edge user U$^e_f$ combines signals from cooperative BSs without CSI exchange~\cite{tanbourgi2014tractable}. Therefore, the signal-to-interference-plus-noise ratio (SINR) at U$^e_f$ can be expressed as
\begin{equation}
  \gamma^e_f = \frac{\sum_{j \in \mathcal{J}} \zeta_j P_j |H^e_{j, f}|^2}{\sum_{j \in \mathcal{J}} (1 - \zeta_j) P_j |H^e_{j, f}|^2 + Y_e + \sigma^2},
\end{equation}
where $Y_f = \sum_{m \in \mathcal{I} \setminus \mathcal{J}} P_m |H^e_{m, f}|^2$ represents the inter-cell interference term.

On the other hand, the received signal at U$^i_c$ is given by
\begin{multline}
  y^i_c = h^i_{i, c} \sqrt{(1 - \zeta_i) P_i} s_{c_i} + \sum_{j \in \mathcal{J}} h^i_{j, c} \sqrt{\zeta_j P_j} s_e \\
  + \sum_{j \in \mathcal{J}, j \neq i} h^i_{j, c} \sqrt{(1 - \zeta_j) P_j} s_{c_j} + \sum_{m \in \mathcal{I} \setminus \mathcal{J}} h^i_{m, c} x_m + n_o.
\end{multline}
Based on the SIC principle, the SINR at U$^i_c$ for decoding the signal intended for U$^e_f$ is given by
\begin{equation}
  \gamma^i_{c\rightarrow f} = \frac{\sum_{j \in \mathcal{J}} \zeta_j P_j |h^i_{j, c}|^2}{\sum_{j \in \mathcal{J}} (1 - \zeta_j) P_j |h^i_{j, c}|^2 + Y_i + \sigma^2},
\end{equation}
and the SINR at U$^i_c$ for decoding its own signal is
\begin{equation}
  \gamma^i_c = \frac{(1 - \zeta_i) P_i |h^i_{i, c}|^2}{\sum_{j \in \mathcal{J}, j \neq i} (1 - \zeta_j) P_j |h^i_{j, c}|^2 + Y_i + \sigma^2}.
\end{equation}
It is worth noting that due to their placement at the cell edge, the impact of RIS on the channels experienced by U$^i_c$ is negligible. Thus, the SINR expressions for U$^i_c$ only consider the direct links between the BSs and the users. Finally, the achievable rates for U$^e_f$ and U$^i_c$ can be calculated as
\begin{align}
  \mathcal{R}^e_f & = \log_2(1 + \gamma^e_f), 
\end{align}
and
\begin{align}
  \mathcal{R}^i_{c} & = \log_2(1 + \gamma^i_{c}).
\end{align}

An outage event occurs for cell-center users if U$^i_c$ fails to decode $s_e$ or is capable of decoding $s_e$ but fails to decode $s_{c_i}$. The corresponding outage probability can be defined as $\mathbb{P}^i_c = 1 - \mathbb{P}(\gamma^i_{c\rightarrow f} > \hat{\gamma_f}, \gamma^i_c > \hat{\gamma_c})$, where $\hat{\gamma_f}=2^{\mathcal{R}^e_{th}}-1$ and $\hat{\gamma_c}=2^{\mathcal{R}^i_{th}}-1$ represent the target SINR thresholds for U$^e_f$ and U$^i_c$, respectively, corresponding to their target rates $\mathcal{R}^e_{th}$ and $\mathcal{R}^i_{th}$. For edge user U$^e_f$, an outage occurs if it fails to decode $s_e$, and the outage probability can be formulated as
\begin{align}
  \mathbb{P}^e_f = \mathbb{P}(\gamma^e_f < \hat{\gamma_f}).
\end{align}

\subsection{Energy Efficiency}
We define the energy efficiency of the network as the ratio of the outage sum rate to the total power consumption. This metric essentially assesses whether the increase in outage sum rate resulting from CoMP outweighs the corresponding increase in total power consumption. Mathematically, the energy efficiency is formulated as
\begin{equation}
  \label{eq:eff}
  \eta_\text{EE} = \sum_{i \in \mathcal{I}} \frac{\sum_{c \in \mathcal{C}^{(i)}} \mathcal{R}^i_{\text{out}_c}}{\frac{1}{\lambda}P_i + P_Q} + \sum_{j \in \mathcal{J}} \frac{\sum_{f \in \mathcal{F}} \mathcal{R}^e_{\text{out}_f}}{\frac{1}{\lambda}P_j + P_Q + P_{\text{R}}},
\end{equation}
where $\mathcal{R}^i_{\text{out}_c} = (1 - \mathbb{P}^i_c) \mathcal{R}^i_{c}$ and $\mathcal{R}^e_{\text{out}_f} = (1 - \mathbb{P}^e_f) \mathcal{R}^e_f$ represent the effective outage rate for U$^i_c$ and U$^e_f$, respectively. Moreover, $P_Q$ represents the static power consumption of a cell, $P_{\text{R}}=KP_\text{ele}$ denotes the total power consumption of R$_i$, where $P_\text{ele}$ is the power consumption of $k$-th element, and $\lambda \in (0, 1]$ signifies the power amplifier efficiency.

  \section{Passive Beamforming Design}
  The primary objective of the PBF design is to optimize the energy efficiency of the network by strategically adjusting the RIS phase shifts. This involves configuring R$_j$ associated with cooperative BS$_j$ to enhance the signal quality for U$^e_f$, while simultaneously utilizing R$_k$, $\forall m \in \mathcal{I} \setminus \mathcal{J}$, to suppress the inter-cell interference experienced by U$^e_f$. The optimization problem is formulated as
  \begin{maxi!}|s|
  {\mathbf{\Phi}}{\text{energy efficiency}~\eta_\text{EE}~\text{in}~\eqref{eq:eff}}{\label{eq:opt}}{}
  \addConstraint{\theta^j_k}{\in [-\pi, \pi),}{\; \forall k \in [1, K],\, j \in \mathcal{J}} \label{eq:coop}
  \addConstraint{\theta^m_k}{\in [-\pi, \pi),}{\; \forall k \in [1, K],\, m \in \mathcal{I} \setminus \mathcal{J}} \label{eq:interf}
  \end{maxi!}
  where $\mathbf{\Phi} = \{\mathbf{\Theta}_1, \mathbf{\Theta}_2, \ldots, \mathbf{\Theta}_I\}$ represents the set of phase shift matrices for all RIS.

  For cooperative BSs, the phase shifts of R$_i$ in constraint~\eqref{eq:coop} are optimized to maximize the effective channel gain $|H^e_{j, f}|^2$. From~\cite{wu2019intelligent}, the optimal phase shift for each element of R$_j$ is given by
  \begin{equation}
    \theta^j_k = \arg(h^e_{j, f}) - \arg(h_{\text{R}_j, f}^{(k)} \cdot h_{j, \text{R}_j}^{(k)}),
  \end{equation}
  where $\arg(\cdot)$ denotes the argument function, and $h_{\text{R}_j, f}^{(k)}$ and $h_{j, \text{R}_j}^{(k)}$ represent the $k$-th elements of the channel vectors $\mathbf{h}_{\text{R}_j, f}$ and $\mathbf{h}_{j, \text{R}_j}$, respectively.

  In contrast to signal enhancement, the phase shifts of R$_m$ for non-cooperating BSs in constraint~\eqref{eq:interf} are adjusted to minimize the effective channel gain $|H^e_{m, f}|^2$, thus mitigating interference. Thus, the optimal phase shift for each element of R$_m$ can be calculated as
  \begin{equation}
    \theta^m_k = \text{mod}\left[\phi^m_k + \pi,\: 2\pi\right] - \pi,
  \end{equation}
  where $\phi^m_k = \arg(h^e_{m, f}) - \arg(h_{\text{R}_m, f}^{(k)} \cdot h_{m, \text{R}_m}^{(k)})$ and $\text{mod}[\cdot]$ denotes the modulo operation. Adding $\pi$ ensures a 180° phase shift, effectively suppressing interference.

  It should be noted that incorporating cell-center users into the optimization problem, or increasing the number of users per cell, significantly increases the complexity of the beamforming design. To maintain tractability, this work focuses on a single edge user and a single cell-center user per cell.

  \vspace*{1.25em}
  \LARGE{\textbf{Numerical Results}}
  \normalsize

  \section{Simulation Setup}
  The performance of the proposed design is evaluated in a network consisting of $I = 6$ cells, each with a radius of $R_i = 75$ m. All BSs transmit at the same power level, i.e., $P_i = P_t$ dBm, $\forall i \in \mathcal{I}$, and the power allocation factor for edge users is set to $\zeta_i = 0.7$. The RIS are positioned at the cell edge, resulting in a distance of $d_{i, \text{R}_i} = 75$ m between BS$_i$ and R$_i$. The distances between BSs and users are configured as follows: $d^i_{i, c} = 50$ m, $d^e_{i, f} = 150$ m, and $d^i_{k, c} = 200$ m for $k \neq i$, where $i \in \mathcal{I}$, $c \in \mathcal{C}^{(i)}$, and $f \in \mathcal{F}$. Similarly, the distance between RIS and edge users is set to $d^e_{i, f} = 75$ m.

  The path loss exponents are set to $\alpha_{\text{R}} = 2.7$, $\alpha_i = 3$, $\alpha_e = 3.5$, and $\alpha_{\text{ici}} = 4$ for RIS, BS, edge user, and inter-cell interference links, respectively. The network operates at a carrier frequency of $f_c = 2.4$ GHz, and the noise power is defined as $\sigma^2 = -174 + 10 \log_{10}(B)$, with a bandwidth $B = 10$ MHz. Table~\ref{tab:eff_params} summarizes the remaining simulation parameters.

  \begin{table}[h!]
    \centering
    \caption{Simulation Parameters}
    \label{tab:eff_params}
    \resizebox{0.65\columnwidth}{!}{%
      \begin{tabular}{|c|c|}
        \hline
        \textbf{Parameter}                                    & \textbf{Value} \\
        \hline
        \hline
        Power amplifier efficiency $\lambda$                  & $0.4$          \\
        Reference path loss $\rho_o$                          & $-30$ dBm      \\
        Static power consumption $P_Q$                        & $30$ dBm       \\
        Power dissipated at $k$-th RIS element $P_\text{ele}$ & $5$ dBm        \\
        Target cell-center user rate $\mathcal{R}^i_{th}$     & $1$ bps/Hz     \\
        Target edge user rate $\mathcal{R}^e_{th}$            & $0.5$ bps/Hz   \\
        Rician factor $\kappa$                                & $3$ dB         \\
        \# of channel realizations $N_{\text{mc}}$            & $10^4$         \\
        \hline
      \end{tabular}
    }
  \end{table}

  \begin{figure}[t!]
    \centering
    \includegraphics[width=0.65\columnwidth]{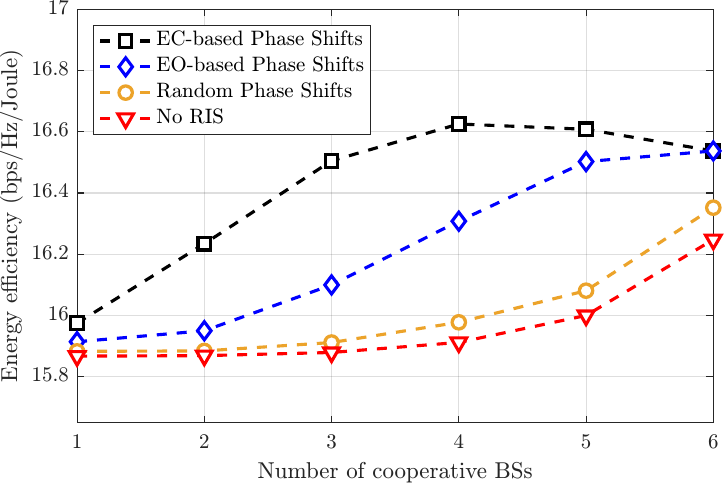}
    \vspace*{0.5em}
    \caption{Energy efficiency vs. number of cooperative BSs $J$ at $P_t = 0$ dBm and under various RIS configurations with $K = 70$ elements.} \label{fig:eff_eff}
  \end{figure}

  \section{Impact of Cooperation on Energy Efficiency}
  Figure~\ref{fig:eff_eff} illustrates the energy efficiency of the network as a function of the number of cooperating BSs $J$ for various RIS configurations. The investigated scenarios include: no RIS, RIS with random phase shifts, EO-based RIS, and EC-based RIS. We observe that for the EO and EC configurations, the energy efficiency initially increases with a growing number of cooperating BSs, reaching a peak at $J = 4$. Beyond this point, the efficiency experiences a decline due to the saturation of both the achievable rate and the outage probability, leading to diminishing returns. It should be noted that the EC configuration consistently outperforms other scenarios $\forall J$, except when $J = I$. In this particular case where all BSs are cooperative, interference cancellation becomes redundant, leading to equivalent performance between the EO and EC configurations.

  \begin{figure}[t!]
    \centering
    \includegraphics[width=0.65\columnwidth]{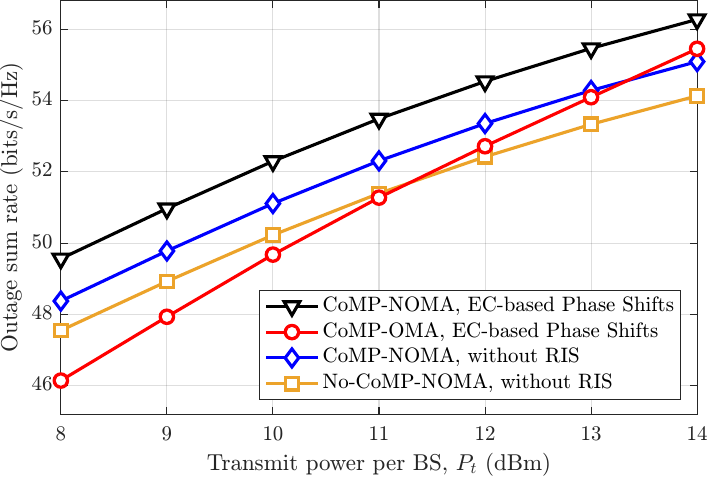}
    \vspace*{0.5em}
    \caption{Outage sum rate vs. transmit power $P_t$ with $J = 4$ cooperative BSs and $K = 70$ elements.}
    \label{fig:eff_osum}
  \end{figure}

  \begin{figure}[h!]
    \centering
    \includegraphics[width=0.65\columnwidth]{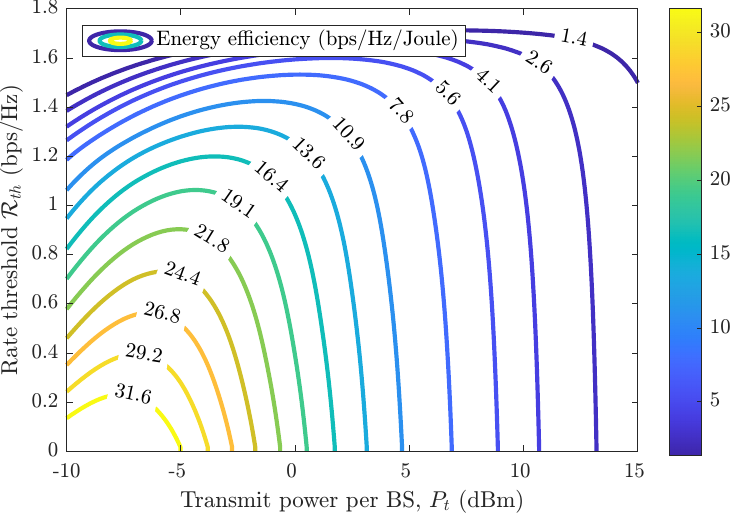}
    \vspace*{0.5em}
    \caption{Energy efficiency contour plot for varying transmit power $P_t$ and rate threshold $\mathcal{R}_{th}$ with $J = 4$ cooperative BSs and $K = 70$ elements.}
    \label{fig:eff_exhaustive}
  \end{figure}

  \section{Impact of RIS Elements on Energy Efficiency}
  The impact of the number of RIS elements $K$ on the energy efficiency is depicted in Figure~\ref{fig:eff_bar}. A clear trend emerges across all configurations: energy efficiency improves as the number of RIS elements increases, reaching a peak at $K = 90$. Beyond this point, the gains in outage sum rate are counterbalanced by the increased power consumption associated with additional elements, resulting in a decline in energy efficiency. Notably, the EC configuration consistently outperforms the EO configuration across all values of $K$. Furthermore, the No-CoMP scenario exhibits the lowest energy efficiency for all values of $K$, underscoring the significance of CoMP in enhancing overall network performance.

  \begin{figure}[t!]
    \centering
    \includegraphics[width=0.65\columnwidth]{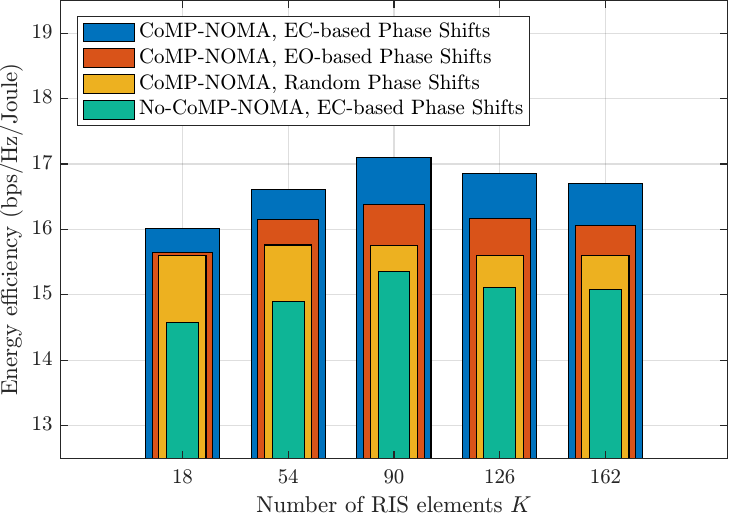}
    \vspace*{0.5em}
    \caption{Energy efficiency vs. number of RIS elements $K$ at $P_t = 0$ dBm and $J = 4$ cooperative BSs.}
    \label{fig:eff_bar}
  \end{figure}

  \begin{figure}[h!]
    \centering
    \includegraphics[width=0.65\columnwidth]{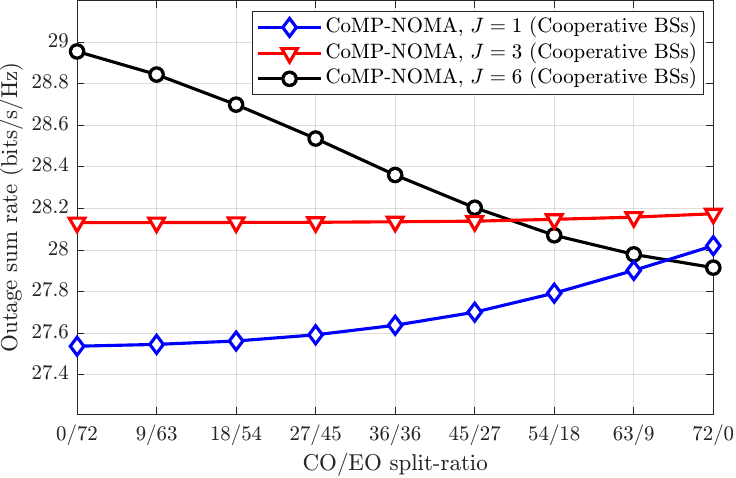}
    \vspace*{0.5em}
    \caption{Outage sum rate vs. CO/EO split-ratio for different cooperative BSs $J$ with $P_t = 0$ dBm with $K = 72$ elements.} \label{fig:eff_split}
  \end{figure}

  \section{Outage Sum Rate}
  The analysis of outage sum rate concerning transmit power $P_t$ is illustrated in Figure~\ref{fig:eff_osum}. To provide a comparative context, orthogonal multiple-access (OMA) is included in the evaluation. As anticipated, the outage sum rate demonstrates an increasing trend with a rise in transmit power across all considered configurations. Notably, the implementation of CoMP-NOMA with EC-based RIS consistently surpasses all other scenarios, including OMA, throughout the entire range of $P_t$ values. An interesting observation is that the CoMP-OMA configuration exhibits steeper increases in outage sum rate per unit of power compared to the CoMP-NOMA configurations. This characteristic can be attributed to the more pronounced rate saturation effect in NOMA networks, wherein the achievable rate is constrained by the SIC capabilities of the users.

  \section{Rate Threshold and Energy Efficiency}
  Figure~\ref{fig:eff_exhaustive} presents a contour plot illustrating the relationship between energy efficiency, transmit power $P_t$, and joint rate threshold $\mathcal{R}_{th}$, where we set $\mathcal{R}^i_{th} = \mathcal{R}^e_{th} = \mathcal{R}_{th}$ for simplicity. Although the highest levels of energy efficiency are found in regions with low $P_t$ and $\mathcal{R}_{th}$, these operating points are often impractical due to minimum user rate requirements. As we move diagonally across the plot, a clear trend emerges: energy efficiency decreases as the rate threshold increases. This inverse relationship highlights the inherent trade-off between achieving higher data rates and maintaining energy efficiency. To meet the demand for higher $\mathcal{R}_{th}$, the network necessitates higher $P_t$, leading to increased power consumption and subsequently, reduced energy efficiency.

  \section{Impact of PBF Design on Sum Rate}
  Lastly, to summarize the trade-off between PBF designs, we analyze the outage sum rate as a function of the ratio of Cancellation-only (CO) and EO elements in Figure~\ref{fig:eff_split}. As expected, increasing CO elements led to a decrease in outage sum rate for the fully cooperative scenario, $J=I$. However, with half of the BSs cooperating, the outage sum rate remained relatively stable regardless of the CO/EO ratio, demonstrating a balanced contribution from both designs. Interestingly, the highest gains were observed when $J=1$, with all elements employing the CO scheme. This emphasizes the critical role of interference cancellation in such scenarios. Moreover, these findings highlight the importance of optimizing network parameters through a robust optimization framework, a direction we intend to explore in the future.

\chapter{Deep Reinforcement Learning for Intelligent NOMA Networks}
\label{chap:drl}
While the previous chapter explored the potential of static RIS configurations in CoMP-NOMA networks, this chapter delves into the dynamic adaptation of these networks through the integration of Unmanned Aerial Vehicles (UAVs) and Aerial RIS (ARIS). This combination of technologies presents a unique opportunity to further enhance network performance by dynamically adjusting the propagation environment, enabling flexible coverage, and optimizing resource allocation.
Existing research has demonstrated the potential of incorporating UAVs within RIS, CoMP, and NOMA frameworks, showcasing notable improvements in various performance metrics as evidenced in~\cite{zhao2022ris, budhiraja2022energy}.

However, many of these studies rely on static RIS deployments, limiting the adaptability and flexibility of the network. Although some works have investigated ARIS-assisted CoMP-NOMA networks and optimized UAV trajectory and RIS phase shifts for sum rate maximization~\cite{lv2023uav}, the optimization methods employed, such as double-layer alternating optimization, often face scalability issues due to their complexity and potential convergence challenges.

To overcome these limitations, this chapter introduces a novel approach based on Deep Reinforcement Learning (DRL) for optimizing ARIS-assisted CoMP-NOMA networks. Our DRL framework jointly optimizes UAV trajectory, RIS phase shifts, and NOMA power control, aiming to maximize the network sum rate while adhering to user Quality of Service (QoS) constraints. Through extensive simulations, we evaluate the effectiveness of our proposed approach, assess the convergence behavior of MO-PPO, and highlight the significant benefits of integrating CoMP-NOMA and RIS within UAV-assisted networks.

\LARGE{\textbf{Towards Optimization}}
\normalsize

\section{System Model}
As shown in Figure~\ref{fig:rl_system}, we consider a multi-cell CoMP-NOMA network assisted by a UAV-mounted RIS in a downlink transmission scenario. The network consists of $I$ cells, each modeled as a circular disk of radius $R$ with a single-antenna BS at its center, denoted as BS$_i$, where $i \in \mathcal{I} \triangleq \{1, 2, \ldots, I\}$. Each BS$_i$ invokes two-user downlink NOMA to serve its respective cell-center and edge user, each also equipped with a single-antenna. The cell-center users are defined as users that lie within the disk of their associated cell and are denoted as U$_{c_i}$, $\forall i$ and $c_i \in \mathcal{C}^i \triangleq \{1, 2, \ldots, C_i\}$, where $C_i$ is the number of cell-center users in cell $i$. Conversely, the edge users are defined as users that do not lie within any cell and are denoted as U$_{f}$, $\forall f$ and $f \in \mathcal{F} \triangleq \{1, 2, \ldots, F\}$, where $F$ is the number of edge users in the network. Furthermore, let $\mathcal{U} \triangleq \bigcup_{i \in \mathcal{I}} \mathcal{C}^i \cup \mathcal{F}$ be the set of all the users in the network. Without loss of generality and for ease of exposition, we assume $I = 2$, and $C_i = F = 1$, $\forall i$.

\begin{figure}[h!]
  \centering
  \includegraphics[width=0.65\columnwidth]{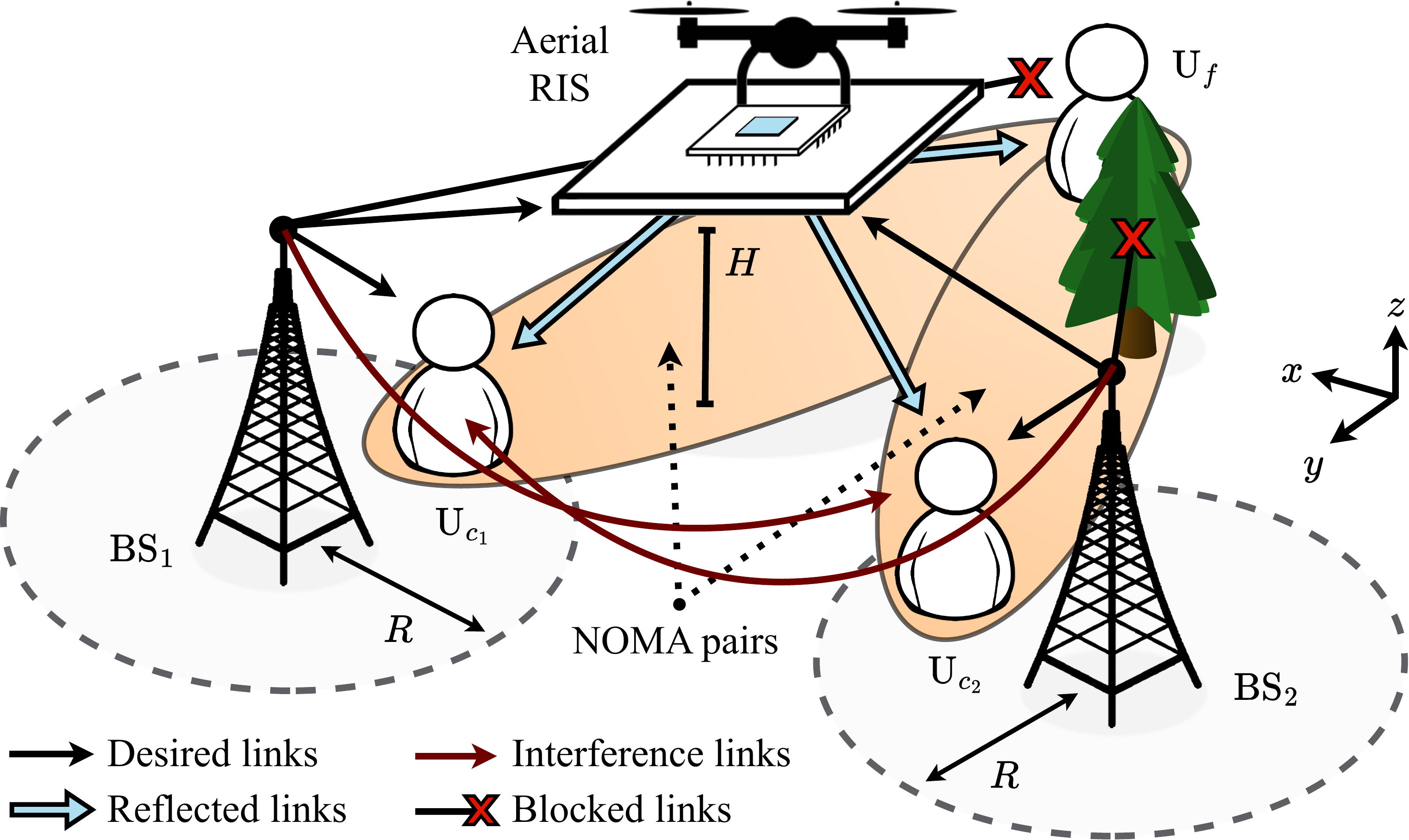}
  \vspace*{0.5em}
  \caption{Aerial RIS-assisted coordinated NOMA cluster.}
  \label{fig:rl_system}
\end{figure}

For coordinated operation, the BSs are assumed to be interconnected via a high-speed backhaul network to a central processing unit (CPU). Moreover, to improve the signal quality for edge users, an ARIS, denoted as $R$, is deployed at a fixed altitude $H$ over area $A$ to create reflection links between the BSs and the users, and is equipped with $K$ passive elements. For tractability, we discretize the entire system operation into time slots of equal length $\tau$, where each time slot is indexed by $t \in \mathcal{T} \triangleq \{1, 2, \ldots, T\}$, such that $T$ is the total flight time of the UAV. Furthermore, we assume the presence of $O$ obstacles in the network, denoted as $\mathcal{O} \triangleq \{1, 2, \ldots, O\}$, where each obstacle O$_o$, $o \in \mathcal{O}$ has its own \textit{forbidden zone} represented as a circular disk of radius $d_{\text{min}}$, centered at the obstacle's location.

Before proceeding with the channel and signal model, we define the positions of the various entities in the network. Specifically, $\forall i \in \mathcal{I}$, $u \in \mathcal{U}$, and $o \in \mathcal{O}$, the positions of BS$_i$, U$_u$, and O$_o$ are represented by $\mathbf{p}_{i}= (x_i, y_i, H_{\text{B}})$, $\mathbf{p}_{u}= (x_u, y_u, 0)$, and $\mathbf{p}_{o}= (x_o, y_o, H_{\text{O}})$, respectively, where $H_{\text{B}}$ and $H_{\text{O}}$ are the heights of the BSs and obstacles, respectively. Moreover, the position of $R$ at time slot $t$ is denoted as $\mathbf{p}_{\text{R}}[t] = (x_{\text{R}}[t], y_{\text{R}}[t], H_{\text{R}})$. In this work, we assume that the users are stationary, and the UAV is capable of adjusting its horizontal position in the $xy$-plane, while maintaining a fixed altitude $H_{\text{R}}$.

\subsection{Channel Model \& RIS Configuration}
In our analysis, both the large-scale path loss and small-scale fading effects are considered. Similar to~\cite{zhao2022ris}, we assume the presence of numerous scatterers in the environment, and thus the direct links between BS$_i$ and U$_u$, denoted as $h_{i,u}$, are modeled as Rayleigh fading channels. Mathematically, the channel $h_{i,u}$ at time slot $t$ is given by
\begin{equation}
  h_{i, u}[t] = \sqrt{\frac{\rho_o}{PL(d_{i, u})}} v_{i, u}[t],
\end{equation}
where $\rho_o$ is the reference path loss at $1$ m, $PL(d_{i, u}) = (d_{i, u})^{-\alpha_{i, u}}$ is the large-scale path loss, such that $\alpha_{i, u}$ is the path loss exponent, and $d_{i, u}= \norm{\mathbf{p}_i - \mathbf{p}_u}$ is the distance between BS$_i$ and U$_u$ and $\norm{\cdot}$ denotes the Euclidean norm. Moreover, $v_{i, u}[t] \in \mathbb{C}^{1\times 1}$ is the small-scale Rayleigh fading coefficient with zero mean and unit variance, and is assumed to be independent and identically distributed (i.i.d) across different time slots and users. In this work, as a special case, we assume that the direct link between BS$_i$ and U$_f$ is blocked due to the presence of obstacles, thus $h_{i, f}[t] = 0$, $\forall i, f$.

Contrary to the direct links, the reflection links between BS$_i$ and $R$ are modeled as Rician fading channels, denoted as $\mathbf{h}_{i, \text{R}}[t]$, due to the presence of a dominant line-of-sight (LoS) component. At time slot $t$, the channel $\mathbf{h}_{i, \text{R}}[t]$ is given by
\begin{equation}
  \mathbf{h}_{i, \text{R}}[t] = \sqrt{\frac{\rho_o}{PL(d_{i, \text{R}}[t])}} \left(\sqrt{\frac{\kappa}{1 + \kappa}} \mathbf{g}^{\text{LoS}}_{i, \text{R}}[t] + \sqrt{\frac{1}{1 + \kappa}} \mathbf{g}^{\text{NLoS}}_{i, \text{R}}[t]\right),
\end{equation}
where $\kappa$ is the Rician factor representing the ratio of the power of the LoS component to the power of the scattered components, and $d_{i, \text{R}}[t] = \norm{\mathbf{p}_i - \mathbf{p}_{\text{R}}[t]}$ is the distance between BS$_i$ and $R$. Moreover, the deterministic LoS represented, i.e., $\mathbf{g}_{i, \text{R}}^{\text{LoS}}[t] \in \mathbb{C}^{K \times 1}$, is given by
\begin{equation*}
  \mathbf{g}^{\text{LoS}}_{i, \text{R}} = \left[1, \ldots, e^{j(k-1)\pi\sin(\omega_i)}, \ldots, e^{j(K-1)\pi\sin(\omega_i)}\right]^T,
\end{equation*}
where $k \in \mathcal{K} \triangleq \{1, 2, \ldots, K\}$ indexes elements of $R$ and $\omega_i$ is the angle of arrival (AoA) whereas $\mathbf{g}^{\text{NLoS}}_{i, \text{R}_i} \in \mathbb{C}^{K\times 1}$ is the NLoS component following Rayleigh fading as previously described. Similarly, the channel between $R$ and U$_u$, denoted as $h_{\text{R}, u}$, can also be modeled as a Rician fading channel.

For the RIS configuration, we assume that the phase shift of the $k$-th element can be set independently of other elements and that both the UAV trajectory and the phase response are controlled by the CPU. Furthermore, the phase shift (PS) matrix at time slot $t$ is expressed as
\begin{equation}
  \mathbf{\Theta}[t] = \text{diag}\left(a_1 e^{j\theta_1[t]}, a_2 e^{j\theta_2[t]}, \ldots, a_k e^{j\theta_K[t]}\right),
\end{equation}
where $a_k \in (0, 1]$ is the amplitude coefficient and $\theta_k[t] \in [-pi, pi)$ is the phase shift of the $k$-th element. In this work, we assume an ideal RIS with perfect phase control and all reflection elements having a unit amplitude, i.e., $a_k = 1, \forall k$. Furthermore, we assume the availability of perfect channel state information (CSI) at the CPU. While this is a challenging assumption in practice, recent advancements in channel estimation techniques for RIS-assisted wireless networks have demonstrated the feasibility of achieving accurate CSI with reasonable overhead~\cite{zheng2022survey, zhou2022channel, wei2021channel}.

\subsection{Signal Model}
In accordance with the NOMA principle, each BS$_i$ serves two users, U$_{c_i}$ and U$_f$, simultaneously, by superimposing their signals. Let $x_{i, c_i}[t]$ and $x_{i, f}[t]$ be the desired signals intended for U$_{c_i}$ and U$_f$ at time slot $t$, then the transmitted signal from BS$_i$ can be expressed as $x_i[t] = \sqrt{(1 - \lambda_i)P_i}x_{i, c_i}[t] + \sqrt{\lambda_i P_i}x_{i, f}[t]$, where $P_i$ is the transmit power of BS$_i$ and $\lambda_i$ is the power allocation factor assigned to U$_f$, such that $\lambda_i \in (0.5, 1)$ to ensure successful decoding at U$_{c_i}$~\cite{elhattab2022ris, obeed2020user}.

The received signal at U$_{f}$ can be expressed as
\begin{equation}
  y_{f}[t] = H_{i, f}[t] x_i[t] + H_{i', f}[t] x_{i'}[t] + n_o[t]
\end{equation}
where $i' \in \mathcal{I} \setminus \{i\}$, $n_o[t] \sim \mathcal{CN}(0, \sigma^2)$ is the additive white Gaussian noise (AWGN), and $H_{i, f}[t] = \mathbf{h}^T_{\text{R}, f}[t] \mathbf{\Theta}[t] \mathbf{h}_{i, \text{R}}[t]$ represents the effective channels between BS$_i$ and U$_f$ through $R$, respectively. To minimize synchronization overhead, we employ non-coherent JT-CoMP, thus, the signal-to-interference-plus-noise ratio (SINR) is given by
\begin{equation}
  \gamma_f[t] = \frac{\lambda_i \abs{H_{i, f}[t]}^2 + \lambda_{i'} \abs{H_{i', f}[t]}^2}{(1 - \lambda_i) \abs{H_{i, f}[t]}^2 + (1 - \lambda_{i'}) \abs{H_{i', f}[t]}^2 + \frac{1}{\rho}},
\end{equation}
where $\rho = P_t/\sigma^2$ is the transmit SNR and $P_t = P_i$, $\forall i$ is the transmit power of each BS.

On the other hand, the received signal at U$_{c_i}$ can be expressed as
\begin{equation}
  y_{c_i}[t] = H_{i, c_i}[t] x_i[t] + h_{i', c_i}[t] x_{i'}[t] + n_o[t],
\end{equation}
where $H_{i, c_i}[t] = h_{i, c_i}[t] + \mathbf{h}^T_{\text{R}, c_i}[t] \mathbf{\Theta}[t] \mathbf{h}_{i, \text{R}}[t]$ represents the effective channels between BS$_i$ and U$_{c_i}$ through $R$, respectively. Also, the term $h_{i', c_i}[t] x_{i'}[t]$ represents the ICI caused by the transmission of BS$_{i'}$ at U$_{c_i}$. Based on the SIC principle, U$_{c_i}$ first decodes $x_{i, f}[t]$ and then cancels it from $y_{c_i}[t]$ to decode $x_{i, c_i}[t]$. The SINR at U$_{c_i}$ for decoding $x_{i, f}[t]$ is given by
\begin{equation}
  \gamma_{c_i \rightarrow f}[t] = \frac{\lambda_i \abs{H_{i, c_i}[t]}^2}{(1 - \lambda_i) \abs{H_{i, c_i}[t]}^2 + \abs{h_{i', c_i}[t]}^2 + \frac{1}{\rho}},
\end{equation}
whereas the SINR at U$_{c_i}$ for decoding $x_{i, c_i}[t]$ is
\begin{equation}
  \gamma_{c_i}[t] = \frac{(1 - \lambda_i) \abs{H_{i, c_i}[t]}^2}{\abs{h_{i', c_i}[t]}^2 + \frac{1}{\rho}}.
\end{equation}
Finally, the achievable sum rate of the network at time slot $t$ can be expressed as
\begin{equation}
  R_{\text{sum}}[t] = \sum_{i \in \mathcal{I}} R_{c_i}[t] + \sum_{f \in \mathcal{F}} R_f[t].
\end{equation}
where $R_{c_i}[t]=\log_2(1 + \gamma_{c_i}[t])$ and $R_f[t]=\log_2(1 + \gamma_f[t])$ are the achievable rates of U$_{c_i}$ and U$_f$, respectively.

\section{Problem Formulation}
In this work, our primary objective is to maximize the sum rate achieved over $T$ time slots. To achieve this goal, we jointly optimize three key control variables: the UAV trajectory denoted as $\mathbf{P} \triangleq \{\mathbf{p}_{\text{R}}[t], \forall t\}$, the RIS phase shifts represented by $\mathbf{\Theta} \triangleq \{\mathbf{\Theta}[t], \forall t\}$, and the power allocation factors denoted as $\mathbf{\Lambda} \triangleq \{\lambda_i, \forall i\}$.

The optimization problem can be mathematically formulated as
\begin{maxi!}|s|
{\mathbf{P}, \mathbf{\Theta}, \mathbf{\Lambda}}{\sum_{t \in \mathcal{T}} R_{\text{sum}}[t]}{\label{eq:problem}}{}
\addConstraint{x_{\text{R}}[t], y_{\text{R}}[t] \in A,~\forall t \in \mathcal{T} \label{eq:loc_csrt}}
\addConstraint{\norm{\mathbf{p}_{\text{R}}[t] - \mathbf{p}_{o}} \geq d_{\text{min}},~\forall o \in \mathcal{O},~t \in \mathcal{T} \label{eq:obs_csrt}}
\addConstraint{\theta_k[t] \in [-\pi, \pi),~\forall k \in \mathcal{K},~t \in \mathcal{T} \label{eq:phase_csrt}}
\addConstraint{R_{c_i}[t] \geq R_{c_i}^{\text{min}},~\forall i \in \mathcal{I},~t \in \mathcal{T} \label{eq:ratec_csrt}}
\addConstraint{R_{f}[t] \geq R_{f}^{\text{min}},~\forall f \in \mathcal{F},~t \in \mathcal{T} \label{eq:ratef_csrt}}
\addConstraint{\lambda_i \in (0.5, 1),~\forall i \in \mathcal{I} \label{eq:power_csrt}},
\end{maxi!}
where constraint~\eqref{eq:loc_csrt} restricts the UAV trajectory to lie within $A$, and constraint~\eqref{eq:obs_csrt} enforces a minimum safety distance between the UAV and any obstacles present, thus guaranteeing the UAV's safety. Constraint~\eqref{eq:phase_csrt} limits the phase shifts applied by the RIS elements. To meet the quality of service (QoS) requirements, constraints~\eqref{eq:ratec_csrt} and~\eqref{eq:ratef_csrt} impose minimum rate thresholds, denoted by $\mathcal{R}_{c_i}^{\text{min}}$ and $\mathcal{R}_{f}^{\text{min}}$, for U$_{c_i}$ and U$_f$, respectively. Lastly, constraint~\eqref{eq:power_csrt} defines the permissible range for power allocation factors, ensuring successful SIC. The optimization problem in~\eqref{eq:problem} is non-convex due to the coupled variables $\{\mathbf{P}, \mathbf{\Theta}, \mathbf{\Lambda}\}$. To address this, we propose a DRL-based solution in the next section.

\section[DRL-based Proposed Solution]{Deep Reinforcement Learning-based Proposed Solution}

\subsection{MDP Formulation}
Before proceeding with the DRL-based solution, we model it as a single-agent Markov Decision Process (MDP) with discrete time steps. This MDP is represented by the tuple $\langle \mathcal{S}, \mathcal{A}, \mathcal{P}, \mathcal{R}, \gamma \rangle$, where $\mathcal{S}$ denotes the set of possible environment states, $\mathcal{A}$ represents the action space, $\mathcal{P}$ defines the state transition probabilities, $\mathcal{R}$ is the reward function guiding the agent's learning, and $\gamma$ is the discount factor that determines the importance of future rewards. At each time slot $t$, the agent observes the current state $s_t$, selects an action $a_t$ based on its policy, transitions to a new state $s_{t+1}$, and receives a reward $\mathcal{R}(s_t, a_t)$. We define $\mathcal{S}$, $\mathcal{A}$, and $\mathcal{R}$ as follows
\begin{enumerate}[wide=\parindent]
  \item \textit{State Space $\mathcal{S}$:} The environment state at time slot $t$ consists of the UAV's current position $\mathbf{p}_{\text{R}}[t]$, the distance from the UAV to the center of obstacles $\mathbf{d}_{\text{R}}[t] = \{\norm{\mathbf{p}_{\text{R}}[t] - \mathbf{p}_{o}}, \forall o \in \mathcal{O}\}$, the power allocation factors $\mathbf{\Lambda}$, and the achievable rates $\mathbf{R}[t] = \{R_{c_i}[t], R_f[t], \forall i, f\}$. Thus, the state space can be expressed as
        \begin{equation}
          s_t = \{\mathbf{p}_{\text{R}}[t], \mathbf{d}_{\text{R}}[t], \mathbf{\Lambda}, \mathbf{R}[t]\} \in \mathbb{R}^{\text{dim}_\mathcal{S}}.
        \end{equation}
        where $\text{dim}_\mathcal{S}=2 + O + I + \sum_{i \in \mathcal{I}} C_i + F$ is the dimension of the state space.

  \item \textit{Action Space $\mathcal{A}$:} The action space of the formulated MDP consists of the UAV's movement in the horizontal $xy$-plane, the phase shifts of the RIS elements, and the power allocation factors. Specifically, the action space at time slot $t$ contains the manuevering actions $\mathbf{a}_{\text{R}}[t] \in \{(-1, 0), (1, 0), (0, -1), (0, 1), (0, 0)\}$, representing left, right, down, up, and hover, respectively, the phase shifts $\mathbf{a}_{\Phi}[t] = \{\phi_k[t], \forall k\}$, and the power allocation factors $\mathbf{a}_{\Lambda} = \{\lambda_i, \forall i\}$. Thus, the action space can be expressed as
        \begin{equation}
          \mathbf{a}_t = \{\mathbf{a}_{\text{R}}[t], \mathbf{a}_{\Phi}[t], \mathbf{a}_{\Lambda}\} \in \mathbb{R}^{\text{dim}_\mathcal{A}}.
        \end{equation}
        where $\text{dim}_\mathcal{A}=2 + K + I$ is the dimension of the action space.

  \item \textit{Reward Function $\mathcal{R}$:} The reward function plays a crucial role in shaping the learning behavior of the RL agent. Our design encourages maximizing the sum rate while ensuring UAV safety and meeting QoS requirements by penalizing constraint violations. The reward function is defined as
        \begin{equation}
          \label{eq:reward}
          \mathcal{R}(s_t, a_t) = R_{\text{sum}}[t] \left(1 - \frac{\sum_{u \in \mathcal{U}} \zeta_u[t]}{|\mathcal{U}|}\right) - \xi_{\text{R}}[t] K_{\text{viol}},
        \end{equation}
        where $K_{\text{viol}}$ is the penalty factor for constraint violation, and $\zeta_u[t]=\mathbb{I}\{R_u[t] \leq R_u^{\text{min}}\}$ is the indicator function for the QoS constraints, i.e., $\zeta_u[t]=1$ if QoS constraints are violated, and $0$ otherwise. Similarly, $\xi_{\text{R}}[t]=\mathbb{I}\{x_{\text{R}}[t], y_{\text{R}}[t] \notin A \land \norm{\mathbf{p}_{\text{R}}[t] - \mathbf{p}_{o}} < d_{\text{min}}, \forall o \in \mathcal{O}\}$ is the indicator function for UAV's safety constraints.
\end{enumerate}

\subsection{MO-PPO Algorithm}
\begin{algorithm}[t]
  \DontPrintSemicolon
  \caption{MO-PPO Algorithm}\label{alg:moppo}
  Initialize the policy parameters $\theta_d$ and $\theta_c$\;
  \For{episode $= 1, 2, \ldots, N$}{
  Receive initial state $s_0$\;
  \For{time step $t = 0, 1, \ldots, T$}{
    Generate discrete action $a_{\text{R}}$ using $\pi_{\theta_d}(a_t|s_t)$\;
    Generate continuous actions $\mathbf{a}_{\Phi}$ and $\mathbf{a}_{\Lambda}$\ using $\pi_{\theta_c}(a_t|s_t)$\;
    Execute actions $\mathbf{a}_t = \{a_{\text{R}}, \mathbf{a}_{\Phi}, \mathbf{a}_{\Lambda}\}$\;
    \If{UAV violates~\eqref{eq:loc_csrt} or~\eqref{eq:obs_csrt}}{
      Set $\xi_{\text{R}}[t] = 1$, cancel the UAV's movement, and update the state $s_{t+1}$\;
    }
    Observe reward $\mathcal{R}$ as~\eqref{eq:reward} and next state $s_{t+1}$\;
    Collect a set of partial trajectories $\mathcal{D}$ with $\hat{T}$ transitions\;
    Compute the advantage estimate $\hat{A}_t$ as~\eqref{eq:adv}\;
  }
  \For{epoch $= 1, 2, \ldots, E$}{
  Sample a mini-batch of transitions $B$ from $\mathcal{D}$\;
  Compute the clipped surrogate objectives $L_d^{\text{CLIP}}(\theta_d)$ and $L_c^{\text{CLIP}}(\theta_c)$ as~\eqref{eq:clip}\;
  Optimize overall objective and update the policy parameters $\theta_d$ and $\theta_c$ using Adam~\cite{kingma2014adam}\;
  }

  Synchronize the sampling policies as\;
  \centerline{$\theta_d^{\text{old}} \leftarrow \theta_d$ and $\theta_c^{\text{old}} \leftarrow \theta_c$}
  Clear the collected trajectories $\mathcal{D}$\;
  }
\end{algorithm}

In this work, the considered action space is a hybrid continous-discrete space, which poses a challenge for traditional RL algorithms. While discretization of continuous actions is a possibility, it can lead to a large action space, significantly increasing computational complexity and potentially hindering performance. To address this challenge, we propose employing a multi-output Proximal Policy Optimization (MO-PPO) algorithm. MO-PPO extends the standard PPO~\cite{schulman2017proximal} framework by employing two parallel actor networks, each responsible for generating the discrete action $\mathbf{a}_{\text{R}}$ and the continuous actions $\mathbf{a}_{\Phi}$ and $\mathbf{a}_{\Lambda}$, respectively. The actor networks share the first few layers, allowing for the extraction of common features and encoding the state information. Furthermore, a single critic network is employed to estimate the value function $V(s_t)$, which is used to compute a variance-reduced advantage function estimate $\hat{A}_t$ for policy optimization. Following the implementation details used in~\cite{mnih2016asynchronous}, the policy is executed for $\hat{T}$ time steps, and $\hat{A}_t$ is computed as
\begin{equation}
  \label{eq:adv}
  \hat{A}_t = \sum_{k=0}^{\hat{T}-1} \gamma^k r_{t+k} + \gamma^{\hat{T}} V(s_{t+\hat{T}}) - V(s_t),
\end{equation}
where $\hat{T}$ is much smaller than the length of the episode $T$.

To generate the stochastic policy $\pi_{\theta_d}(a_t|s_t)$ for the discrete actions, the corresponding actor network outputs $|\mathbf{a}_{\text{R}}|$ logits, which are then passed through a $\mathrm{softmax}$ function to obtain a probability distribution over the available discrete actions. Conversely, the continuous actor network generates the continuous actions $\mathbf{a}_{\Phi}$ and $\mathbf{a}_{\Lambda}$ by sampling from Gaussian distributions parameterized by the mean and standard deviation outputs of the network, as dictated by the stochastic policy $\pi_{\theta_c}(a_t|s_t)$. Both $\pi_{\theta_d}(a_t|s_t)$ and $\pi_{\theta_c}(a_t|s_t)$ are optimized independently using their respective clipped surrogate objective functions. For the discrete actions, the objective function is given by
\begin{equation}
  \label{eq:clip}
  L_d^{\text{CLIP}}(\theta_d) = \hat{\mathbb{E}}_t \left[\min(r_t^d(\theta_d) \hat{A}_t, \aleph(r_t^d,\theta_d, \epsilon) \hat{A}_t\right],
\end{equation}
where $\aleph(r_t^d,\theta_d, \epsilon) = \text{clip}(r_t^d(\theta_d), 1 - \epsilon, 1 + \epsilon)$, $r_t^d(\theta_d) = \pi_{\theta_d}(a_t|s_t)/\pi_{\theta_d}^{\text{old}}(a_t|s_t)$ is the importance sampling ratio, and $\epsilon$ is the clipping parameter. The objective function for the continuous actions can be expressed in a similar manner but is left out for brevity.

It is important to note that while both policies collaborate within the environment, their optimization objectives remain decoupled, i.e., $\pi_{\theta_d}(a_t|s_t)$ and $\pi_{\theta_c}(a_t|s_t)$ are treated as independent distributions during policy optimization, rather than a joint distribution encompassing both action spaces. The MO-PPO algorithm is summarized in Algorithm~\ref{alg:moppo}.

\subsection{Complexity and Convergence Analysis}
The complexity of DRL algorithms is commonly measured in terms of the number of multiplications per iteration, which is a function of the number of parameters in the policy and value networks. For MO-PPO, the overall complexity can be expressed as $\mathcal{O}[\sum_{q=1}^{Q_s} n_q \cdot n_{q-1} + \sum_{q=1}^{Q_d} n_q \cdot n_{q-1} + \sum_{q=1}^{Q_c} n_q \cdot n_{q-1}]$, where $Q_s$, $Q_d$, and $Q_c$ are the number of layers in the shared, discrete, and continuous actor networks, respectively, and $n_q$ and $n_{q-1}$ are the number of neurons in the $q$-th and $(q-1)$-th layers, respectively. In this work, we consider the same number of neurons in each hidden layer, i.e., $n_q = n_{q-1} = n$, $\forall q$, and the number of neurons in the output layer is equal to the dimension of the action space. Thus, the overall complexity of MO-PPO is $\mathcal{O}[n^2(Q_s + Q_d + Q_c)]$.

Similar to other DRL algorithms, the convergence of MO-PPO is mathematically difficult to analyze~\cite{challita2019interference} since neural networks are highly dependent on the choice of hyperparameters. However, the convergence of MO-PPO can be empirically verified by monitoring the agent's performance over multiple episodes and ensuring that the reward function converges to a stable value. Moreover, the convergence of MO-PPO can be accelerated by tuning the learning rate, clipping parameter, and penalty factor for constraint violation.

\LARGE{\textbf{Numerical Results}}
\normalsize

\section{Simulation Setup}
To evaluate the efficacy of the proposed MO-PPO algorithm, we construct a simulated urban environment spanning an area of $150 \times 150$ m$^2$ with $\mathcal{I} = 2$ BSs, $\mathcal{U} = 3$ users, and $\mathcal{O} = 2$ obstacles. The initial position of the UAV is set to $(0, 35, 50)$ m, while BS$_1$ and BS$_2$ are located at $(-35, -35, 25)$ m and $(35, 35, 25)$ m, respectively. All remaining entities are randomly placed within the environment.

Both BSs are assumed to transmit at an identical power level, i.e., $P_1 = P_2 = P_t$. Furthermore, the network operates at a carrier frequency of $f_c = 2.4$ GHz, utilizing a bandwidth of $BW = 10$ MHz and the noise power is set to $\sigma^2 = -174 + 10\log_{10}(BW)$ dBm. To model the signal propagation characteristics, we employ path loss exponents of $\alpha_{i, u} = 3$, $\alpha_{i, \text{R}} = \alpha_{\text{R}, u} = 2.2$, and $\alpha_{i', u} = 3.5$, for direct, reflection, and interference links, respectively. Table~\ref{tab:rl_params} summarizes the remaining simulation parameters.

\begin{table}[h!]
  \centering
  \caption{Simulation Parameters}
  \label{tab:rl_params}
  \resizebox{0.85\columnwidth}{!}{%
    \begin{tabular}{|c|c|c|c|}
      \hline
      \textbf{Parameter}                      & \textbf{Value} & \textbf{Parameter}            & \textbf{Value}   \\
      \hline
      \hline
      Reference path loss $\rho_o$            & $-30$ dBm      & Rician factor $\kappa$        & $3$ dB           \\
      Target data rate $R_{f}^{\text{min}}$   & $0.2$ bps/Hz   & Learning rate                 & $2.75\text{e}-4$ \\
      Target data rate $R_{c_i}^{\text{min}}$ & $0.5$ bps/Hz   & Clipping parameter $\epsilon$ & $0.1$            \\
      Penalty constant $K_{\text{viol}}$      & $7$            & Discount factor $\gamma$      & $0.98$           \\
      Minimum distance $d_{\text{min}}$       & $10$ m         & Number of episodes $N$        & $750$            \\
      Time slots per episode $T$              & $250$          & Number of epochs $E$          & $20$             \\
      Number of neurons                       & $64$           & Batch size $B$                & $128$            \\
      \hline
    \end{tabular}%
  }
\end{table}

\begin{figure}[t!]
  \centering
  \includegraphics[width=0.65\columnwidth]{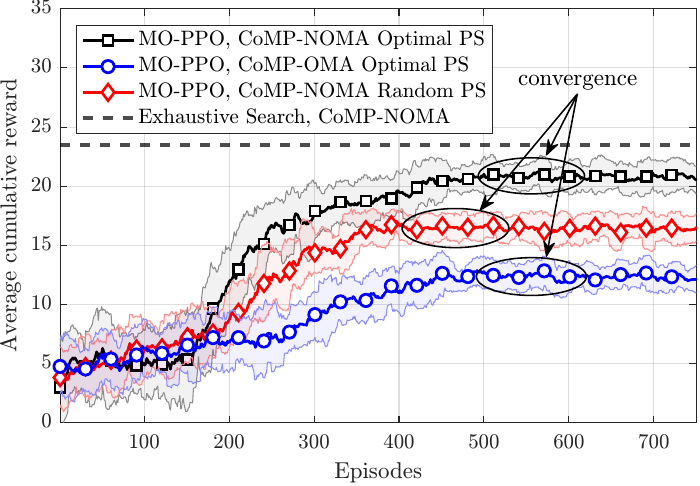}
  \vspace*{0.5em}
  \caption{Average cumulative reward vs. number of training episodes with $P_t = 20$ dBm and $K = 120$ elements.} \label{fig:rl_reward}
\end{figure}

\begin{figure}[h!]
  \centering
  \includegraphics[width=0.65\columnwidth]{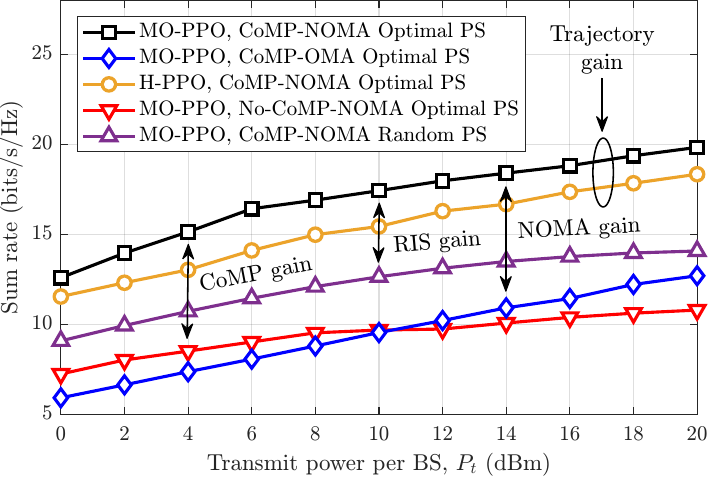}
  \vspace*{0.5em}
  \caption{Sum rate vs. transmit power for different algorithms and configurations with $K = 120$ elements.}
  \label{fig:rl_sum}
\end{figure}

\section{Learning and Convergence}
Figure~\ref{fig:rl_reward} illustrates the average cumulative reward achieved by the MO-PPO algorithm with different network configurations. As shown, the algorithm consistently converges to a stable reward value after approximately 500 episodes, indicating the successful acquisition of an effective policy. Notably, MO-PPO with random PS exhibits a faster convergence rate compared to its counterpart with optimal PS. This observation can be attributed to the increased complexity associated with optimizing the PS within the action space, leading to a slower convergence process. Moreover, the CoMP-NOMA configuration achieves a superior average cumulative reward compared to the CoMP-OMA configuration, underscoring the benefits of NOMA in enhancing overall network performance. A comparison with the optimal solution obtained through exhaustive search reveals that the proposed MO-PPO algorithm achieves near-optimal performance, effectively demonstrating its capability to solve the formulated problem.

\begin{figure}[t!]
  \centering
  \includegraphics[width=0.65\columnwidth]{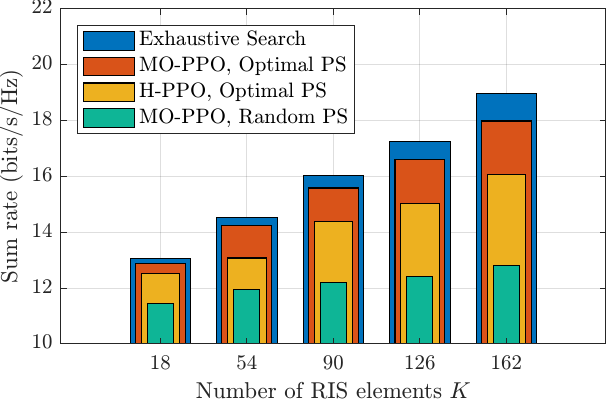}
  \vspace*{0.5em}
  \caption{Impact of the number of RIS elements on the achievable sum rate with $P_t = 10$ dBm.}
  \label{fig:rl_bar}
\end{figure}

\begin{figure}[t!]
  \centering
  \includegraphics[width=0.65\columnwidth]{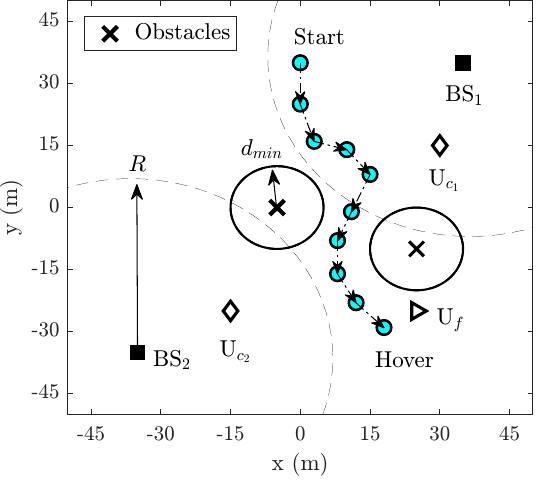}
  \vspace*{0.5em}
  \caption{Top view of the UAV trajectory obtained by the MO-PPO algorithm sampled every 25 time slots and averaged over 10 evaluation episodes}
  \label{fig:rl_traj}
\end{figure}

\section{Algorithmic Performance on Sum Rate}
Next, we investigate the sum rate achieved by the network as a function of the transmit power $P_t$ as shown in Figure~\ref{fig:rl_sum}. As expected, the sum rate exhibits an upward trend with increasing transmit power, emphasizing the crucial role of power control in optimizing network performance. The results clearly showcase the advantages of incorporating CoMP, RIS, and NOMA techniques to enhance spectral efficiency across all power levels. Additionally, we benchmark the hover PPO (H-PPO) algorithm, which maintains a fixed UAV position at the center of the user clusters, against the proposed MO-PPO algorithm, and highlight the improvement in network performance achieved due to the trajectory optimization.

\section{MO-PPO and RIS Elements}
Our investigation extends to analyzing the impact of the number of RIS elements on the network's achievable sum rate, providing further insights into the performance of the MO-PPO algorithm Figure~\ref{fig:rl_bar} illustrates the positive correlation between the sum rate and the number of RIS elements, emphasizing the advantages of utilizing a larger RIS to enhance network performance. However, we observe a subtle, yet noteworthy trend: the difference in sum rate between the exhaustive search baseline and the MO-PPO algorithm, while remaining small, increases with the number of RIS elements. This observation highlights the importance of carefully considering the trade-off between performance and complexity when determining the optimal number of operational RIS elements, a challenge we aim to address in future work.

\section{UAV Trajectory}
Finally, we visualize the UAV trajectory generated by the MO-PPO algorithm in Figure~\ref{fig:rl_traj}. It is observed that the UAV adopts a \textit{cautious approach}, navigating around obstacles while minimizing its distance to U$_f$. Such a trend is commonly observed in DRL algorithms that operate on the principle of exploration-exploitation. The agent learns to strike a balance between exploring the environment and exploiting its current knowledge to maximize the cumulative reward. The generated trajectory further highlights the agent's ability to adapt to the dynamic environment and optimize network performance by effectively leveraging both RIS and NOMA techniques.

\chapter{Conclusion}
Our research explores the potential of utilizing reconfigurable intelligent surfaces (RIS), specifically STAR-RIS and aerial RIS (ARIS), within CoMP-NOMA networks to enhance performance and energy efficiency. Across four papers, we investigate various aspects of network design, optimization, and performance analysis.

\section{Summary}
This thesis has investigated the potential of integrating STAR-RIS, CoMP, and NOMA technologies to address the challenges of future wireless communication systems. The research explored various aspects of these technologies, including performance analysis, optimization strategies, and practical implementation considerations.

\subsection{Performance Analysis}
A novel framework was introduced for integrating STAR-RIS into CoMP-NOMA multi-cell networks, highlighting the unique ability of STAR-RIS to serve multiple cells simultaneously. The impact of STAR-RIS on key performance metrics such as achievable rates and outage probability was analyzed, demonstrating its superiority over conventional systems in improving coverage and performance, particularly for cell-edge users. Additionally, the thesis developed a tractable analytical framework to evaluate the performance of STAR-RIS-assisted CoMP-NOMA networks under Nakagami-m fading channels, deriving closed-form expressions for ergodic rate and outage probability for each user.

\subsection{Optimization Strategies}
The research investigated various optimization strategies for maximizing network performance. This included exploring the impact of CoMP cooperation, STAR-RIS element allocation to base stations based on channel conditions, and amplitude adjustments for transmission and reflection on achieving optimal network sum-rate. Furthermore, the thesis explored energy-efficient design approaches for CoMP-NOMA networks incorporating RIS, proposing different RIS configurations and optimization algorithms for maximizing energy efficiency while maintaining desired performance levels.

\subsection{Deep Reinforcement Learning}
The application of Deep Reinforcement Learning (DRL) techniques was explored for joint optimization of UAV trajectory, RIS phase shifts, and NOMA power control in aerial RIS-assisted CoMP-NOMA networks. The proposed DRL framework demonstrated effectiveness in maximizing network sum rate while satisfying user QoS constraints, highlighting the potential of combining CoMP-NOMA and RIS in UAV-assisted networks for improved spectral efficiency and coverage.

\section{Limitations and Future Work}
While this thesis has made significant contributions to the understanding of RIS-assisted CoMP-NOMA networks, there are limitations that pave the way for future research:
\begin{itemize}
    \item \textbf{Limited Scope of Analysis:} The analysis primarily focused on achievable rates and outage probability, neglecting other crucial aspects such as energy efficiency, spectral efficiency, and user fairness.
    \item \textbf{Simplified Assumptions:} The research often relied on static scenarios and simplified channel models, which may not fully capture the dynamic nature and complexities of real-world networks.
    \item \textbf{Specific Technology Implementations:} The analysis focused on specific RIS configurations and DRL algorithms, potentially overlooking other promising options for optimizing performance.
\end{itemize}

Future research directions include:
\begin{itemize}
    \item \textbf{Investigating more realistic and dynamic scenarios:} This includes considering user mobility, complex channel models, and dynamic interference environments.
    \item \textbf{Expanding performance metrics:} Future work should analyze a wider range of metrics, including energy efficiency, spectral efficiency, user fairness, and delay, to provide a more comprehensive evaluation of network performance.
    \item \textbf{Exploring distributed optimization algorithms:} Developing distributed optimization algorithms for RIS control is crucial for scalability and robustness in large-scale networks.
    \item \textbf{Addressing practical implementation aspects:} Further research is required to address challenges related to hardware limitations, channel estimation, and standardization of RIS technology.
    \item \textbf{Incorporating machine learning techniques:} Utilizing machine learning for tasks such as channel prediction, resource allocation, and interference management can further enhance the performance and adaptability of RIS-assisted CoMP-NOMA networks.
\end{itemize}

By addressing these limitations and pursuing these future research directions, we can continue to advance the development of RIS-assisted CoMP-NOMA networks and unlock their full potential for shaping the future of wireless communication.


\bibliographystyle{ieeetr}

\addcontentsline{toc}{chapter}{References}

\clearpage

\end{document}